\newcommand{\cmark}{\ding{51}}%
\newcommand{\xmark}{\ding{55}}%
\theoremstyle{plain}
\newtheorem{theorem}{Theorem}[section]
\newtheorem{definition}[theorem]{Definition}
\newtheorem{lemma}[theorem]{Lemma}
\newtheorem{corollary}[theorem]{Corollary}
\newtheorem{proposition}[theorem]{Proposition}
\theoremstyle{remark}
\newtheorem*{remark}{Remark}
\newtheorem{openquestion}[theorem]{Open question}
\newtheorem{observation}[theorem]{Observation}
\newtheorem{notation}[theorem]{Notation}
\newenvironment{sloppypar*}
 {\sloppy\ignorespaces}
 {\par}
\newcommand{\qedClaim}{\hfill \ensuremath{\Box}}
\newcommand{\byz}{\textrm{Byzantine parties}\xspace}
\newcommand{\byzadj}{\textrm{Byzantine}\xspace}
\newcommand{\safe}{\mathit{SafeArea}\xspace}
\newcommand{\mda}{\mathit{MDA}\xspace}
\newcommand{\barSet}{\ensuremath{\mathrm{S_{Cent}}\xspace}}
\newcommand{\trueBar}{\ensuremath{\mathrm{Cent}^{\star}}\xspace}
\newcommand{\encBall}{\ensuremath{\mathrm{Ball(\barSet)}}\xspace}
\newcommand{\encBalli}{\ensuremath{\mathrm{Ball_i(\barSet)}}\xspace}
\newcommand\ballOf[1]{\ensuremath{\mathrm{Ball(#1)}}\xspace}
\newcommand{\radiusEncBall}{\ensuremath{\mathrm{Rad\bigl(\encBall\bigr)}}\xspace}
\newcommand{\longestEdge}{\ensuremath{\mathcal{LE}}\xspace}
\newcommand{\convexHull}{\ensuremath{\mathrm{Conv}}\xspace}
\newcommand{\distance}{\ensuremath{\mathrm{dist}}\xspace}
\newcommand{\tb}{\ensuremath{\textrm{TB}}\xspace}
\newcommand{\tbi}{\ensuremath{\textrm{TB}_i}\xspace}
\newcommand{\bb}{\ensuremath{\textrm{CB}}\xspace}
\newcommand{\bbi}{\ensuremath{\textrm{CB}_i}\xspace}
\newcommand{\pbb}{\ensuremath{\widetilde{\textrm{CB}}}\xspace}
\newcommand{\pbbi}{\ensuremath{\widetilde{\textrm{CB}}_i}\xspace}
\newcommand{\centroidFunction}{\ensuremath{\mathrm{Centroid}}\xspace}
\newcommand{\midpointFunction}{\ensuremath{\mathrm{Midpoint}}\xspace}
\newcommand{\ExistingSafeArea}{\ensuremath{\mathrm{SafeAlg}}\xspace}
\DeclareMathOperator*{\argmin}{arg\,min}
\newcommand\restr[2]{{
\left.\kern-\nulldelimiterspace 
#1 
\vphantom{\big|} 
\right|_{#2} 
}}
\newcommand{\mytitle}[1]{
    \begingroup
    \fontsize{16pt}{18pt}\fontseries{bx}\selectfont \centering #1 \par  
    \endgroup
}
\newcommand{\myauthors}[1]{
    \begingroup
    \fontsize{12pt}{14pt}\fontseries{bx}\selectfont \centering #1 \par  
    \endgroup
}
\begin{document}

\mytitle{Centroid Approximation with Multidimensional Approximate Agreement Protocols}

\bigskip
\bigskip

\myauthors{M\'elanie Cambus\footnote{\texttt{melanie.cambus@aalto.fi}, Aalto University, Finland} and Darya Melnyk\footnote{\texttt{melnyk@tu-berlin.de}, TU Berlin, Germany}}

\bigskip
\bigskip

\begin{abstract}

In this paper, we present distributed fault-tolerant algorithms that approximate the centroid (i.e., the average) of a set of $n$ data points in $\mathbb{R}^d$. Our work falls into the broader area of multidimensional Byzantine approximate agreement. 
We show that state-of-the-art algorithms, such as agreeing inside the convex hull of all non-faulty vectors, or minimum-diameter averaging (MDA), in the worst case either prevent us from agreeing on a vector close to the centroid (in terms of approximation quality), or allow Byzantine parties to influence the output considerably (in terms of validity).

To design better approximation algorithms, we propose a novel concept of defining an approximation ratio of the centroid by including the vectors of the \byzadj adversaries in the definition. We analyze the algorithms in the synchronous and asynchronous models of communication with public communication channels.
We show that the standard agreement algorithms based on agreeing inside the convex hull of all non-faulty vectors do not allow us to compute a better approximation than $2d$ of the centroid. 
On the other hand, MDA can be used to achieve constant approximation at the cost of only satisfying strong validity.
As a trade-off, we develop an approach that reaches a $2\sqrt{d}$-approximation of the centroid, while satisfying box validity. Our approach provides optimal resilience, allowing up to $t<n/3$ faulty nodes. 

\end{abstract}

\section{Introduction}

Multidimensional Byzantine approximate agreement (MBAA) is an important subroutine that allows nodes in a network to approximately agree on vectors that are close to a convergence vector, in the presence of any type of node failures, from crashing to organized malicious behavior.  
The convergence vector should typically satisfy a so-called validity condition. This condition makes sure that the convergence vector is not just a trivial predetermined vector, but depends on the inputs of the non-faulty nodes.
MBAA is particularly interesting for practical applications where the fast running time of an agreement algorithm is more important than agreement on the same vector. 
In practical applications, there is often a desired convergence vector that the system should agree on, for example, the centroid, the weighted average, or the geometric median.
This work considers the centroid as the desired convergence vector, as it is a well-known and broadly used representative vector in many applications such as vector quantization~\cite{vectorQuantization}, collaborative learning~\cite{NEURIPS2021_d2cd33e9}, data anonymization~\cite{fiorina_plott_1978}, large-scale elections~\cite{7474137}, community detection algorithms~\cite{li2022centroid,lloyd1982least} and distributed voting~\cite{mendes2015multidimensional}.

Different approaches have been considered to evaluate the quality of the output of an MBAA algorithm:
The traditional approach is to classify the output using validity conditions.   
However, even the most restrictive validity condition introduced for MBAA---the convex validity condition~\cite{mendes2015multidimensional}---only guarantees that the convergence vector is in the convex hull of all input vectors of the non-faulty nodes. 
There are two reasons why this polytope does not represent the centroid well. If the centroid is inside the polytope, the polytope can be large such that the convergence vector is far from the centroid of the non-faulty input vectors. If the polytope is small, the polytope itself may lie far away from the centroid of the non-faulty input vectors.
Recently, the absolute distance to the centroid compared to the maximum distance between any two non-faulty input vectors has been proposed to evaluate the quality of MBAA algorithms~\cite{NEURIPS2021_d2cd33e9}. This measure only captures the worst-case scenario where Byzantine inputs are the only outliers in the data. If one single non-faulty outlier is present in the data, the measure cannot guarantee that the output is closer to the centroid than the distance to this outlier. The same holds when no Byzantine parties are present in the system.  

In this paper, we provide a novel definition for centroid approximation (see~\Cref{subsec: approx-of-centroid}). The approximation is defined relative to the input distribution: if the input distribution is unfavorable (e.g., there is one outlier among the non-faulty nodes), even for an optimal algorithm, it is impossible to differentiate between a non-faulty outlier and a Byzantine node. On the other hand, if the input distribution is favorable (e.g., the Byzantine party shares a similar vector to the non-faulty inputs), an optimal algorithm should be capable of agreeing on a vector close to the centroid. Our approximation definition compares an approximate agreement algorithm to what an optimal algorithm that cannot detect Byzantine behavior can achieve for a given input distribution. 
To make sure that an adversary cannot change the convergence vector arbitrarily, we make use of validity conditions. We thus focus on finding a trade-off between the quality of the approximation of the non-faulty nodes and the validity condition. 


We focus on the following validity conditions in this work (fully defined in \Cref{subsec:multidim-approx-agreement} and \Cref{subsec: boxes-definitions}), presented from loose to more restrictive conditions:
The \textbf{weak validity} condition~\cite{civit2022byzantine,civit2021polygraph,yin2019hotstuff}, where if there are no failures and all input vectors are identical, then the output vector of each node is required to be the input vector; The \textbf{strong validity} condition~\cite{63511,abraham_et_al:LIPIcs.DISC.2017.41,civit2022byzantine, 8548057}, where if all non-faulty nodes have the same input vector then this input vector must be the output vector of all non-faulty nodes; The \textbf{box validity} condition~\cite{intervalValidity}, where the output vector of each non-faulty node must be in the smallest coordinate-parallel
box containing all input vectors of non-faulty nodes; The \textbf{convex validity} condition~\cite{fgger_et_al:LIPIcs:2018:9816,7867756,abbas2022centerpoint,wang2019computingTverbergPoint,attiya_et_al:LIPIcs.OPODIS.2022.6,mendes2015multidimensional}, where the output vector of each non-faulty node must be in the convex hull of all input vectors of non-faulty nodes.

Our contributions are as follows: 
\begin{itemize}
    \item We prove that any algorithm guaranteeing the convex validity condition has an approximation ratio of at least $2d$ (\Cref{sec:bad_approx_safe_area}), 
    \item we prove that there exists an algorithm guaranteeing the weak validity condition that has a tight approximation of the centroid (\Cref{sec:synch_one_approximation}),
    \item we prove that there exists an algorithm guaranteeing the strong validity condition that has a tight approximation of the centroid\footnote{we show that the approximation is tight in the synchronous setting, but there remains a $0.8$ gap in the asynchronous setting} (\Cref{sec:synch_const_approximation}),
    \item and finally, as a trade-off  between the above solutions, we give an algorithm guaranteeing box validity condition that has an approximation ratio of at most $2\sqrt{d}$, giving at least a quadratic improvement compared to the $\safe$ algorithms from~\cite{mendes2015multidimensional} in terms of approximation of the centroid while sacrificing very little on the validity condition (\Cref{sec:synch_sqrt_d_approximation}).
\end{itemize}

Note that the resilience (i.e., \ maximum amount of tolerated \byz) of our algorithm is optimal in both synchronous and asynchronous settings, contrary to the other studied algorithms. Finally, our algorithm has only polynomial local computation steps, whereas the algorithms described and analyzed in \Cref{sec:bad_approx_safe_area}, \Cref{sec:synch_one_approximation} and \Cref{sec:synch_const_approximation} require exponential local computations. Our main result is shown in \Cref{thm:syncBoxAlg}. 
We present a summary of the results of this paper in Table~\ref{tab:overview}.

\begin{remark}
Observe that the proven lower bounds for different validity conditions would also hold for exact \byzadj agreement, which is beyond the scope of this paper.    
\end{remark}

\newcommand{\CC}[1]{\cellcolor{gray!#1}}
\begin{table}[h!]
\centering
\begin{tabular}{ c c c c c c c c c } 
  & & \multicolumn{4}{ c }{Validity} & Resilience & \multicolumn{2}{ c }{Approximation} \\   
  \cline{3-6}
  \cline{8-9}
  &  & \CC{1}\rotatebox[origin=c]{90}{\parbox[c]{1.2cm}{\centering weak}} & \CC{5}\rotatebox[origin=c]{90}{\parbox[c]{1.2cm}{\centering strong}} & \CC{1}\rotatebox[origin=c]{90}{\parbox[c]{1.2cm}{\centering box}} & \CC{5}\rotatebox[origin=c]{90}{\parbox[c]{1.2cm}{\centering convex}} &  & \CC{5}sync. & \CC{1}async. \\ 
 \cline{2-9}
 \multicolumn{1}{l}{\multirow{12}{*}{\rotatebox[origin=c]{90}{Method}}} 
 & \CC{15}\Gape[0pt][2pt]{\makecell{ $\safe$ Algorithms\\ \cite{fgger_et_al:LIPIcs:2018:9816,mendes2015multidimensional}\ \scriptsize{(\Cref{thm:unbounded_sync_safe_area})}}} & \CC{5}\cmark & \CC{15}\cmark & \CC{5}\cmark & \CC{15}\cmark & \CC{5}$t<\min\left(\frac{n}{3}, \frac{n}{d+1}\right)$ & \CC{15}$\infty$ & \CC{5}$\infty$ \\ 
 & \CC{5}\Gape[0pt][2pt]{\makecell{$\safe$ Approaches\tablefootnote{We refer to agreement inside the $\safe$ (defined in \Cref{subsec:multidim-approx-agreement}) as the $\safe$ approaches. } \\ \scriptsize{(\Cref{thm:sync_safe_area})}}} & \CC{1}\cmark & \CC{5}\cmark & \CC{1}\cmark & \CC{5}\cmark & \CC{1}$t<\min\left(\frac{n}{3}, \frac{n}{d+1}\right)$ & \CC{5}$\Omega(d)$ & \CC{1}$\Omega(d)$ \\ 
 & \CC{15}\Gape[0pt][2pt]{\makecell{Centroid \& $\safe$ \\ \scriptsize{(\Cref{thm:synch-one-approx,thm:asynch_two_approx})}}} & \CC{5}\cmark & \CC{15}\xmark & \CC{5}\xmark & \CC{15}\xmark & \CC{5}$t<\min\left(\frac{n}{3}, \frac{n}{d+1}\right)$ & \CC{15}$1$ & \CC{5}$2$ \\
 & \CC{5}\Gape[0pt][2pt]{\makecell{$\mda$~\cite{NEURIPS2021_d2cd33e9} \\ 
 \scriptsize{(\Cref{thm:mda_correctness,thm:asynch-mda}})}} & \CC{1}\cmark & \CC{5}\cmark & \CC{1}\xmark & \CC{5}\xmark & \CC{1}$t<\frac{n}{4}$ and $t<\frac{n}{7}$& \CC{5}$3.8$ & \CC{1}10.4 \\ 
 & \CC{15}\Gape[0pt][2pt]{\makecell{$\mda$ \& $\safe$ \\ \scriptsize{(\Cref{sec:synch_const_approximation} and \ref{obs:asynch_mda_and_safe})}}} & \CC{5}\cmark & \CC{15}\cmark & \CC{5}\xmark & \CC{15}\xmark & \CC{5}$t<\min\left(\frac{n}{7}, \frac{n}{d+1}\right)$& \CC{15}$2$ & \CC{5}$4.8$\\ 
 & \CC{5}\Gape[0pt][2pt]{\makecell{Box Algorithm \\ \scriptsize{(\Cref{thm:syncBoxAlg,thm:asynch_BoxAlgo})}}} & \CC{1}\cmark & \CC{5}\cmark & \CC{1}\cmark & \CC{5}\xmark & \CC{1}$t<\frac{n}{3}$ & \CC{5}$O\bigl(\sqrt{d}\bigr)$ & \CC{1}$O\bigl(\sqrt{d}\bigr)$\\ 
 & \CC{15}\Gape[0pt][2pt]{\makecell{$RB\!-\!TM$~\cite{NEURIPS2021_d2cd33e9} \\ \scriptsize{(\Cref{cor:synch-RB-TM,cor:asynch-RB-TM})}}} & \CC{5}\cmark & \CC{15}\cmark & \CC{5}\cmark & \CC{15}\xmark &  \CC{5}$t<\frac{n}{3}$ & \CC{15}$O\bigl(\sqrt{d}\bigr)$ & \CC{5}$O\bigl(\sqrt{d}\bigr)$ \\
\end{tabular}
\caption{Overview of the results of this paper. 
Please note that methods based on $\mda$ and $\safe$ rely on exponential local computation time, while $RB\!-\!TM$ and the Box Algorithm can be computed in polynomial time. 
}
\label{tab:overview}
\end{table}

\paragraph*{Technical overview.} 
The first technical challenge is to fix a meaningful definition of approximation ratio in the presence of \byz.  
A traditional way of defining the approximation ratio of an algorithm would be to compute the ratio between the algorithm's result and the result of an optimal algorithm. Note that with such an absolute measure, a \byzadj party can change the centroid arbitrarily. 
 The best possible upper bound on the absolute distance is the maximum distance between any two input vectors (see~\cite{NEURIPS2021_d2cd33e9}). 
Instead, we use the fact that even an optimal algorithm cannot always identify \byz to our advantage: the true centroid can be any centroid computed from at least $n-t$ received vectors (as up to $t$ nodes can be \byzadj). 
An optimal algorithm that cannot identify \byz has to minimize the maximum distance to all of those possible centroids. This is a particular case of the Minimax problem, whose solution is to find the center of the smallest enclosing ball, i.e.\ the center of the smallest ball that contains all possible centroids. 
Our definition of approximation directly follows from this and is given in \Cref{subsec: approx-of-centroid}.

Another technical challenge is to get a significant improvement in the centroid approximation while satisfying a sound validity condition. Since we show in \Cref{sec:bad_approx_safe_area} that satisfying the convex validity condition makes the approximation at least $2d$, we aim for the next best validity condition: the box validity condition. 
We define the smallest box containing all non-faulty input vectors (in which the output vectors of non-faulty nodes need to be), but also the smallest box containing all possible centroids of $n-t$ vectors (\Cref{subsec: boxes-definitions} and \Cref{sec:synch_sqrt_d_approximation}). We call those boxes the trusted box and the centroid box respectively, and show that their intersection is non-empty. Agreeing in the intersection of those boxes guarantees the approximation ratio to be at most $2\sqrt{d}$, and also guarantees the box validity condition.
However, those boxes are not computable locally due to the presence of \byz: nodes do not know which vectors come from \byz. We consider that \byz can send different vectors to different non-faulty nodes at each round, and can also not send any vector (see \Cref{sec: model-and-def}). In the asynchronous setting, the additional challenge is that \byz can also block messages from non-faulty nodes, making it possible for each non-faulty node at each round to receive up to $t$ \byzadj vectors even though it only waits for $n-t$ vectors.
To solve these issues, we define for each node a local trusted box and a local centroid box, which are guaranteed to be contained in their global counterpart. We also show these local boxes always intersect, making it possible for our algorithm to converge (see \Cref{def:multidim_agreement_properties}). The details of this are given in the proof of \Cref{thm:syncBoxAlg}.

\section{Related Work}

Approximate \byzadj agreement has been introduced for the one-dimensional case as a way to agree on arbitrary real values inside a bounded range~\cite{ApproximateAgreement}. 
Approximate agreement algorithms can speed up standard \byzadj agreement algorithms and are thus interesting for practical applications. 
Observe that any algorithm for solving synchronous \byzadj agreement exactly requires at least $t+1$ communication rounds~\cite{FischerLynchMinRounds}. In the asynchronous case, it is impossible to solve \byzadj agreement deterministically already if one faulty (e.g., \byzadj) node is present~\cite{FLPimpossibility}. In contrast, a synchronous or asynchronous approximate agreement algorithm converges to a solution in $O(\log(\delta(V)/\varepsilon))$ rounds, such that the values of the non-faulty nodes are inside an interval of size $\varepsilon$, where $\delta(V)$ denotes the maximum distance between any two input values. 
The first approximate \byzadj agreement algorithms~\cite{ApproximateAgreement,ApproximateAgreement2} did not satisfy the optimal resilience of $t<n/3$. This bound was first achieved by~\cite{DolevFIFObroadcast} in the synchronous, as well as asynchronous setting by using reliable broadcast~\cite{BrachaRB, TouegRB}. 

The multidimensional version of approximate agreement was first introduced by Mendes and Herlihy~\cite{VectorConsensusAsynch} and Vaidya and Garg~\cite{VectorConsensus}. These results were later combined in~\cite{mendes2015multidimensional}. The focus of this line of work is to establish agreement on a vector that is relatively ``safe''. In particular, the vector should lie inside the convex hull of all non-faulty vectors. This can be achieved through the Tverberg partition~\cite{https://doi.org/10.1112/jlms/s1-41.1.123}. The authors provide synchronous and asynchronous algorithms to solve exact and approximate agreement in this setting. This approach can only tolerate up to $t<\min\{n/3,n/(d+1)\}$ \byzadj nodes.
This algorithm idea has recently received much attention in different distributed models~\cite{fgger_et_al:LIPIcs:2018:9816,7867756,abbas2022centerpoint,wang2019computingTverbergPoint,attiya_et_al:LIPIcs.OPODIS.2022.6}. 
Also, other variants of multidimensional approximate agreement have been proposed in the literature, such as $k$-relaxed vector consensus~\cite{xiang_et_al:LIPIcs:2017:7095} and agreement in the validated setting~\cite{dotan2022validated}. 

Our work deviates from the previous work on approximate multidimensional agreement in the sense that we relax the strong validity property of agreeing inside the convex hull of all non-faulty vectors. To this end, we use the box validity condition that to date has been regarded as weak or impractical~\cite{abbas2022centerpoint,mendes2015multidimensional}. A similar goal has been pursued by El-Mhamdi, Farhadkhani, Guerraoui, Guirguis, Hoang, and Rouault~\cite{NEURIPS2021_d2cd33e9}. The authors focus on collaborative learning under \byzadj adversaries and reduce the corresponding problem to averaging agreement. In averaging agreement, the task is to agree on a vector close to the centroid. However, the computed solution is bounded using the distance between the two farthest vectors from non-faulty nodes, and no focus is put on the validity of the solution. 

Our work studies validity conditions for multidimensional Byzantine agreement protocols, with the goal of finding a trade-off between the quality of the solution and the validity requirement. One of the first studies of the quality of \byzadj agreement protocols was presented by Stolz and Wattenhofer~\cite{MedianValidity}. The authors consider the one-dimensional multivalued \byzadj agreement and provide a two-approximation of the median of the non-faulty nodes. Melnyk and Wattenhofer~\cite{intervalValidity} improve the two-approximation result for the median computation to output an optimal approximation. The same paper also introduces the box validity property for multidimensional synchronous agreement used in this work. 
The first paper discussing validity conditions for multidimensional \byzadj agreement focused on the application of \byzadj agreement in voting protocols~\cite{ByzantinePreferentialVoting}. There, a validity condition was introduced that satisfies unanimity among non-faulty voters. This validity condition is also satisfied by the convex validity condition in multidimensional approximate agreement~\cite{mendes2015multidimensional}. Allouah, Guerraoui, Hoang, and Villemaud~\cite{allouah2022robust} extend the unanimity condition to sparse unanimity, but they only focus on centralized \byzadj-resilient voting protocols. Civit, Gilbert, Guerraoui, Komatovic, and Vidigueira~\cite{civit2023validity} initiate a first study on the impact of validity conditions on the solvability of Byzantine consensus in the partially synchronous communication model.

\section{Model and Definitions}\label{sec: model-and-def}

We consider a networked system of $n$ nodes, each holding an input vector $v\in \mathbb{R}^d$. 
Note that, in order for nodes in the network to exchange vectors, we assume that a coordinate system is fixed, and none of the studied or presented algorithms proceed to changing this coordinate system. 
We assume the standard Euclidean structure on $\mathbb{R}^d$, and use the prevailing $\Vert\cdot \Vert_2$ norm.
For two vectors $x=(x_1, \dots, x_d)$ and $y=(y_1,\dots,y_d)$, we measure the distance in terms of the Euclidean distance: 
$\distance(x, y) = \sqrt{\sum_{i=1}^d(x_i-y_i)^2} = \Vert x-y\Vert_2. $

The nodes of the network can communicate with each other in a fully connected peer-to-peer network via public channels. We assume that up to $t<n/3$ nodes can show \byzadj behavior, i.e.\ such nodes can arbitrarily deviate from the protocol and are assumed to be controlled by a single adversary. Non-faulty nodes we call correct.
Similar to~\cite{mendes2015multidimensional}, we require that the communication between the nodes is reliable. 
Reliable broadcast has been introduced by Bracha~\cite{BrachaRB} and Srikanth and Toueg~\cite{TouegRB}, and it is discussed in more detail in~\cite{attiya2004distributed, cachin2011introduction}. It satisfies the following two properties while tolerating up to $t<n/3$ \byzadj nodes:
\begin{itemize}
    \item If a correct node broadcasts a message reliably, all correct nodes will accept this message.
    \item If a \byzadj node broadcasts a message reliably, all correct nodes will either accept the same message or not accept the \byzadj message at all.
\end{itemize}
The second condition of reliable broadcast is only satisfied eventually. Therefore, when using the reliable broadcast as a subroutine in agreement protocols, we can only guarantee the following condition:
\begin{itemize}
    \item If two correct nodes accept a message from a \byzadj party, the accepted message must be the same. 
\end{itemize}
The latter condition is also referred to as the consistency condition in the literature~\cite{Cachin:2014:IRS:2755417}, and the corresponding broadcast routine as consistent broadcast. 
Observe that the reliable broadcast routine from \cite{BrachaRB} applied in our protocols satisfies all the above conditions.

We consider two types of communication: synchronous and asynchronous. In synchronous communication, the nodes communicate in discrete rounds. In each round, a node can send a message, receive messages, and perform some local computation. A message sent by a correct node in round $i$ arrives at its destination in the same round. In asynchronous communication, the delivery time of a message is unbounded, but it is guaranteed that a message sent by a correct node arrives at its destination eventually. Since messages can be arbitrarily delayed, it is not possible to differentiate between \byz who did not send a message, and correct nodes whose messages were delayed until other nodes terminated. Observe that it is possible to simulate rounds in this model by letting the nodes attach a local round number to their messages. To make sure that the nodes make progress, it is assumed that a local round consists of sending a message to all nodes in the network, receiving $n-t$ messages with the same round count, and performing some local computation.

\subsection{Multidimensional approximate agreement}\label{subsec:multidim-approx-agreement}

In this work, we design deterministic algorithms that solve multidimensional approximate agreement; every node thereby executes the same algorithm:
\begin{definition}[Multidimensional approximate agreement]\label{def:multidim_agreement_properties}
    Given $n$ nodes, up to $t$ of which can be \byzadj, the goal is to design a deterministic distributed algorithm that satisfies:
    \begin{description}[noitemsep,topsep=3pt]
        \item[$\varepsilon$-Agreement:] Every correct node decides on a vector s.t.\ any two vectors of correct nodes are at a Euclidean distance of at most $\varepsilon$ from each other.
        \item[Strong Validity:] If all correct nodes started with the same input vector, they should agree on this vector as their output.
        \item[Termination:] Every correct node terminates after a finite number of rounds.
    \end{description}
\end{definition}

The strong validity condition was originally introduced in the definition of the consensus problem~\cite{10.1145/800221.806706,10.1007/BFb0040405,BrachaRB} and it is widely used in the literature~\cite{63511,abraham_et_al:LIPIcs.DISC.2017.41,civit2022byzantine, 8548057}. Sometimes, a weaker version of validity is considered in the literature~\cite{civit2022byzantine,civit2021polygraph,yin2019hotstuff}, called the weak validity condition. This validity condition has been used for $k$-set agreement~\cite{10.1145/301308.301368} and BFT protocols~\cite{civit2021polygraph, yin2019hotstuff}. 

\begin{description}[noitemsep,topsep=0pt]
    \item[Weak Validity:] If all nodes are correct and start with the same input vector, they should agree on this vector as their output.
\end{description}

In many state-of-the-art multidimensional approximate agreement protocols~\cite{fgger_et_al:LIPIcs:2018:9816,7867756,abbas2022centerpoint,wang2019computingTverbergPoint,attiya_et_al:LIPIcs.OPODIS.2022.6,mendes2015multidimensional}, the validity condition is instead replaced by the more restrictive convex validity condition:

\begin{description}[noitemsep,topsep=0pt]
    \item[Convex Validity:] Each correct node decides on a vector that is inside the convex hull of all correct input vectors. 
\end{description}

To guarantee convex validity in \byzadj agreement protocols, Mendes, Herlihy, Vaidya, and Garg~\cite{mendes2015multidimensional} have the nodes terminate on a vector inside the so-called $\safe$:

\begin{definition}[$\safe$]\label{def: safeArea}
The $\safe$ of a set of vectors $\{v_i, i\in \left[n\right]\}$ that can contain up to $t$ \byzadj vectors is defined as, 
    $$\safe = \bigcap_{\substack{I\subseteq \left[n\right]\\ |I|=n-t}}\convexHull\bigl(\{v_i, i\in I\}\bigr),$$
where $\convexHull\bigl(\{v_i, i\in I\}\bigr)$ is the convex hull of the set $\{v_i, i\in I\}$.
\end{definition}

The process of agreeing inside the $\safe$ is costly. In \cite{mendes2015multidimensional}, the authors showed that the upper bound on the number of \byz must be $t<\min(n/3, n/(d+1))$ in the synchronous case and $t<n/(d+2)$ in the asynchronous case.  

The goal of this paper is to establish approximate agreement on a representative vector: the centroid of the input vectors of all correct nodes. The centroid is defined as follows:

\begin{definition}[Centroid]
    The centroid of a finite set of $n$ vectors $\{v_i, i\in \left[n\right]\}$ is $\frac{1}{n}\sum_{i=1}^n v_i$.
\end{definition}

We use $\trueBar$ to denote the centroid of the correct input vectors. We also refer to the set of correct input vectors as the \emph{true vectors} and to $\trueBar$ as the \emph{true centroid} respectively.

In Section~\ref{sec:bad_approx_safe_area}, we show that the restriction of the strong validity condition to the convex validity condition cannot provide a good approximation of $\trueBar$. We will therefore relax convex validity in order to improve both the approximation of the centroid and the resilience of the agreement protocols.

\subsection{Approximation of the centroid}\label{subsec: approx-of-centroid}

The correct nodes cannot agree on the true centroid if \byzadj parties are present in the system. Let $f\le t$ denote the actual number of \byzadj parties that are present. A \byzadj party can follow the protocol while choosing worst-case input vectors and thus be indistinguishable from a correct node. While it changes the outcome, such \byzadj behavior is not detectable, and even an optimal approach cannot find out whether all $t$ nodes or only a subset of them are \byzadj. We therefore define the approximation ratio of a true centroid based on all $n$ input vectors for the worst case when $f=t$.  

Given all $n$ input vectors, consider all subsets of $n-t$ vectors. At least one of these contains only true vectors. In the case $f<t$, there can be multiple such sets, while in the case $f=t$ there is only one. 
As mentioned earlier, we cannot determine which of these sets is a set of correct nodes. Thus, each subset of $n-t$ vectors could potentially be the (only) subset containing only true vectors. In the following, we define the set of all possible centroids of $n-t$ nodes, i.e., candidates for being $\trueBar$ assuming the worst case where $f=t$ nodes are \byzadj.

\begin{definition}[Set of possible centroids]
    The set containing all possible centroids of $n-t$ nodes is denoted $\barSet$  and defined as $$\barSet\coloneqq \Bigl\{\frac{1}{n-t}\sum_{i\in I} v_i\ \big|\ \forall I\subset \left[n\right] \text{ s.t.\ } |I|=n-t \Bigr\}. $$ 
\end{definition}

Observe that, in practice, $\trueBar$ is not necessarily one of the elements of $\barSet$, as the centroids in $\barSet$ might have been computed from fewer than $n-f$ vectors. We can however make the following observation, which allows us to use $\barSet$ also in the case $f<t$:

\begin{observation}\label{obs: truebar in enc ball}
    $\trueBar\in\convexHull(\barSet)$.
\end{observation}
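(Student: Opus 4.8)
The plan is to show that the true centroid $\trueBar$, which is the average of the $n-f$ correct input vectors for some $f \le t$, can be written as a convex combination of elements of $\barSet$, each of which is an average of exactly $n-t$ of those correct vectors. Since $n - f \ge n - t$, there is ``room'' to drop $t - f$ of the correct vectors, and the key is to do this symmetrically.

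First I would fix notation: let $C \subseteq [n]$ be the set of correct nodes, so $|C| = n - f$ with $f \le t$, and $\trueBar = \frac{1}{n-f}\sum_{i \in C} v_i$. If $f = t$ there is nothing to prove, since then $C$ itself has size $n-t$ and $\trueBar \in \barSet$ directly. So assume $f < t$. The natural idea is an averaging-over-subsets argument: consider all subsets $J \subseteq C$ with $|J| = n - t$, and average the corresponding centroids $c_J := \frac{1}{n-t}\sum_{i \in J} v_i$ uniformly over all such $J$. By symmetry, each correct index $i \in C$ appears in the same number of these subsets, namely $\binom{n-f-1}{n-t-1}$ out of $\binom{n-f}{n-t}$ total, so the uniform average of the $c_J$ equals $\frac{1}{n-f}\sum_{i\in C} v_i = \trueBar$. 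Since each $J$ is in particular a subset of $[n]$ of size $n-t$, every $c_J$ lies in $\barSet$, and $\trueBar$ is therefore a (uniform, hence convex) combination of points of $\barSet$, i.e.\ $\trueBar \in \convexHull(\barSet)$.

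The only mildly delicate point — and the one I would be careful to state cleanly — is the counting step showing that the uniform average of the $c_J$ reproduces the coefficient $\frac{1}{n-f}$ on each $v_i$. Writing $N = \binom{n-f}{n-t}$ and $M = \binom{n-f-1}{n-t-1}$, we get $\frac{1}{N}\sum_{J} c_J = \frac{1}{N}\cdot\frac{1}{n-t}\sum_{i \in C} M\, v_i = \frac{M}{N(n-t)}\sum_{i\in C} v_i$, and the identity $\frac{M}{N(n-t)} = \frac{1}{n-f}$ is exactly the standard relation $(n-t)\binom{n-f}{n-t} = (n-f)\binom{n-f-1}{n-t-1}$. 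This is a routine binomial identity, so there is really no substantive obstacle; the statement is essentially a symmetrization argument and the proof is short.
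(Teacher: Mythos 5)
Your proposal is correct and follows essentially the same route as the paper's own argument: handle $f=t$ directly, and for $f<t$ symmetrize over all size-$(n-t)$ subsets of the correct vectors, using the fact that each correct vector appears in the same number of subsets. You simply make the binomial counting explicit where the paper leaves it as a remark.
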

For $f = t$, the observation simply follows from the fact that $\trueBar\in \barSet$. In other cases, let $\{v_i\mid i\in [n-f]\}$ denote the set of true vectors for a better overview. The true centroid is defined as $\trueBar = \frac{1}{n-f}\sum_{i\in[n-f]} v_i$. Note that the sum of centroids from $\barSet$ that only contain true vectors is a multiple of $\trueBar$ (indeed, we consider all possible sets of $n-t$ vectors, implying that all true vectors appear in exactly the same number of sets). Therefore, $\trueBar$ can be written as a convex combination of elements of $\barSet$ and it is inside $\convexHull(\barSet)$.

In order to compute the best-possible approximation of the true centroid w.r.t.\ the Euclidean distance, we need to determine a vector that minimizes the maximum distance to all computed centroids, or as centrally as possible in $\convexHull(\barSet)$. Finding this point corresponds to solving a Minimax problem known as the $1$-center problem for the set $\barSet$, which requires finding the smallest enclosing ball of $\barSet$ and computing its center.

\begin{proposition}[Best-possible centroid approximation vector]
    Given $n$ input vectors, $f$ of which are \byzadj (where $f=t$ is the worst case). The best possible approximation vector of the true centroid $\trueBar$ in the worst case is the center of the smallest enclosing ball\, $\encBall$ of\ \,$\barSet$.
\end{proposition}

Given that the best-possible approximation of the true centroid is the center of $\encBall$, and $\trueBar$ could be on its boundary, the distance between the best-possible approximation and the true centroid can be up to $\radiusEncBall$. Using this idea, we can now define the optimal centroid approximation for any algorithm as follows:

\begin{definition}[Optimal centroid approximation]\label{def:one-approx}
    Let\, $\radiusEncBall$ be the radius of $\encBall$.
    All vectors at a distance of at most\, $\radiusEncBall$ from $\trueBar$ are defined to provide an optimal approximation of the true centroid. 
\end{definition}

We can now define the approximation ratio of an algorithm for computing the centroid depending on the radius of the smallest enclosing ball $\radiusEncBall$:

\begin{definition}[$c$-approximation of the centroid]\label{def:approx}
    Let $\mathcal{A}$ be an algorithm computing an approximation vector of the true centroid. Let $O_{\mathcal{A}}$ be the output of the algorithm on some input $I$. The approximation ratio of an algorithm $\mathcal{A}$ on input $I$ is defined as the smallest $c$ for which holds 
    $$\distance(O_{\mathcal{A}},\trueBar) \le c\cdot\radiusEncBall.$$
    We further say that $\mathcal{A}$ computes a $c$-approximation of the true centroid if, for any admissible input to the algorithm, the approximation ratio of the output of $\mathcal{A}$ is upper bounded by $c$.
\end{definition}

Note that admissible inputs consist in each of the $n$ nodes having an input vector in $\mathbb{R}^d$, and at most $t$ of these nodes being $\byzadj$. 

\subsection{Weakening the convex validity condition}\label{subsec: boxes-definitions}

In addition to centroid approximation, we care about validity conditions. We strive for a more restrictive one than the strong validity from \Cref{def:multidim_agreement_properties}. 
However, the $\safe$ approaches do not give a good approximation of the centroid, which makes the convex validity condition too restrictive (\Cref{sec:bad_approx_safe_area}). 
To achieve a better approximation ratio without sacrificing too much on the validity condition, we relax the convex validity condition to the ``coordinate-parallel box'' of true vectors:

\begin{definition}[Trusted box]
    Let $f\leq t$ be the number of \byzadj nodes and let $v_i$, $i \in [n-f]$ denote the true vectors. Let $v_i[k]$ denote the $k^{th}$ coordinates of these vectors. 
    The \emph{trusted box} $\tb$ is the Cartesian product of  
    $\bigl[\min_{i\in [n-f]}v_i[k], \max_{i\in [n-f]}v_i[k]\bigr], $ for $k\in[d]$.
    Denote $\tb[k]$ the orthogonal projection of $\tb$ onto the $k^{th}$ coordinate.
\end{definition}

In other words, $\tb$ is the smallest box containing all true vectors. 
Given this definition, we can define a relaxed version of the convex validity condition, referred to as the box validity condition in the literature~\cite{intervalValidity}:

\begin{description}[noitemsep,topsep=0pt]
    \item[Box validity:] The output vectors of the correct nodes at termination should be inside the trusted box $\tb$.
\end{description}
The trusted box cannot be locally computed by the correct nodes if the \byz submit their vectors.  
We hence define a locally trusted box that is computable locally by each correct node. The following definition is designed for synchronous communication, and is adapted for asynchronous communication in \Cref{sec:asynch_approximations}. 
We use $M_i$ to denote the set of messages received by node $i$ in a communication round. Observe that $m_i \coloneqq |M_i| \geq n-t$. 
We also use the following notation that allows us to rearrange the values in each coordinate and reassign the indices:

\begin{notation}[Coordinate-wise sorting]\label{obs: reordering}
For each coordinate $k\in [d]$, let $v_j[k]$ denote the $k^{th}$ coordinate of the vector received from node $j$. 
We order the values $v_j[k],\forall k\in [n]$ in increasing order and relabel the indices of the vectors accordingly. Thus, $v_j[k]$ now holds the $j^{th}$ smallest value in coordinate $k$. Note that after the renaming, for two different coordinates $k$ and $l$, $v_j[k]$ and $v_j[l]$ may not hold coordinates received from the same node.  
\end{notation}

With this notation, we can now define the locally trusted box:

\begin{definition}[Locally trusted box]
    For all received vectors $v_j\in M_i$ by node $i$, denote $v_j[k]$ the $k^{th}$ coordinate of the respective vector. For each coordinate, we reassign the indices as in \Cref{obs: reordering}.
    The number of \byzadj values for each coordinate is at most $m_i-(n-t)$. 
    The locally trusted box $\tbi$ computed by node $i$ is the Cartesian product of 
    $\bigl[v_{m_i-(n-t)+1}[k], v_{n-t}[k]\bigr] $ for $ k\in[d]$. 
    Denote $\tbi[k]$ the orthogonal projection of $\tbi$ onto the $k^{th}$ coordinate. 
\end{definition}

In other words, since in the synchronous setting each node $i$ is guaranteed to receive the $n-t$ true vectors, the number of received messages $m_i$ allows node $i$ to infer the number of $\byzadj$ vectors it received and trim this exact number of values on each side of the interval for each coordinate. This ensures that $\tbi[k]$ always contains $n-2t$ values, for every coordinate $k$. 

Observe that since we remove the maximum number of potentially \byzadj values for each coordinate on each side when computing the locally trusted box $\tbi$, any locally trusted box is contained in the trusted box $\tb$.

\section{Synchronous algorithms to compute an approximation of the centroid}

We present here synchronous algorithms for computing an approximation of the true centroid, while satisfying different validity conditions. Our main result is given in \Cref{sec:synch_sqrt_d_approximation}.

\subsection{Lower bound on the approximation of the \texorpdfstring{$\boldsymbol{\safe}$}{safe} approach}\label{sec:bad_approx_safe_area}

In this section, we prove that satisfying convex validity comes at the cost of good centroid approximation.

\begin{restatable}{theorem}{existingSafeApprox}\label{thm:sync_safe_area}
    The approximation ratio of the true centroid that can be achieved by any $\safe$ approach is at least $2d$.
\end{restatable}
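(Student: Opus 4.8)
The plan is to construct an explicit family of inputs on which every $\safe$ approach is forced to output a vector far from the true centroid, while the smallest enclosing ball of $\barSet$ has small radius, so that the ratio $\distance(O_{\mathcal{A}},\trueBar)/\radiusEncBall$ grows like $2d$. Since a $\safe$ approach outputs a vector inside $\safe = \bigcap_{|I|=n-t}\convexHull(\{v_i : i\in I\})$, the strategy is to design the $n$ input vectors (of which $t$ are controlled by the adversary) so that (i) the true centroid $\trueBar$ sits at a ``corner'' of the configuration, (ii) the Byzantine vectors are chosen so that $\safe$ is pushed away from $\trueBar$ by a distance on the order of $d$ times the relevant scale, and yet (iii) every set of $n-t$ vectors has a centroid confined to a tiny region, forcing $\radiusEncBall$ down to order $1$ on that scale.

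First I would set up the dimension-by-dimension gadget. In $\mathbb{R}^d$, a natural choice is to place the correct vectors near the origin and use the $t$ Byzantine vectors to ``eat away'' the convex hull one coordinate direction at a time: for each coordinate $k$, a group of Byzantine nodes reports an extreme value in direction $e_k$ (or $-e_k$), so that intersecting over all $\binom{n}{n-t}$ choices of $n-t$ vectors — in particular, the choices that drop the true extreme vector in coordinate $k$ — shifts $\safe$ by a fixed amount in each of the $d$ coordinates. Accumulating these shifts across all $d$ coordinates is what yields the factor $d$ (times a factor $2$ coming from the fact that $\trueBar$ can be on the far side of $\encBall$, exactly as in \Cref{def:one-approx}). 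I would choose $n$ and $t$ as small as the $\safe$ constraint $t<\min(n/3,n/(d+1))$ allows, e.g. partitioning the $t$ Byzantine nodes into $d$ equal groups so each coordinate gets the same treatment; the correct inputs I would take to be a symmetric cloud (say the origin together with a few unit vectors) chosen so that $\trueBar$ is known exactly and so that $\barSet$ — the set of all $n-t$-subset centroids — lies in a ball of radius $O(1/n)$, independent of $d$.

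Then I would compute the three quantities explicitly for this instance: (1) $\trueBar$, directly from the correct inputs; (2) a lower bound on $\distance(v,\trueBar)$ valid for every $v\in\safe$, by exhibiting, for each coordinate $k$, a subset $I_k$ of size $n-t$ whose convex hull forces the $k$-th coordinate of any point of $\safe$ to be at least some $\Delta>0$ away from $\trueBar[k]$, so that summing over coordinates gives $\distance(v,\trueBar)\ge \sqrt{d}\,\Delta$ or, with a careful alignment of the gadget so the shifts are along a common diagonal, $\ge d\cdot(\text{per-coordinate scale})$; and (3) an upper bound $\radiusEncBall \le r$ by bounding the diameter of $\barSet$. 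Dividing (2) by (3) gives the claimed $2d$. Some care is needed to make the $\safe$ nonempty (it must be, since $\trueBar\in\safe$ when $f<t$, or more robustly one keeps enough correct mass in the center), and to make sure the bound is $2d$ rather than, say, $2d/\sqrt{d}=2\sqrt{d}$ — this forces the per-coordinate displacements to combine \emph{linearly}, which means the adversary's displacements in the $d$ directions should reinforce along one line rather than being orthogonal; reconciling ``one displacement per coordinate axis'' with ``all displacements along one diagonal line'' is the delicate modeling point.

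The main obstacle I expect is precisely this tension: getting the full factor $d$ (not just $\sqrt d$) out of a box-like/coordinate-wise gadget while simultaneously keeping $\radiusEncBall$ at scale $O(1)$ and keeping within the resilience window $t<n/(d+1)$. The natural ``orthogonal'' construction loses a $\sqrt d$; recovering it requires either (a) choosing the correct-node cloud and the Byzantine extremes so that $\safe$ degenerates to a single point maximally far from $\trueBar$ along a diagonal, or (b) arguing the lower bound coordinate-by-coordinate and invoking that the output must miss $\trueBar$ by $\Delta$ in \emph{each} coordinate, then lower-bounding the Euclidean norm by the $\ell_1$-type accumulation rather than the $\ell_2$ one — which is only legitimate if one can force the signs to align. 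Verifying that the adversary can indeed simultaneously enforce all $d$ coordinate constraints with a single assignment of $t$ Byzantine vectors (rather than needing $d$ independent adversaries) is the crux, and is where I would spend the most effort; everything else is bookkeeping with centroids of finite point sets.
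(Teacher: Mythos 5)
There is a genuine gap, and you have correctly located it yourself: your construction hinges on accumulating per-coordinate displacements of $\safe$, which under the Euclidean norm yields only a $\sqrt{d}$ factor, and neither of your proposed repairs (aligning the shifts along a diagonal, or an $\ell_1$-type accumulation) is carried out. More importantly, the mechanism you are aiming for is not the one that actually produces the factor $d$. In the paper's construction the displacement $\distance(\safe,\trueBar)$ does \emph{not} grow with $d$ at all --- it is a fixed distance $\approx x$ along a single axis. What shrinks with $d$ is $\radiusEncBall$: with $n=(d+1)t+1$, one correct node and all $t$ Byzantine nodes sit at $v_0=0$, and the remaining $d\cdot t$ correct nodes sit in $d$ groups of $t$ near a single point $v=(x,0,\dots,0)$, each group perturbed by a negligible $\delta$ in a distinct coordinate direction. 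Every $(n-t)$-subset then contains at least $(d-1)t$ of the far nodes, so all possible centroids are confined to a segment of length $\frac{t}{dt+1}x\approx x/d$, i.e.\ $\radiusEncBall\approx \frac{x}{2d}$, while $\safe$ collapses to the single point $v_0$ at distance $\approx x$ from $\trueBar$. The ratio $2d$ falls out of $x \big/ \frac{x}{2d}$.

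Note also that the roles in your gadget are reversed relative to what is needed: you have the Byzantine nodes reporting extremes in each direction $e_k$, whereas in the working construction the Byzantine nodes are indistinguishable (they sit exactly at a correct node's input $v_0$, which is what guarantees $v_0\in\safe$ and hence $\safe\neq\emptyset$), and it is the \emph{correct} nodes that are split into $d$ linearly independent groups --- not to push $\safe$ around, but to make each of the $d$ convex hulls of $n-t$ vectors degenerate into a hyperplane through $v_0$, so that their intersection is the single point $v_0$. Without this degeneracy argument (which is exactly where the resilience bound $t<n/(d+1)$ enters, via $n-t=dt+1$ points spanning only a hyperplane), there is no way to pin $\safe$ to the far corner, and without the concentration of $\barSet$ coming from the $d$-to-$1$ imbalance of correct mass, there is no way to make $\radiusEncBall$ scale like $x/d$. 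Your proposal identifies the right target ratio but does not supply either of these two ingredients.
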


\begin{proof}
In order to prove the lower bound on the approximation ratio, we present a construction where the $\safe$ consists of just one node, and the distance between this node and $\trueBar$ is at least $2d$.  
To make sure that $\trueBar$ is as far as possible from the $\safe$, we place as many correct nodes as possible outside the $\safe$.

Observe that, in $\mathbb{R}^d$, a hyperplane is a subspace of codimension $1$, i.e.\ containing the whole space except a one dimensional subspace that can be spanned by a single vector. Let this vector characterize the hyperplane.  If $d$ distinct hyperplanes are characterized by $d$ linearly independent vectors, then their intersection has codimension $d$, meaning that the intersection has dimension $0$. Hence, $d$ hyperplanes characterized by $d$ linearly independent vectors intersect in one single point at most. 

We use this property to build an example where the $\safe$ is reduced to a single point. Assume the worst case $f=t$ and that $n=(d+1)t+1$, and thus there are $d\cdot t+1$ correct vectors.
Let $v_0$ be the input vector of one correct node. The idea is to divide the rest of the correct nodes into $d$ groups of $t$ nodes with input vectors $v_i, i\in[d]$, all at a negligible distance from each other, and all linearly independent.
We then assume that the $t$ \byzadj nodes choose their input vector to be $v_0$. 
In particular, we define $v_0 = (0, 0, \dots, 0)$ and $v = (x, 0, \cdots, 0)$, where $x> 0$. We further define $d$ vectors $\delta_j = \delta\cdot u_j$ where $u_j$ is the $j^{th}$ unit vector. 
Then, we can define the position of the $d$ groups of nodes near $v$ as $v_j = v + \delta_j, \forall j \in [d]$. 
Consider the hyperplane $H_i$ spanned by all nodes except for the $t$ nodes with input vector $v_i$. 
This is, in fact, a hyperplane of $\mathbb{R}^d$, since the $(d-1)$ inputs $v_j, \forall j\neq i$ are linearly independent.
Moreover, the hyperplanes $H_i$ are all characterized by linearly independent vectors (the vector $v_i$ characterizes $H_i$), implying that their intersection must be the single point $v_0$. Thus, $\safe$ is just a single point $v_0$. See Figure~\ref{fig:lowerBoundCaseExistingSafe} for a visualization of the construction in $d=3$ dimensions. 

Since we can choose $\delta$ to be arbitrarily small, and thus ensure that the distance between any $v_j, j\neq 0$ and $v$ is negligible, we will replace coordinates $v_j$ by $v$.
We can compute the true centroid as $\trueBar = \frac{d\cdot t}{d\cdot t+1}v$, implying that 
    $\distance(\safe, \trueBar) = \frac{d\cdot t}{d\cdot t+1}x.$
Observe that $\trueBar$ and $\frac{(d-1)t}{d\cdot t+1}v + \frac{t+1}{d\cdot t+1}v_0$ are two extreme centroids defining the diameter of $\encBall$. Then, we can rewrite the diameter of the smallest enclosing ball of $\barSet$ as 
\begin{align*}
    2\cdot\radiusEncBall = \frac{d\cdot t}{d\cdot t+1}x - \frac{(d-1)t}{d\cdot t+1}x
    = \frac{t}{d\cdot t+1}x.
\end{align*}
Therefore, 
$$\frac{\distance(\safe, \trueBar)}{\radiusEncBall} = \frac{d\cdot t}{d\cdot t+1}\cdot x\cdot \frac{d\cdot t+1}{t}\cdot\frac{2}{x} = 2d,$$ which gives the desired lower bound on the approximation ratio. \qedhere
\end{proof}

\begin{restatable}{observation}{unboundedSyncSafeArea}\label{thm:unbounded_sync_safe_area}
    Assume that the barycenter of the $\safe$ from~\cite{mendes2015multidimensional} is chosen to approximate the centroid. Then the approximation ratio of the true centroid is unbounded.
\end{restatable}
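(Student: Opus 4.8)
The plan is to construct an explicit family of one-dimensional instances on which the barycenter of the $\safe$ lies at distance $\bigl(\tfrac{n}{t}-3\bigr)\cdot\radiusEncBall$ from $\trueBar$; letting the ratio $n/t$ grow then precludes any finite bound on the approximation ratio. I would work in dimension $d=1$, where the $\safe$ is an interval and its barycenter is just the midpoint of that interval, and where the instance stays well inside the resilience $t<\min(n/3,n/(d+1))$ required by the algorithm of~\cite{mendes2015multidimensional}.

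The single idea is to hide the Byzantine nodes inside a light ``minority cluster'' of correct nodes, so that they inflate the $\safe$ without inflating $\barSet$. Concretely, fix any $t<n/3$ and the worst case $f=t$, place $n-2t$ correct nodes at the value $0$, and place the remaining $t$ correct nodes together with all $t$ Byzantine nodes (which report honestly and consistently) at the value $1$; note $n-2t\ge 1$ since $n>3t$. The multiset of reported values is then $n-2t$ copies of $0$ and $2t$ copies of $1$. The key step is a counting argument: since there are only $n-2t<n-t$ zeros and only $2t<n-t$ ones, every set of $n-t$ reported values contains at least one $0$ and at least one $1$, hence its convex hull equals $[0,1]$. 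Therefore $\safe=[0,1]$, and its barycenter is $\tfrac12$.

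It then remains to pin down $\trueBar$ and $\radiusEncBall$. The true centroid is $\trueBar=\tfrac{t}{n-t}$, close to $0$. For the enclosing ball, the centroid of an $(n-t)$-subset equals $\tfrac{k}{n-t}$ where $k\in\{t,\dots,2t\}$ is the number of $1$'s it keeps, so $\barSet=\{\tfrac{t}{n-t},\dots,\tfrac{2t}{n-t}\}$, giving $\encBall=[\tfrac{t}{n-t},\tfrac{2t}{n-t}]$ and $\radiusEncBall=\tfrac{t}{2(n-t)}$. Then $\distance(\tfrac12,\trueBar)=\bigl|\tfrac12-\tfrac{t}{n-t}\bigr|=\tfrac{n-3t}{2(n-t)}=\bigl(\tfrac{n}{t}-3\bigr)\radiusEncBall$, which grows without bound (take e.g.\ $t=1$, $n\to\infty$). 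If one prefers, the same conclusion also follows at once from \Cref{thm:sync_safe_area}: the barycenter of the $\safe$ is in particular a $\safe$ approach, hence has approximation ratio at least $2d$, which is already unbounded in $d$.

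The one genuine obstacle is the tension between the two requirements: enlarging the $\safe$ and keeping $\barSet$ (hence $\radiusEncBall$) small pull in opposite directions, because the Byzantine inputs influence both quantities. The resolution is exactly the placement above --- Byzantine values strictly inside the convex hull of the correct inputs, at the minority point $1$: this traps every $(n-t)$-subset mean in $[\tfrac{t}{n-t},\tfrac{2t}{n-t}]$, while the Byzantine copies at $1$ still cannot be excluded from enough size-$(n-t)$ subsets to shrink any relevant convex hull strictly below $[0,1]$. Everything else is the elementary arithmetic indicated above.
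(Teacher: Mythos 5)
Your construction is internally correct --- the counting argument showing $\safe=[0,1]$, the identification $\barSet=\{\tfrac{t}{n-t},\dots,\tfrac{2t}{n-t}\}$, and the resulting ratio $\tfrac{n}{t}-3$ all check out --- but it proves a strictly weaker statement than the one intended. In your family of instances $\radiusEncBall>0$ always, so for any \emph{fixed} system $(n,t,d)$ the ratio you exhibit is the finite number $\tfrac{n}{t}-3$; unboundedness only emerges by letting $n/t\to\infty$ across different systems. The paper means (and its overview table asserts, listing $\infty$ for this method as opposed to $\Omega(d)$ for general $\safe$ approaches) that the ratio is unbounded already at fixed parameters. The missing idea is that the \byzadj nodes may simply \emph{withhold} their vectors: then each correct node receives exactly $n-t$ vectors, there is only one subset of size $n-t$, so $|\barSet|=1$ and $\radiusEncBall=0$, while the barycenter of the $\safe$ --- the unweighted midpoint of the segment spanned by the two distinct correct values --- still differs from the weighted true centroid $\trueBar$. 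A positive distance divided by a zero radius gives an infinite ratio on a single admissible input, which is what ``unbounded'' must mean here. Your tension paragraph correctly identifies that keeping $\radiusEncBall$ small and the $\safe$ large pull in opposite directions, but the resolution is to make $\radiusEncBall$ exactly zero by silence, not merely small by clustering.

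Your fallback argument via \Cref{thm:sync_safe_area} has the same defect in sharper form: a lower bound of $2d$ is finite for every fixed $d$, and the paper explicitly separates the $\Omega(d)$ bound for generic $\safe$ approaches from the $\infty$ bound for the barycenter rule, so invoking that theorem cannot establish this observation.
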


\begin{proof}
    Note that $\radiusEncBall$ can be $0$. This happens when the \byzadj nodes do not send any vectors at all and only one possible centroid can be computed. 
    Let $t+1$ correct nodes have input vector $v_0$ and $n-2t-1$ correct nodes have input vector $v_1$, the $\safe$ is spanned by $[v_0,v_1]$. Observe that the nodes of the $\safe$ do not have weights. In the first round, all correct nodes will select the barycenter $(v_0+v_1)/2$ as their input vector for the next round. $\trueBar$, however, lies in $((t+1)\cdot v_0 + (n-2t-1)\cdot v_1) / (n-t)$, and $\radiusEncBall = 0$ (since only $n-t$ vectors are received, $|\barSet|=1$). The approximation ratio is thus unbounded. \qedhere
 \end{proof}
 
\begin{restatable}{corollary}{lowerBoundSyncConvex}
\label{cor: lowerBoundSyncConvex}
    The approximation ratio of the true centroid that can be achieved by any algorithm satisfying convex validity is at least 2d.
\end{restatable}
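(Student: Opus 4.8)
The plan is to reuse, essentially verbatim, the construction and the enclosing-ball computation from the proof of \Cref{thm:sync_safe_area}, and to add a single new ingredient: an indistinguishability argument showing that \emph{on that very input}, every algorithm satisfying convex validity is forced to output the point $v_0$, which is exactly the point to which a $\safe$ approach is confined there. Once this is done, the ratio $2d$ drops out of the same arithmetic as in \Cref{thm:sync_safe_area}. Recall that by \Cref{def:approx} it is enough to exhibit one admissible input on which the approximation ratio of the algorithm is at least $2d$.

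Concretely, take the construction of \Cref{thm:sync_safe_area}: $n=(d+1)t+1$ nodes, of which $t+1$ hold the input $v_0=(0,\dots,0)$ and, for each $i\in[d]$, a group $G_i$ of $t$ nodes holds $v_i=v+\delta u_i$, where $v=(x,0,\dots,0)$. Consider the $d$ executions $W_1,\dots,W_d$ in which, in $W_i$, the nodes of $G_i$ are the \byzadj ones but follow the protocol honestly with input $v_i$, while every other node is correct. From the point of view of the $t+1$ nodes holding $v_0$ --- which are correct in every $W_i$ --- all of $W_1,\dots,W_d$ produce the identical transcript (in each, every message is the one an honest node with the prescribed input would send), so a deterministic algorithm makes these nodes output the same vector $o$ in all of them. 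Convex validity in $W_i$ forces $o\in\convexHull\bigl(\{v_0\}\cup\{v_j:j\neq i\}\bigr)$, and intersecting over $i\in[d]$ while invoking the hyperplane argument already carried out in the proof of \Cref{thm:sync_safe_area} (the $d$ affine hulls of these sets are characterized by linearly independent vectors, hence meet only in $v_0$, and each of the $d$ convex hulls contains $v_0$) yields $o=v_0$, up to an $O(\delta)$ error from the perturbations.

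Finally, introduce one further execution $W_0$, again indistinguishable from $W_1,\dots,W_d$ to the nodes holding $v_0$: in $W_0$ it is $t$ of the $t+1$ nodes holding $v_0$ that are \byzadj (and again behave honestly), so that exactly $dt+1$ nodes are correct, with inputs one copy of $v_0$ and $t$ copies of each $v_i$. The remaining correct node holding $v_0$ has the same view as in $W_1,\dots,W_d$ and therefore still outputs $o=v_0$ (and, by $\varepsilon$-Agreement, every correct node outputs within $\varepsilon$ of $v_0$). But the true centroid in $W_0$ is $\trueBar=\frac{dt}{dt+1}v$, and $\radiusEncBall=\frac{t}{2(dt+1)}x$, exactly as computed in \Cref{thm:sync_safe_area} --- note that $\barSet$, hence $\encBall$, depends only on the multiset of the $n$ input vectors, which is the same in every $W_j$. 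Thus $\distance(o,\trueBar)=\frac{dt}{dt+1}x=2d\cdot\radiusEncBall$, so the approximation ratio on the admissible input $W_0$ can be made arbitrarily close to $2d$ by taking $\delta$ and $\varepsilon$ small, which gives the bound. (One caveat: the count $n=(d+1)t+1$ respects $t<n/3$ only for $d\ge2$; for $d=1$ one replaces the $d$ groups by a single group and adds $t$ dummy correct nodes at $v_0$ so that $n=3t+1$, the $\safe$ remains a single point, and the same computation gives ratio $2$.)

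The crux is the second paragraph. Pinning $o$ to $v_0$ requires (i) noting that \byzadj nodes behaving exactly like honest nodes make $W_1,\dots,W_d$ truly indistinguishable to the correct holders of $v_0$, so a deterministic algorithm must commit to one output, and (ii) the geometric fact that the $d$ relevant convex hulls meet only at $v_0$ --- which is precisely the hyperplane computation already established for \Cref{thm:sync_safe_area}. Everything after that is bookkeeping inherited from that proof.
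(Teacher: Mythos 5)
Your proof is correct and takes essentially the same route as the paper: the paper first shows that undetectable (protocol-following) Byzantine nodes force any convex-validity algorithm's output into the first-round $\safe$, then invokes \Cref{thm:sync_safe_area}; you instantiate that same indistinguishability argument directly on the \Cref{thm:sync_safe_area} construction via the explicit executions $W_0,\dots,W_d$, and the arithmetic is identical. Your version is somewhat more explicit about the execution swaps (and the $d=1$ edge case), but the underlying idea is the same.
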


\begin{proof}
    We assume that $t$ \byz follow the algorithm with their own (worst-case) input vectors, thus being undetectable. 
    Let us consider an approximate agreement algorithm such that the convergence vector always satisfies convex validity. For the sake of contradiction, suppose this convergence vector is outside the $\safe$ computed in the very first round for a specific input. Then, by definition of $\safe$, there exists a convex hull $\mathrm{Conv}$ of $n-t$ input vectors that does not contain the convergence vector. 
    Since the \byz are undetectable, we can choose an input where the $n-t$ vectors corresponding to $\mathrm{Conv}$ are correct, and the rest are \byzadj. The convergence vector then violates the convex validity property, which results in a contradiction.
    Hence, the convergence vector of any algorithm satisfying the convex validity property must be in the $\safe$. The corollary then follows from \Cref{thm:sync_safe_area}.
\end{proof}

\begin{figure}[tbh]
    \centering
    \includegraphics[width=0.7\textwidth]{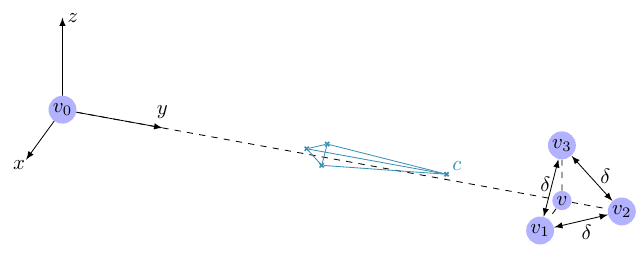}
    \caption{Lower bound construction with $d=3$ and $n=4t+1$. The input vectors are spread into $d+1=4$ groups, one at $v_0$ ($t+1$ of them), and the others at $v_i$, for $i=1, 2, 3$ (groups of $t$ vectors).  
    Here $\convexHull\bigl({v_0, v_1, v_2}\bigr)\cap \convexHull\bigl({v_0, v_1, v_3}\big)\cap \convexHull\bigl({v_0, v_2, v_3}\bigr) = \{v_0\} = \safe$. 
    The light blue area is $\convexHull(\barSet)$, where $c \coloneq \trueBar$.}
    \label{fig:lowerBoundCaseExistingSafe}
\end{figure}

\subsection{Optimal approximation with weak validity via the \texorpdfstring{$\boldsymbol{\safe}$}{safe} approach}\label{sec:synch_one_approximation}

To obtain an optimal approximation of the centroid, it is possible to use a $\safe$ approach and add a preprocessing round where nodes calculate local approximations of the centroid. Algorithm~\ref{alg:safe area approach} presents this idea in pseudocode.
In this algorithm, nodes first locally compute the set of all possible centroids of $n-t$ vectors. By \Cref{def:approx}, we know that agreeing inside $\encBall$ ensures an optimal approximation of the centroid, hence each node picks the center of the smallest enclosing ball of the computed possible centroids as their new vector. Then, each node calls the $\ExistingSafeArea$ subroutine (synchronous version of Algorithm $6$ from~\cite{mendes2015multidimensional}), which ensures that each node outputs a vector inside $\convexHull(\{c_i, i\in [n] \})$. Since each of the $c_i$ provides an optimal approximation, agreeing inside $\convexHull(\{c_i, i\in [n] \})$ preserves this property. 
There are a few subtleties to the proof since each node computes its own local set of possible centroids, which possibly differs from $\barSet$.
Additionally, we show that \Cref{alg:safe area approach} does not satisfy the strong validity condition as shown in \Cref{fig:safe-convb-not-in-tb}, but satisfies the weak validity condition. 
Also, the use of the $\ExistingSafeArea$ subroutine forces exponential local computations and a resilience of $t<\min\left(\frac{n}{3}, \frac{n}{d+1}\right)$, which gets further away from the $t<n/3$ optimum the more the dimension grows.

\begin{algorithm}[tbh]
\caption{Synchronous optimal approximation of the centroid with $t<\min(n/3, n/(d+1))$}
\label{alg:safe area approach}
\begin{algorithmic}[1]
    \State Each node $i$ in $[n]$ with input vector $v_i$ executes the following code:
    \Indent
    \State Broadcast $v_i$ reliably to all nodes
    \State Reliably receive up to $n$ messages $M_i = \{v_j, j\in [n] \}$ 
    \State Compute centroids of all subsets of $n-t$ vectors from $M_i$ and store them in $\barSet(i)$
    \State Set new vector $c_i$ to be the center of the smallest enclosing ball of $\barSet(i)$
    \State Run $\ExistingSafeArea$ with input vector $c_i$
    \EndIndent
\end{algorithmic}
\end{algorithm}

\begin{theorem}\label{thm:synch-one-approx}
    \Cref{alg:safe area approach} achieves an optimal approximation of $\ \trueBar$ in the synchronous setting while satisfying weak validity. After $O\bigl(\log_{1/(1-\gamma)}\bigl(\frac{2}{\varepsilon}\cdot\radiusEncBall\cdot\sqrt{d}\bigr)\bigr)$ synchronous rounds, where $1/(1-\gamma) = n\binom{n}{n-t}\big/\bigl(n\binom{n}{n-t}-1\bigr)$, the vectors of the correct nodes satisfy the $\varepsilon$-agreement property. The resilience of the algorithm is $t<\min(n/3, n/(d+1))$.
\end{theorem}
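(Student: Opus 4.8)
The plan is to verify the three properties separately—optimal centroid approximation, weak validity, and $\varepsilon$-agreement with the stated round count—and to handle the subtlety that each node $i$ computes a local set $\barSet(i)$ that may differ from the global $\barSet$. First I would establish the key structural fact: in the synchronous setting every correct node reliably receives all $n-f \ge n-t$ true vectors, so every subset of $n-t$ indices drawn from $M_i$ that consists only of true vectors yields a genuine element of $\barSet$; more importantly, $\barSet(i)$ is a finite set of points each of which lies in $\convexHull(\barSet)$. Indeed, any centroid of $n-t$ received vectors, some of which may be Byzantine, is a centroid of some $n-t$ input vectors (recall Byzantine nodes that submit a vector behave consistently under reliable broadcast), hence an element of $\barSet$ itself; so in fact $\barSet(i) \subseteq \barSet$. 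Since $\trueBar \in \convexHull(\barSet)$ by \Cref{obs: truebar in enc ball}, and $\barSet(i)\subseteq\barSet$, the center $c_i$ of the smallest enclosing ball of $\barSet(i)$ satisfies $\distance(c_i,\trueBar)\le\radiusEncBall$: the smallest enclosing ball of $\barSet(i)$ is contained in $\encBall$ (a ball containing $\barSet$ contains $\barSet(i)$), so its center lies within distance $\radiusEncBall$ of any point of $\encBall$, in particular of $\trueBar$. Thus every correct node's preprocessed vector $c_i$ is already an optimal centroid approximation in the sense of \Cref{def:one-approx}.

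Next I would invoke the correctness of the $\ExistingSafeArea$ subroutine (the synchronous Algorithm 6 of~\cite{mendes2015multidimensional}): run on correct inputs $\{c_i\}$, it guarantees convex validity, i.e.\ each correct node outputs a vector inside $\convexHull(\{c_i : i \text{ correct}\})$, and it guarantees $\varepsilon$-agreement and termination with resilience $t<\min(n/3,n/(d+1))$. Since $\convexHull(\{c_i : i\text{ correct}\})\subseteq \overline{B}(\trueBar,\radiusEncBall)$ — a ball is convex and contains all the $c_i$ — every correct output is within $\radiusEncBall$ of $\trueBar$, which is precisely a $1$-approximation by \Cref{def:approx}. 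For weak validity: if all $n$ nodes are correct and start with a common input vector $v$, then $f=0$, every subset of $n-t$ of the $v$'s has centroid $v$, so $\barSet(i)=\{v\}$ and $c_i = v$ for all $i$; feeding the common value $v$ into $\ExistingSafeArea$, convex validity forces every output to be $v$. The failure of strong validity is exhibited by the construction referenced in \Cref{fig:safe-convb-not-in-tb} (when correct nodes share $v$ but a Byzantine node perturbs $\barSet(i)$, the preprocessed $c_i$ need no longer equal $v$), so only weak validity survives.

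For the round complexity, I would track the contraction rate of $\ExistingSafeArea$. The subroutine shrinks the diameter of the correct nodes' vectors by a factor $1-\gamma$ per round, where $1/(1-\gamma) = n\binom{n}{n-t}\big/\bigl(n\binom{n}{n-t}-1\bigr)$ arises from the Tverberg-point / $\safe$ averaging analysis of~\cite{mendes2015multidimensional} (the $n\binom{n}{n-t}$ counting the multiset of candidate points each node considers). The initial spread of the $c_i$ is at most the diameter of $\encBall$, namely $2\radiusEncBall$, but since agreement is measured coordinate-wise the relevant bound picks up a $\sqrt{d}$ factor, giving initial spread at most $2\radiusEncBall\sqrt{d}$; solving $2\radiusEncBall\sqrt{d}\,(1-\gamma)^R \le \varepsilon$ for $R$ yields $R = O\bigl(\log_{1/(1-\gamma)}(\tfrac{2}{\varepsilon}\radiusEncBall\sqrt{d})\bigr)$. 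The preprocessing adds one extra round. Termination is inherited from $\ExistingSafeArea$, and the resilience bound $t<\min(n/3,n/(d+1))$ is exactly the one required by that subroutine (the extra preprocessing round uses only reliable broadcast, which needs merely $t<n/3$).

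The main obstacle I anticipate is the careful bookkeeping around $\barSet(i)$ versus $\barSet$: one must argue cleanly that $\barSet(i)\subseteq\barSet$ (relying on the consistency property of reliable broadcast so that a Byzantine node's "input vector," once accepted, is well-defined and the same for all correct receivers) and that the smallest-enclosing-ball operator is monotone under set inclusion, so that local optimality of each $c_i$ transfers to global optimality. Everything else — convexity of balls, the $\sqrt{d}$ coordinate-wise conversion, plugging into the known contraction rate of $\ExistingSafeArea$ — is routine.
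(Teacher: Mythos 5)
There is a genuine gap in the approximation part of your argument. The step that is supposed to give $\distance(c_i,\trueBar)\le\radiusEncBall$ rests on the claim that the smallest enclosing ball of $\barSet(i)$ is contained in \encBall because $\barSet(i)\subseteq\barSet$. That containment is false in general: take $\barSet$ to be the four points $(\pm1,0),(0,\pm1)$, whose smallest enclosing ball is the unit disk, and $\barSet(i)=\{(1,0),(0,1)\}$, whose smallest enclosing ball is centered at $(1/2,1/2)$ with radius $\sqrt{2}/2$ and contains the point $(1,1)$ outside the unit disk. Only the \emph{radius} is monotone under set inclusion, not the ball itself. Moreover, even if you grant that $c_i\in\encBall$ (which does hold, since $c_i\in\convexHull(\barSet(i))\subseteq\convexHull(\barSet)$), your inference that $c_i$ ``lies within distance $\radiusEncBall$ of any point of \encBall'' is wrong: two points of a ball of radius $R$ can be up to $2R$ apart. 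As written, your chain of reasoning only yields $\distance(c_i,\trueBar)\le 2\radiusEncBall$, i.e.\ a $2$-approximation, not the optimal approximation the theorem claims.

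The missing idea — and the one the paper uses — is that $\trueBar$ itself lies inside $\mathrm{Ball}(\barSet(i))$, not merely inside \encBall. This follows from synchrony: $M_i$ contains all $n-f$ true vectors, so $\barSet(i)$ contains \emph{every} centroid formed from $n-t$ true vectors, and by the symmetry argument of \Cref{obs: truebar in enc ball}, $\trueBar$ is a convex combination of exactly those centroids, hence $\trueBar\in\convexHull(\barSet(i))\subseteq\mathrm{Ball}(\barSet(i))$. Combining this with the (correct) radius monotonicity $\mathrm{Rad}\bigl(\mathrm{Ball}(\barSet(i))\bigr)\le\radiusEncBall$ gives $\distance(c_i,\trueBar)\le\mathrm{Rad}\bigl(\mathrm{Ball}(\barSet(i))\bigr)\le\radiusEncBall$, since $c_i$ is the center of that ball. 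You already observed the relevant structural fact (every correct node receives all true vectors), but you did not deploy it where it is needed. The remainder of your proposal — weak validity, the failure of strong validity, the $2\radiusEncBall\sqrt{d}$ initial spread and the contraction rate of $\ExistingSafeArea$, and the resilience bound — matches the paper's argument and is fine.
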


In the following, we prove the above theorem in two steps and discuss why the consensus vector from \Cref{alg:safe area approach} does not satisfy strong validity. 

\begin{lemma}\label{lem:safeApproachApprox}
    \Cref{alg:safe area approach} outputs an optimal approximation of $\trueBar$. 
\end{lemma}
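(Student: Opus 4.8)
The plan is to show that every correct node's output of \Cref{alg:safe area approach} lies within distance $\radiusEncBall$ of $\trueBar$, which by \Cref{def:one-approx} is exactly the claim that the approximation is optimal. The strategy is to reduce the problem to two facts: (i) each locally computed center $c_i$ already lies in the ball of radius $\radiusEncBall$ around $\trueBar$, and (ii) the $\ExistingSafeArea$ subroutine only outputs vectors in $\convexHull(\{c_i : i\in[n]\})$, so convexity of the ball preserves the guarantee.

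First I would handle step (i). Fix a correct node $i$. It receives a message set $M_i$ with $|M_i| = m_i \ge n-t$, and computes $\barSet(i)$, the set of centroids of all $(n-t)$-subsets of $M_i$. The key observation is that $\barSet(i)$ may differ from $\barSet$, but every element of $\barSet(i)$ is still the centroid of some $n-t$ vectors drawn from the $n$ input vectors (correct plus the at most $t$ Byzantine vectors actually submitted; if fewer than $n$ messages arrive, $i$ simply has fewer subsets to consider). Hence $\barSet(i)\subseteq \convexHull(\barSet)$ — indeed, by the same averaging argument as in \Cref{obs: truebar in enc ball}, every centroid of $n-t$ input vectors is a convex combination of elements of $\barSet$. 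Consequently $\convexHull(\barSet(i)) \subseteq \convexHull(\barSet) \subseteq \encBall$, so the smallest enclosing ball of $\barSet(i)$ has radius at most $\radiusEncBall$ and its center $c_i$ satisfies $\distance(c_i, x) \le \radiusEncBall$ for every $x\in\barSet(i)$. In particular, since $\trueBar$ itself is a centroid of $n-f \ge n-t$ correct vectors: when $f=t$ we have $\trueBar\in\barSet(i)$ (node $i$ received all correct vectors in the synchronous setting), and when $f<t$ we still have $\trueBar\in\convexHull(\barSet(i))\subseteq\encBall$. Either way $\distance(c_i,\trueBar)\le\radiusEncBall$.

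Next, step (ii): by the correctness of the $\ExistingSafeArea$ subroutine (synchronous Algorithm 6 of~\cite{mendes2015multidimensional}), under the resilience bound $t<\min(n/3,n/(d+1))$ every correct node terminates on a vector inside $\safe(\{c_i\}) \subseteq \convexHull(\{c_i : i\in[n]\})$, where the $c_i$ now play the role of the input vectors. Write the output $O_i$ as a convex combination $\sum_j \lambda_j c_j$. Since the closed Euclidean ball $B$ of radius $\radiusEncBall$ centered at $\trueBar$ is convex and contains every $c_j$ by step (i), it contains $O_i$ as well, i.e. $\distance(O_i,\trueBar)\le\radiusEncBall$. By \Cref{def:approx} this is a $1$-approximation, which is optimal, completing the proof.

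The main obstacle is the bookkeeping in step (i): one must argue carefully that $\barSet(i)\subseteq\convexHull(\barSet)$ even though node $i$'s message set $M_i$ is a proper subset of the $n$ inputs (or contains Byzantine vectors), and that $\trueBar$ lies in $\encBall$ as seen by every node — this is where the synchronous guarantee of receiving all $n-f$ correct vectors, together with the averaging argument of \Cref{obs: truebar in enc ball}, is essential. The convexity step (ii) is then routine given the stated correctness of $\ExistingSafeArea$.
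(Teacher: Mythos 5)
Your proposal follows essentially the same route as the paper's proof: first show that every correct node's locally computed center $c_i$ lies within distance $\radiusEncBall$ of $\trueBar$ (using that $\barSet(i)\subseteq\barSet$, that the local enclosing ball therefore has radius at most $\radiusEncBall$, and that $\trueBar\in\convexHull(\barSet(i))$ by the averaging argument of \Cref{obs: truebar in enc ball} together with the synchronous guarantee that $M_i$ contains all correct vectors), and then use convexity of the ball to push the guarantee through the $\ExistingSafeArea$ subroutine. One step needs tightening, however: in step (ii) you write the output as a convex combination of $\{c_j : j\in[n]\}$ and assert that the ball of radius $\radiusEncBall$ around $\trueBar$ ``contains every $c_j$ by step (i)'' --- but step (i) only establishes this for \emph{correct} nodes, and a \byzadj node can feed an arbitrary vector into the second phase. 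The repair is immediate and is exactly what the paper does: by the definition of the $\safe$ (the intersection of the convex hulls of all $(n-t)$-subsets, one of which consists solely of correct nodes' vectors), the output of $\ExistingSafeArea$ lies in $\convexHull(\{c_i \mid i \text{ correct}\})$, and the convexity argument then applies to correct $c_i$ only. With that correction the argument is complete and matches the paper's.
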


\begin{proof}
In \Cref{alg:safe area approach}, during the first round, each correct node computes its local set of centroids $\barSet(i)\subseteq \barSet$. 
The set $\barSet$ itself cannot necessarily be computed by all correct nodes, since it is not guaranteed that they all receive $n$ messages. 
The set $\barSet(i)$ depends on the number of messages $|M_i|$ received by node $i$ during this first round. 
Using $\barSet(i)$, each node $i$ computes its local view of the smallest enclosing ball of this set $\mathrm{Ball}(\barSet(i))$, denoted by $\encBalli$.  
Note that the radius of $\encBalli$ is at most $\radiusEncBall$, since the smallest enclosing ball of a subset of vectors must be smaller than the smallest enclosing ball of the original set. 
Moreover, each locally computed smallest enclosing ball contains $\trueBar$, i.e., 
for all correct nodes $i$, $\trueBar\in\convexHull(\barSet(i))\subseteq \mathrm{Ball}(\barSet(i))$. This holds because $M_i$ always contains all vectors from $n-f$ correct nodes (due to synchronous communication). Further, we can use the same argument as in \Cref{obs: truebar in enc ball}, since $\trueBar$ can be written as a convex combination of elements of $\barSet(i)$.
After the first round of \Cref{alg:safe area approach}, each correct node $i$ therefore has a new input vector $c_i$ at a distance of at most $\radiusEncBall$ away from $\trueBar$. 

After the first round, the subroutine $\ExistingSafeArea$ ensures that the final agreement vectors will be inside the convex hull of vectors $c_i$ that were computed in the first round. We denote this convex hull $\mathrm{Conv}_c$. 
\Cref{alg:safe area approach} therefore agrees inside $\mathrm{Conv}_c \coloneqq \convexHull\left(\{c_i\mid \forall i\in [n] \text{ correct}\}\right)$.

Recall that every $c_i$ is at a distance of at most $\radiusEncBall$ away from $\trueBar$, that is, for all correct $i\in [n]$, we get $$\distance(c_i, \trueBar)\leq \radiusEncBall.$$ 
Since $\mathrm{Conv}_c$ is a convex polytope, the distance to all vectors inside their convex hull can also be upper bounded by $$\distance(x, \trueBar)\leq \radiusEncBall \quad \forall x\in \mathrm{Conv}_c.$$
Therefore, the approximation ratio of \Cref{alg:safe area approach} is 
$$\frac{\max_{x\in \mathrm{Conv}_c}(x, \trueBar)}{\radiusEncBall} = 1.$$ 

\end{proof}

It remains to argue about the convergence of the algorithm. The $\ExistingSafeArea$ algorithm has a convergence ratio that depends on the maximum distance between any two true vectors. In the following, we show the convergence rate of \Cref{alg:safe area approach}.

\begin{lemma}\label{lem:safeApproachRounds}
    \Cref{alg:safe area approach} converges in $O\left(\log_{1/(1-\gamma)}\left(\frac{2}{\varepsilon}\cdot\radiusEncBall \cdot\sqrt{d}\right)\right)$ rounds, where $1/(1-\gamma) = n\binom{n}{n-t}\Big/\left(n\binom{n}{n-t}-1\right)$, while satisfying weak validity.
\end{lemma}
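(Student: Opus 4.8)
The plan is to reuse the per-node approximation bound from the previous lemma as the starting condition, feed it into the standard contraction analysis of $\ExistingSafeArea$, and convert that coordinate-wise contraction into a Euclidean $\varepsilon$-agreement guarantee.

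First I would recall from the proof of \Cref{lem:safeApproachApprox} that, after the single preprocessing round, every correct node $i$ holds a vector $c_i$ with $\distance(c_i,\trueBar)\le\radiusEncBall$, and that this holds no matter which messages $M_i$ node $i$ received (the argument only used that $M_i$ contains all $n-f$ correct inputs, which is guaranteed synchronously). Consequently, for every coordinate $k\in[d]$ and every correct $i$ we get $|c_i[k]-\trueBar[k]|\le\radiusEncBall$, so when $\ExistingSafeArea$ starts the range of the correct values in each coordinate is at most $2\radiusEncBall$.

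Next I would invoke the convergence analysis of $\ExistingSafeArea$ (the synchronous variant of Algorithm~6 of \cite{mendes2015multidimensional}), valid for $t<\min(n/3,n/(d+1))$: the point a correct node adopts always lies in the $\safe$ of its received vectors, whose projection onto coordinate $k$ is contained in the interval obtained by trimming the top and bottom $t$ of the received $k$-th coordinates, and a standard averaging-agreement argument then shows that the spread of the correct values in each coordinate shrinks by a factor $(1-\gamma)$ per round, with $\gamma=1/\bigl(n\binom{n}{n-t}\bigr)$. Hence after $r$ further rounds the range of the correct values in every coordinate is at most $2\radiusEncBall(1-\gamma)^r$, so the Euclidean distance between any two correct vectors is at most $\sqrt{d}\cdot 2\radiusEncBall(1-\gamma)^r$; requiring this to be at most $\varepsilon$ gives $r\ge\log_{1/(1-\gamma)}\!\bigl(\tfrac{2}{\varepsilon}\radiusEncBall\sqrt{d}\bigr)$, and adding the one preprocessing round yields the stated round complexity, with Termination immediate since $r$ is finite.

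Finally, for weak validity I would check the no-failure case directly: if all $n$ nodes are correct and start with the same vector $v$, then every $(n-t)$-subset of received vectors has centroid $v$, so $\barSet(i)=\{v\}$ and $c_i=v$ for all $i$; since the $\safe$ of a multiset of copies of $v$ equals $\{v\}$, $\ExistingSafeArea$ keeps and outputs $v$, as required (strong validity fails for the reason depicted in \Cref{fig:safe-convb-not-in-tb}, since with Byzantine inputs present $\barSet(i)$ may contain centroids other than the common correct input). I expect the only real obstacle to be bookkeeping: pinning down the exact contraction constant by correctly quoting the synchronous analysis of \cite{mendes2015multidimensional} with resilience $t<n/(d+1)$, and tracking the $\ell_\infty$-to-$\ell_2$ loss of $\sqrt{d}$ incurred when turning the per-coordinate contraction into the Euclidean $\varepsilon$-agreement bound; the rest is routine.
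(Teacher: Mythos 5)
Your proposal is correct and follows essentially the same route as the paper's proof: bound the post-preprocessing coordinate-wise spread of the $c_i$ by $2\radiusEncBall$ using the approximation lemma, invoke the $(1-\gamma)$-per-round contraction of $\ExistingSafeArea$ from \cite{mendes2015multidimensional} together with the $\sqrt{d}$ loss from passing to the Euclidean norm, and verify weak validity by noting that in the failure-free identical-input case every node feeds the same vector into $\ExistingSafeArea$. Your write-up is somewhat more explicit than the paper's (which simply cites the convergence of $\ExistingSafeArea$), but there is no substantive difference.
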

\begin{proof}
Recall that, after one single round, all correct nodes have vectors $c_i$ that are inside $\encBall$, indexing correct nodes with $i\in[n-f]$. 
Hence, after this first round,\\
${\max_{\substack{k\in [d]\\i, j\in[n-f]}}|c_i[k]-c_j[k]|\leq 2\radiusEncBall}$. 
By~\cite{mendes2015multidimensional} and by using $f\le t$, $\ExistingSafeArea$ therefore converges in \\$O\left(\log_{1/(1-\gamma)}\left(\frac{2}{\varepsilon}\cdot\radiusEncBall \cdot\sqrt{d}\right)\right)$ rounds. 

Finally, we show that this algorithm satisfies weak validity. We assume that all nodes are correct and start with the same input vector. Then, in the first round, every node will receive a set of $n$ identical vectors. Every node will thus have the same input to $\ExistingSafeArea$, thus satisfying weak validity.
\end{proof}

This Lemma concludes the proof of \Cref{thm:synch-one-approx}. In the following, we discuss that the consensus vector from \Cref{alg:safe area approach} does not satisfy strong validity. 

\begin{observation} 

    The example in \Cref{fig:safe-convb-not-in-tb} shows that \Cref{alg:safe area approach} does not satisfy the strong validity condition from \Cref{def:multidim_agreement_properties}. Therefore, the \byzadj nodes can make the correct nodes agree on almost any vector. 
\end{observation}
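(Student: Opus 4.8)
The plan is to exhibit a single execution of \Cref{alg:safe area approach} in which all correct nodes hold the same input vector $v$ at the start, yet every correct node outputs a vector $c\neq v$; this directly contradicts the strong validity condition of \Cref{def:multidim_agreement_properties}. I would fix the worst case $f=t$, let the $n-t$ correct nodes all start with the same vector $v$, and let the $t$ \byz each run reliable broadcast \emph{honestly} on one common vector $w\neq v$. Because they follow the broadcast protocol faithfully, every correct node accepts all $n$ messages in the first round and sees the identical multiset $M_i$ consisting of $n-t$ copies of $v$ and $t$ copies of $w$. Hence all correct nodes perform exactly the same local computation in lines 4--6.

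Next I would pin down the local set of possible centroids and its smallest enclosing ball. The centroid of a size-$(n-t)$ subset of $M_i$ that keeps exactly $k$ copies of $w$ equals $\frac{(n-t-k)v+kw}{n-t}$, and since $t<n/3$ implies $t\le n-t$ and $n-2t>0$, the index $k$ ranges over $0,1,\dots,t$. Thus $\barSet(i)$ is a finite set of $t+1$ points lying on the segment between $v$ (the all-correct centroid, $k=0$) and $p\coloneqq\frac{(n-2t)v+tw}{n-t}$ (the extreme centroid, $k=t$), with $v$ and $p$ its two endpoints. The smallest enclosing ball of a collinear point set is the ball whose diameter is the segment spanned by its two extreme points, so its center is the midpoint $c=\tfrac12(v+p)=\frac{(2n-3t)v+tw}{2(n-t)}$, and $c\neq v$ whenever $w\neq v$ and $t\ge 1$. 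Every correct node therefore sets its new vector to this same $c$.

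To conclude, I would invoke the guarantees of the $\ExistingSafeArea$ subroutine: run on the common input $c$ at all correct nodes, its output lies in $\convexHull(\{c\})=\{c\}$ (equivalently, by its own strong validity), so every correct node outputs $c\neq v$, violating strong validity of \Cref{alg:safe area approach}. Finally, to justify that the \byz can force \emph{essentially any} output, note that $w\mapsto c$ is an affine bijection of $\mathbb{R}^d$: given any target $z$, the adversary picks $w=\frac{2(n-t)z-(2n-3t)v}{t}$, which forces every correct node to agree on $z$. This is the construction depicted in \Cref{fig:safe-convb-not-in-tb}.

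The only delicate points are (i) ensuring that \emph{all} correct nodes compute the \emph{same} $c$, which is why the \byz are made to run reliable broadcast honestly so that its consistency/totality delivers the identical message set $M_i$ to every correct node in the first round --- if instead some correct node received only the $n-t$ correct copies of $v$, it would compute $c_i=v$ and $\convexHull(\{c_i\})$ could contain $v$; and (ii) arguing that the enclosing-ball center is provably off $v$, not merely that $\barSet(i)\neq\{v\}$, which is handled by choosing all \byzadj vectors equal so that $\barSet(i)$ is collinear with $v$ as an endpoint. Neither step goes beyond the elementary geometry above, so I do not expect a real obstacle.
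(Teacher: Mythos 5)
Your proposal is correct and is essentially the paper's own argument: the paper "proves" this observation only by pointing to the figure, whose configuration (all correct nodes at one vector, all Byzantine nodes honestly broadcasting a common second vector, so that every correct node computes the midpoint of the collinear segment of possible centroids and feeds that common point to $\ExistingSafeArea$) is exactly what you formalize. Your explicit inverse formula for $w$ given a target $z$ is a welcome addition that actually substantiates the "almost any vector" clause, which the paper leaves implicit.
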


\begin{figure}[tbh]
    \centering
    \includegraphics[width=0.7\textwidth]{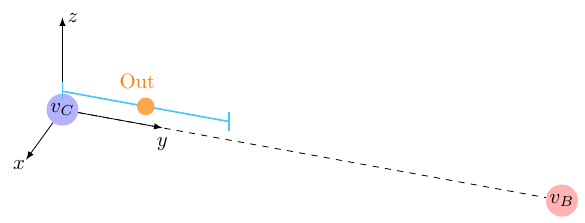}
    \caption{In the above example, suppose all correct nodes start with vector $v_C$ and all \byzadj nodes start with vector $v_B$. The segment in blue is the diameter of $\encBall$ and the orange node $\mathrm{Out}$ is the output of every correct node if all of them receive all $n$ vectors. However, to satisfy the strong validity condition from \Cref{def:multidim_agreement_properties}, the output of every correct node should be $v_C$. }
    \label{fig:safe-convb-not-in-tb}
\end{figure}

\subsection{Constant approximation of the centroid with strong validity via the \texorpdfstring{$\boldsymbol{\mda}$}{mda} approach}\label{sec:synch_const_approximation}

We analyze here the minimum-diameter averaging ($\mda$)~\cite{pmlr-v80-mhamdi18a} approach, as described in~\cite{NEURIPS2021_d2cd33e9}. We show it can be used to derive a constant approximation of the centroid while satisfying strong validity, and that it doesn't satisfy the box validity condition. 
In the following, we formally define $\mda$.

\begin{definition}[$\mda(M,m)$]
    Let $M = \{v_i, i\in [n] \}$ be a set of at least $m$ vectors. Then 
    $$\mda(M,m) = \argmin_{M'\subset M;\ |M'|=m} \max_{u,v\in M'} \distance(u,v)$$
\end{definition}

Algorithm~\ref{alg:mda approach} presents the $\mda$ approach as pseudocode. The idea is to let nodes repeatedly compute the $\mda$ of a subset of $n-t$ of the received vectors. Each node then locally chooses the centroid of the $\mda$ as its new vector. In particular, in the first round, each node locally picks one vector from $\barSet$. This property allows us to derive the constant approximation ratio, since every element of $\barSet$ is a $2$-approximation of $\trueBar$. Let $V_{\text{correct}}$ denote the set of input vectors of the correct nodes, we prove the following. 

\begin{theorem}\label{thm:mda_correctness}
    \Cref{alg:mda approach} converges in $O\Bigl(\log_2 \bigl(\frac{1}{\varepsilon}\cdot D\bigr)\Bigr)$ rounds, where $D$ is the diameter of the correct vectors (i.e., $D= \max_{u,v\in V_{\text{correct}}} \distance(u,v)$), and achieves a $3.8$-approximation of $\ \trueBar$ in the synchronous setting. The vectors of the correct nodes satisfy strong validity, and the resilience of the algorithm is $t<n/4$.
\end{theorem}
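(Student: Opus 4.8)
The plan is to prove the three assertions separately — the $4$-approximation of $\trueBar$, the $O(\log_2(D/\varepsilon))$-round convergence (hence $\varepsilon$-agreement and termination), and strong validity — with the resilience bound $t<n/4$ emerging as the exact threshold that makes the bookkeeping close. The approximation rests on a fact flagged just before the theorem: \emph{every element of $\barSet$ is a $2$-approximation of $\trueBar$}. Indeed $\barSet$ lies inside its smallest enclosing ball $\encBall$, of radius $\radiusEncBall$, and by \Cref{obs: truebar in enc ball} also $\trueBar\in\convexHull(\barSet)\subseteq\encBall$, so any $s\in\barSet$ has $\distance(s,\trueBar)\le 2\radiusEncBall$. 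The same estimate holds for the centroid of \emph{any} $n-t$ of the vectors a correct node reliably receives in the first round: that centroid is a convex combination of the at most $n$ accepted vectors and, by the argument of \Cref{obs: truebar in enc ball}, still lies within $2\radiusEncBall$ of $\trueBar$. Thus after round $1$ every correct node $i$ holds a vector $c_i$ (the centroid of $\mda(M_i,n-t)$) with $\distance(c_i,\trueBar)\le 2\radiusEncBall$, and all $c_i$ lie inside $\encBall$.

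The core is to show the later rounds never move a correct vector beyond distance $4\radiusEncBall$ from $\trueBar$. For this I would invoke the analysis of minimum-diameter averaging \cite{NEURIPS2021_d2cd33e9,pmlr-v80-mhamdi18a}: if at the start of a round the correct vectors have diameter $\delta$ and centroid $\bar v$, then, since $\mda$ returns the \emph{minimum}-diameter $(n-t)$-subset — which is majority correct whenever $t<n/3$ — the centroid of the chosen subset lies within $\beta\delta$ of $\bar v$, where $\beta=\beta(n,t)$ can be taken as $t/(n-t)$. This yields geometric contraction $\delta^{(r+1)}\le 2\beta\,\delta^{(r)}$; and since every correct vector produced in round $r+1$ lies within $\beta\,\delta^{(r)}$ of the single point $\bar v^{(r)}$, we have $\distance(\bar v^{(r+1)},\bar v^{(r)})\le\beta\,\delta^{(r)}$, so the correct-centroids $\bar v^{(r)}$ drift from $\bar v^{(1)}$ by at most $\sum_{r\ge1}\beta\,\delta^{(r)}\le\beta\,\delta^{(1)}/(1-2\beta)$. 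Combining with $\delta^{(1)}\le 2\radiusEncBall$ and $\distance(\bar v^{(1)},\trueBar)\le 2\radiusEncBall$ (both because the $c_i$ lie in $\encBall$), the triangle inequality gives $\distance(c_i^{(r)},\trueBar)\le\beta\,\delta^{(r-1)}+\beta\,\delta^{(1)}/(1-2\beta)+2\radiusEncBall$, and as $r\to\infty$ this approaches $2\radiusEncBall\bigl(1+2\beta/(1-2\beta)\bigr)$, which is $\le 4\radiusEncBall$ exactly when $\beta\le 1/3$, i.e. when $t\le n/4$; a finer accounting of the round-$1$ quantities sharpens the constant to the $3.8$ reported in Table~\ref{tab:overview}.

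The round complexity follows from the same contraction: $\delta^{(r)}\le(2\beta)^{r-1}\delta^{(1)}$ and $\delta^{(1)}=O(D)$, because each $c_i^{(1)}$ sits within $\mathrm{diam}(\mda(M_i,n-t))\le D$ of a correct input; hence after $O\bigl(\log_{1/(2\beta)}(D/\varepsilon)\bigr)=O\bigl(\log_2(D/\varepsilon)\bigr)$ rounds all correct vectors are within $\varepsilon$ of one another — this is $\varepsilon$-agreement, and stopping at that round gives termination. Strong validity is direct: if all correct nodes start from a common $v$, then in every round each correct node receives at least $n-t$ copies of $v$; the all-$v$ $(n-t)$-subset has diameter $0$, and it is the only diameter-$0$ candidate because no value other than $v$ can be supplied by $n-t$ of the nodes (the $\byz$ number at most $t<n-t$), so $\mda$ returns $n-t$ copies of $v$ and the new vector is again $v$; by induction every correct node outputs $v$. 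Finally, $t<n/4$ implies $t<n/3$, so reliable broadcast is available in every round; the MDA contraction itself would tolerate up to $t<n/3$, and it is the $4$-approximation bound that pins the resilience at $t<n/4$.

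The main obstacle is the constant bookkeeping of the second paragraph. The naive estimate of how much the Byzantine entries perturb the MDA-centroid — swap each of the (at most $t$) Byzantine entries of the chosen subset for a correct one, at a cost of one diameter per swap — only yields $\beta\approx 2t/(n-t)$, which makes even the contraction fail until $t<n/5$ and the $4$-approximation until $t<n/7$. Getting the tight $\beta=t/(n-t)$ — which the two-cluster input ($n-2t$ coincident correct points, the other $t$ correct points at distance $D$, the Byzantine vectors placed on the large cluster) shows cannot be improved — and then summing the geometric series that controls the drift of $\bar v^{(r)}$ so that it closes at $2\radiusEncBall$ rather than diverging, is where essentially all the work lies; this is also the step that fixes the resilience at $t<n/4$ and the constant at $3.8$.
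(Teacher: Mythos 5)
Your first-round analysis, the round complexity, and the strong-validity argument all match the paper's. The gap is exactly in the step you flag as ``where essentially all the work lies'': the lemma you invoke --- that the centroid of the chosen minimum-diameter $(n-t)$-subset lies within $\frac{t}{n-t}\delta$ of the centroid $\bar v$ of the correct vectors --- is false. Take $n=5$, $t=1$, correct inputs $0,0,0,\delta$ on a line and a Byzantine input at $-\delta+\epsilon$: the unique minimum-diameter $4$-subset is $\{-\delta+\epsilon,0,0,0\}$, whose centroid is at distance $(2\delta-\epsilon)/4\to\frac{2t}{n-t}\delta$ from $\bar v=\delta/4$. So the ``naive'' constant $\frac{2t}{n-t}$ is in fact tight for the quantity you are using; your two-cluster example only certifies the weaker lower bound $\frac{t}{n-t}\delta$ and does not show your $\beta$ is attainable as an upper bound. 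With the correct constant, your own accounting (distance to a drifting centroid, then doubling to recover the diameter) gives contraction ratio $\frac{4t}{n-t}$ and, as you yourself observe, fails to deliver either convergence or the $4$-approximation for all $t<n/4$. There is also a small algebra slip ($1+2\beta/(1-2\beta)\le 2$ forces $\beta\le 1/4$, not $1/3$), but that is secondary.

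The paper closes the argument by never routing the estimate through $\bar v$. For convergence it bounds the \emph{pairwise} distance of two new vectors $u_i,u_j$ directly: the sets $\mda(M_i,n-t)$ and $\mda(M_j,n-t)$ share at least $n-2t$ elements, so the two centroids differ in at most $t$ summands each, every exchanged pair being at distance at most $2D_r$; this gives $\distance(u_i,u_j)\le\frac{2t}{n-t}D_r<\frac{2}{3}D_r$ for $t<n/4$ with no extra factor of two. For the approximation it telescopes the per-node displacement $\sum_r\distance\bigl(v_i^{(r-1)},v_i^{(r)}\bigr)$, bounding the first term by $2\radiusEncBall$ (since every first-round output lies in $\encBall$) and each later term by $\bigl(1+\frac{t}{n-t}\bigr)$ times the current correct diameter, which contracts geometrically with ratio $\frac{2t}{n-t}$; the series sums to below $3.8\,\radiusEncBall$. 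To repair your proof you would need to replace the invoked lemma by one of these overlap-based estimates; as written, the plan cannot be completed because the inequality it rests on does not hold.
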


\begin{algorithm}[tbh]
\caption{Synchronous constant approximation of the centroid with $t<n/4$}
\label{alg:mda approach}
\begin{algorithmic}[1]
\State In each round $r=1,2,\ldots, 6\cdot\left\lceil\log_2 \bigl(\frac{1}{\varepsilon}\cdot D\bigr)\right\rceil$, where $D$ is the diameter of the vectors and $C>1$ a large constant, each node $i$ in $[n]$ with input vector $v_i$ does the following:
    \Indent
    \State Broadcast $v_i$ reliably to all nodes
    \State Reliably receive up to $n$ messages $M_i = \{v_j, j\in [n] \}$ 
    \State Compute $MDA(M_i, n-t)$ 
    \State Set new vector $v_i$ to be the centroid of $MDA(M_i, n-t)$\label{alg:mda approach:line:decision}
    \EndIndent
\end{algorithmic}
\end{algorithm}

The $\mda$ algorithm has a good (but not optimal) resilience of $t<n/4$. 
However, it has two disadvantages: the computation of $\mda$ requires local exponential runtime in $f$, similar to the computation of $\safe$; and the strong validity condition is the best validity condition we can satisfy with this approach (\Cref{lem:mda_no_box}).  

\begin{observation}\label{obs:synch_mda_and_safe}
    There exists a $2$-approximation of the centroid computed using a combination of $\mda$ and the $\safe$ approach that satisfies strong validity and $t<\min(n/4, n/(d+1))$.
\end{observation}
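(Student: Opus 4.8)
The plan is to reuse the preprocessing-round template of \Cref{alg:safe area approach}, replacing its ``center of the smallest enclosing ball'' step with an $\mda$ step. Concretely, each correct node $i$ first reliably broadcasts its input $v_i$, reliably receives a family $M_i$ of $m_i\ge n-t$ vectors, computes $\mda(M_i,n-t)$, and sets its new vector $c_i$ to be the centroid of $\mda(M_i,n-t)$; it then runs $\ExistingSafeArea$ with input $c_i$, so that the final output lies in $\convexHull(\{c_i\mid i\in[n]\ \text{correct}\})$, which I abbreviate $\mathrm{Conv}_c$ as in \Cref{lem:safeApproachApprox}.

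For the approximation guarantee I would argue as in the proof of \Cref{lem:safeApproachApprox}. The vector $c_i$ is the centroid of $n-t$ vectors from $M_i$, hence an element of the local set of possible centroids $\barSet(i)$ that node $i$ can form. In the synchronous model $M_i$ contains all $n-f$ correct vectors, so the argument of \Cref{obs: truebar in enc ball} yields $\trueBar\in\convexHull(\barSet(i))$, and the smallest enclosing ball of $\barSet(i)$ has radius at most $\radiusEncBall$ (a subset has a no-larger enclosing ball). Since $c_i$ and $\trueBar$ both lie in that ball, $\distance(c_i,\trueBar)\le 2\radiusEncBall$ for every correct $i$; that is, each $c_i$ is a $2$-approximation of $\trueBar$ (the loss of a factor $2$ relative to \Cref{alg:safe area approach} is exactly because $\mda$ returns an arbitrary element of $\barSet(i)$ rather than the center of $\encBalli$). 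Finally, $x\mapsto\distance(x,\trueBar)$ is convex, so its maximum over $\mathrm{Conv}_c$ is attained at some $c_i$; hence every output of $\ExistingSafeArea$ is within $2\radiusEncBall$ of $\trueBar$, giving the claimed $2$-approximation. The key structural point is that composing with $\ExistingSafeArea$ ``freezes'' the output inside $\mathrm{Conv}_c$, so — unlike the stand-alone $\mda$ of \Cref{alg:mda approach} — no further drift is incurred during the convergence phase.

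For strong validity, suppose all correct nodes start with the same vector $v$. Each correct node then receives at least $n-f\ge n-t$ copies of $v$ among at most $t$ other vectors, so the $n-t$ copies of $v$ form a diameter-$0$ subset and $\mda(M_i,n-t)$ must return a diameter-$0$ set; since $t<n/4$ implies $t<n-t$, no size-$(n-t)$ subset can be made entirely of \byzadj vectors, so the returned set is $n-t$ copies of $v$ and $c_i=v$. All correct inputs to $\ExistingSafeArea$ then equal $v$, and its convex validity forces the output to be $v$. Resilience is the intersection of the two components' requirements, $t<n/4$ for $\mda$ (\Cref{thm:mda_correctness}) and $t<\min(n/3,n/(d+1))$ for $\ExistingSafeArea$~\cite{mendes2015multidimensional}, i.e.\ $t<\min(n/4,n/(d+1))$; convergence then follows exactly as in \Cref{lem:safeApproachRounds}, noting that after the preprocessing round $|c_i[k]-c_j[k]|\le 4\radiusEncBall$ for all coordinates $k$ and correct $i,j$.

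I expect the main obstacle to be the same subtlety already present in \Cref{alg:safe area approach}: a correct node only forms $\barSet(i)$, not $\barSet$ itself, so one must verify carefully that the centroid of $\mda(M_i,n-t)$ genuinely lands in $\encBall$ (equivalently, in $\convexHull(\barSet)$) despite the node never seeing all $n$ input vectors, and that ``$\trueBar\in\convexHull(\barSet(i))$'' together with ``the enclosing ball of $\barSet(i)$ has radius at most $\radiusEncBall$'' transfer verbatim. A minor additional check is that in the all-$v$ case the $\argmin$ defining $\mda$ may be tie-broken arbitrarily: every minimizing subset has diameter $0$ and, by the counting bound $t<n-t$, equals $n-t$ copies of $v$, so the tie-break is irrelevant.
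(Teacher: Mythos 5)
Your proposal is correct and follows essentially the same route the paper intends for this observation: one preprocessing round of $\mda$ places each correct node's vector $c_i$ in $\mathrm{Ball}(\barSet(i))\subseteq$ a ball of radius $\radiusEncBall$ containing $\trueBar$ (hence $\distance(c_i,\trueBar)\le 2\radiusEncBall$), and $\ExistingSafeArea$ then converges inside $\convexHull(\{c_i\})$, where convexity of $x\mapsto\distance(x,\trueBar)$ preserves the bound; strong validity follows from the diameter-$0$ argument you give. Your write-up is in fact more detailed than the paper's (which leaves the observation essentially unproved, with only the one-line justification mirrored in \Cref{obs:asynch_mda_and_safe}), and the tie-breaking and $t<n-t$ counting checks you flag are handled correctly.
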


Combining those two approaches gives a tight approximation ratio, as we prove below that any algorithm satisfying the strong validity property can achieve at best a $2$-approximation of $\trueBar$.

\begin{lemma}\label{lem:MDAsynch}
    Algorithm~\ref{alg:mda approach} converges in $O\Bigl(\log_2 \bigl(\frac{1}{\varepsilon}\cdot D\bigr)\Bigr)$ rounds, where $D\coloneqq \max\limits_{u,v\in V} \distance(u,v)$ is the diameter of the true vectors, and achieves a $3.8$-approximation of the centroid if $t<n/4$.
\end{lemma}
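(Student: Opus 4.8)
The plan is to prove the two claims of the lemma separately: the round complexity follows from the contraction property of averaging over a minimum‑diameter subset, and the $3.8$‑approximation follows by first showing that after a single round all correct vectors sit inside $\encBall$, and then bounding how far they can drift away from $\trueBar$ in later rounds.

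\textbf{Round complexity.} I would first establish the standard per‑round contraction. Fix a round and two correct nodes $p,q$, and let $\delta$ be the current diameter of the correct vectors. In the synchronous model each correct node reliably receives all $n-f\ge n-t$ correct vectors, so the all‑correct $(n-t)$‑subset is always a candidate for $\mda(\cdot,n-t)$; hence both selected sets $M'_p,M'_q$ have diameter at most $\delta$. Moreover $|M'_p\cap M'_q|\ge 2(n-t)-n=n-2t\ge 1$, so they share a common vector $w$, and $|M'_p\setminus M'_q|=|M'_q\setminus M'_p|\le t$. Writing the difference of the two new centroids as a sum of at most $t$ paired differences $u-v$ with $u\in M'_p$, $v\in M'_q$ — each of $u,v$ within $\delta$ of $w$, so $\|u-v\|\le 2\delta$ — gives that after the round the diameter of the correct vectors is at most $\frac{2t}{n-t}\delta$. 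Since $t<n/4$ we have $\frac{2t}{n-t}<\tfrac23<1$, so starting from the input diameter $D$, after $r$ rounds the correct vectors lie in a ball of diameter $\le(\tfrac23)^r D$, and the $\varepsilon$‑agreement condition is met after $O(\log_{3/2}(D/\varepsilon))=O(\log_2(D/\varepsilon))$ rounds; the $6\lceil\log_2(D/\varepsilon)\rceil$ rounds executed by \Cref{alg:mda approach} suffice.

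\textbf{Approximation.} Using the worst‑case convention $f=t$ of \Cref{def:approx}, $\trueBar\in\barSet$ and $\encBall$ is the smallest enclosing ball of $\barSet$, of radius $R\coloneqq\radiusEncBall$. The first round is what makes the approximation work: each correct node replaces its vector by the centroid of an $(n-t)$‑subset of the vectors it received, which is an element of $\barSet$ (or of $\convexHull(\barSet)$ if some Byzantine parties stayed silent, which is all that is needed). Hence after round~1 every correct vector lies in $\encBall$, and since $\trueBar\in\convexHull(\barSet)\subseteq\encBall$ by \Cref{obs: truebar in enc ball}, every correct vector is at distance at most $2R$ from $\trueBar$. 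Now let $a_r$ and $\delta_r$ be, respectively, the maximum distance from a correct vector to $\trueBar$ and the diameter of the correct vectors at the start of round $r$, so $a_1\le 2R$ and $\delta_1\le 2R$. In round $r$, a selected set $M'$ has diameter $\le\delta_r$ and consists of at least $n-2t$ correct vectors together with at most $t$ Byzantine ones; each Byzantine entry is therefore within $\delta_r$ of a correct vector of $M'$, hence within $a_r+\delta_r$ of $\trueBar$. Averaging the $n-t$ entries, $a_{r+1}\le a_r+\frac{t}{n-t}\delta_r$, while the contraction step gives $\delta_{r+1}\le\frac{2t}{n-t}\delta_r$. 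Summing the resulting geometric series bounds $a_\infty$ in terms of $R$: a crude bound of this kind already yields $a_\infty\le 4R$, recovering \Cref{thm:mda_correctness}, and the improvement to $a_\infty\le 3.8R$ — the point of the lemma — comes from a sharper accounting of this coupled recursion.

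\textbf{Main obstacle.} The delicate step is precisely extracting the constant $3.8$ rather than $4$. The individual estimates $a_1\le 2R$, $\delta_1\le 2R$, and per‑round drift $\frac{t}{n-t}\delta_r$ are each tight in isolation, so a term‑by‑term triangle‑inequality summation saturates at $4R$; the refinement requires exploiting that these worst cases cannot all be aligned simultaneously — a Byzantine point admitted into a minimum‑diameter $(n-t)$‑set is confined to within the (geometrically shrinking) $\delta_r$ of the correct cluster, and that cluster itself was born inside $\encBall$ right next to $\trueBar$ — so that the drift accumulated over all rounds is strictly below what the naive bound suggests. A secondary subtlety is the first‑round bookkeeping when Byzantine parties broadcast inconsistently or not at all: a node's local $(n-t)$‑centroid then need only be shown to lie in $\convexHull(\barSet)$, and it is \Cref{obs: truebar in enc ball} that keeps $\trueBar$ inside the same convex hull so that the $2R$ bound still applies.
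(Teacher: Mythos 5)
Your convergence argument and the overall two-phase structure of your approximation argument (round one places every correct vector in \encBall, hence within $2\radiusEncBall$ of \trueBar; later rounds add a geometrically decaying drift governed by $\frac{t}{n-t}\delta_r$ and $\delta_{r+1}\le\frac{2t}{n-t}\delta_r$) are essentially the paper's. Your bookkeeping is even slightly cleaner: you track the distance $a_r$ to \trueBar directly via $a_{r+1}\le a_r+\frac{t}{n-t}\delta_r$, whereas the paper bounds the per-round displacement $\distance\bigl(v_i^{(r-1)},v_i^{(r)}\bigr)\le\bigl(1+\frac{t}{n-t}\bigr)\max_{u,v\in V_{\mathrm{true}}^{(r-1)}}\distance(u,v)$ and sums the resulting series, arriving at $\bigl(2+\frac{4t^2n}{(n-t)^2(n-3t)}\bigr)\radiusEncBall\le\frac{34}{9}\radiusEncBall<3.8\radiusEncBall$.

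The genuine gap is the constant itself. The lemma asserts $3.8$, and your recursion only yields $a_\infty\le 2\radiusEncBall\bigl(1+\frac{t}{n-3t}\bigr)$, which tends to $4\radiusEncBall$ as $t/n\to 1/4$; you explicitly defer the improvement to ``a sharper accounting of this coupled recursion'' without supplying it. This deferral cannot be discharged within the abstraction you chose: the three inequalities $a_1,\delta_1\le 2\radiusEncBall$, $a_{r+1}\le a_r+\frac{t}{n-t}\delta_r$, $\delta_{r+1}\le\frac{2t}{n-t}\delta_r$ are simultaneously consistent with $a_\infty$ arbitrarily close to $4\radiusEncBall$ when $t$ approaches $n/4$, so no resummation of that same recursion reaches $3.8$ uniformly over $t<n/4$ --- you would need a quantitatively different per-round estimate (and you should then check carefully which power of $\frac{2t}{n-t}$ multiplies the initial diameter $2\radiusEncBall$ in the resulting geometric series, since the claimed constant is sensitive to exactly this exponent). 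As written, your proposal proves the $4$-approximation of \Cref{thm:mda_correctness} but not the $3.8$-approximation claimed by this lemma. Your side remark that a silent Byzantine party only forces the round-one centroid into $\convexHull(\barSet)$ rather than $\barSet$ is harmless, since membership in \encBall is all that is used.
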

 
\begin{proof}
    We start by showing convergence. Due to synchronous communication, all true vectors will be received by all correct nodes in every round. When computing $\mda$, Byzantine vectors may be taken into account, but only if they are not further away than the diameter of the correct nodes away from the correct nodes. Let $D_r$ be the diameter of the true input vectors of round $r$. Consider two vectors $u_i$ and $u_j$ that are centroids of $\mda(M_i, n-t)$ and $\mda(M_j, n-t)$ computed in round $r$ in line~\ref{alg:mda approach:line:decision} of Algorithm~\ref{alg:mda approach} by two correct nodes $i$ and $j$, respectively. Note that $\mda(M_i, n-t)\cap \mda(M_j, n-t)$ consists of $k \ge n-2t$ vectors. Let $w_i\in \mda(M_i, n-t)\setminus \mda(M_j, n-t)$ and $w_j\in \mda(M_j, n-t)\setminus \mda(M_i, n-t)$ be the two vectors in the respective sets that have the largest distance from each other. Note that these vectors can be at a distance of at most $2D_r$ from each other, as the sets may have been formed with Byzantine vectors. Then, the distance between $u_i$ and $u_j$ can be upper bounded as
    \begin{align*}
    \distance(u_i,u_j) &\le \frac{1}{n-t}\distance((|M_i|-k)\cdot w_i,(|M_j|-k)\cdot w_j)\le \frac{1}{n-t}\distance(t\cdot w_i,t\cdot w_j) \\
    &\le \frac{t}{n-t}\distance(w_i, w_j) \le \frac{2t}{n-t}D_r < \frac{2}{3}D_r,
    \end{align*}
    since $t<n/4$. 
    This inequality holds for any pair of nodes $i$ and $j$. Thus, $D_{r+1} < 2D_r / 3$ holds for all rounds $r$. Let $\varepsilon$ be a given small constant. Algorithm~\ref{alg:mda approach} terminates after $6\cdot\left\lceil\log_2 \bigl( D/\varepsilon\bigr)\right\rceil$ rounds. The corresponding diameter of the true vectors is then $D\cdot(2/3)^{6\cdot\lceil\log_2 (D/\varepsilon)\rceil} \le\varepsilon$. Thus, Algorithm~\ref{alg:mda approach} satisfies $\varepsilon$-agreement. 

    We next focus on the approximation statement. Our goal is to show that     
    \begin{align*}
        \sum_{r=1}^\infty\distance(v_i^{(r-1)}, v_i^{(r)})\leq  3.8\cdot\radiusEncBall
    \end{align*}
    where $v_i^{(r)}$ is vector of a correct node $i$ at the end of round $r$. 
    Since the approximation ratio is computed as $\distance(v_i^{\lceil\log_2 ( D/\varepsilon)\rceil}, \trueBar)/\radiusEncBall$, by using the triangle inequality, the approximation follows. 

    Note that $v_i^{(0)}$ is the input vector of node $i$. 
    Let $V_i^{(r)}$ denote the set of vectors that will be received by node $i$ at the beginning of round $r+1$, and $V_{\mathrm{true}}^{(r)}$ denote the set of true vectors at the beginning of round $r+1$. 
    Note that, due to synchronous communication, the set $V_i^{(r)}$ consists of $V_{\mathrm{true}}^{(r)}$ and the vectors that are sent by $\byzadj$ nodes to node $i$ at the beginning of round $r+1$. Hence, $|V_{\mathrm{true}}^{(r)}| = n-t$ and $n-t\leq |V_i^{(r)}|\leq n$.
    We further use $\mda_i^{(r)} = \argmin_{M\subset V_i^{(r-1)};\ |M|=n-t} \max_{u,v\in M} \distance(u,v)$ to denote the $\mda$ chosen by node $i$ during round $r$, and $\text{diam}(\mda_i^{(r)}) = \max_{u,v\in \mda_i^{(r)}} \distance(u,v)$.

    Observe that the input vector for the second round of the algorithm $v_i^{(1)}$ is computed by taking the centroid of $n-t$ received vectors. Any such centroid computed from $n-t$ vectors is, by definition, in $\encBall$ and thus a $2$-approximation of $\trueBar$. Any centroid can be at a distance of at most $2\radiusEncBall$ away from $\trueBar$. Hence, after the first round of the algorithm, $\max_{u,v\in V_{\mathrm{true}}^{(1)}}\distance(u,v)\leq 2\radiusEncBall$.

    From the convergence, we have 
        $\max_{u,v\in V_{\mathrm{true}}^{(r+1)}} \distance(u,v) \leq \frac{2t}{n-t}\max_{u,v\in V_{\mathrm{true}}^{(r)}} \distance(u,v), \forall r\geq 1.$
    Further, observe that, either $\mda_i^{(r)} = V_{\mathrm{true}}^{(r-1)}$, or $\text{diam}(\mda_i^{(r)})\leq \max_{u,v\in V_{\mathrm{true}}^{(r-1)}} \distance(u,v)$ since $V_{\mathrm{true}}^{(r-1)} \subseteq V_i^{(r-1)}$.

    Now, we can bound $\distance(v_i^{(r-1)}, v_i^{(r)})$ using $\max_{u,v\in V_{\mathrm{true}}^{(r-1)}}\distance(u, v)$. 
    Vector $v_i^{(r-1)}$ is in the set of vectors that will be considered for the computation of $\mda_i^{(r)}$ during round $r$. 
    If $\mda_i^{(r)} = V_{\mathrm{true}}^{(r-1)}$, then the new vector $v_i^{(r)}$ computed by node $i$ will be the centroid of vectors in $\mda_i^{(r)}$ and hence be at most $\text{diam}(\mda_i^{(r)}) = \max_{u,v\in V_{\mathrm{true}}^{(r-1)}}\distance(u, v)$ away from $v_i^{(r-1)}$, since $v_i^{(r-1)}\in \mda_i^{(r)}$. 
    Otherwise, $\mda_i^{(r)}$ contains at most $t$ $\byzadj$ vectors, and at least $n-2t$ vectors from $V_ {\mathrm{true}}^{(r-1)}$. 
    The new vector computed by node $i$ is still the centroid of vectors of $\mda_i^{(r)}$, however $v_i^{(r-1)}\notin \mda_i^{(r)}$. 
    Hence, since $t<n/4$, 
    \begin{align*}
        \distance(v_i^{(r-1)}, v_i^{(r)})&\leq 
        \max_{u,v\in V_{\mathrm{true}}^{(r-1)}}\distance(u, v) \\
        &\qquad\qquad\qquad+ \distance \left(\centroidFunction\left(V_{\mathrm{true}}^{(r-1)}\cap \mda_i^{(r)}\right), \centroidFunction(\mda_i^{(r)})\right)\\
        &\leq \left(1+\frac{t}{n-t}\right)\max_{u,v\in V_{\mathrm{true}}^{(r-1)}}\distance(u, v)
    \end{align*}
    for $r\ge2$. Finally, 
    \begin{align*}
        \sum_{r=1}^\infty\distance(v_i^{(r-1)}, v_i^{(r)})
        &\leq 2\radiusEncBall + \sum_{r=2}^\infty \left(1+\frac{t}{n-t}\right)\max_{u,v\in V_{\mathrm{true}}^{(r-1)}}\distance(u, v)\\
        &\leq 2\radiusEncBall + \left(1+\frac{t}{n-t}\right)  \sum_{r=2}^\infty \max_{u,v\in V_{\mathrm{true}}^{(r-1)}}\distance(u, v)\\
        & \leq 2\radiusEncBall + \frac{n}{n-t}  \sum_{r=2}^\infty \left(\frac{2t}{n-t}\right)^{r}\max_{u,v\in V_{\mathrm{true}}^{(1)}}\distance(u, v)\\
        &\leq 2\radiusEncBall +\frac{n}{n-t} \sum_{r=2}^\infty \left(\frac{2t}{n-t}\right)^{r}\cdot 2\radiusEncBall \\
        &\leq \left( 2+\frac{4t^2n}{(n-t)^2(n-3t)}\right)\radiusEncBall \\
        &<\frac{34}{9}\cdot\radiusEncBall<3.8\cdot\radiusEncBall.
    \end{align*}

    Observe that this bound tends to $2$ as $t$ tends to $0$, meaning that a better approximation can be achieved with fewer Byzantine nodes.

\end{proof}

\begin{restatable}{lemma}{mdaNoBox}\label{lem:mda_no_box}
    \Cref{alg:mda approach} does not satisfy box validity.
\end{restatable}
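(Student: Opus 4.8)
The plan is to build a single admissible input on which \Cref{alg:mda approach} terminates with a correct output vector lying strictly outside the trusted box $\tb$; since box validity is a universal statement, one counterexample suffices. I would carry out the whole construction on a line, realized as the first coordinate of $\mathbb{R}^d$ with all other coordinates pinned to $0$ for every node, so that $\tb$ is a segment in that coordinate and it is enough to make the first coordinate of the output escape the segment. The driving idea is to let the $t$ Byzantine parties form an extremely tight cluster located just outside $\tb$ while the $n-t$ correct inputs are spread far apart, so that the minimum-diameter subset of size $n-t$ selected by $\mda$ is forced to consist of $n-t-1$ correct points together with Byzantine points, and its centroid is pulled out of $\tb$.

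Concretely, I would put one correct ``outlier'' at $-1$, the other $n-t-1$ correct nodes at the common point $0$, and the $t$ Byzantine parties at the distinct offsets $\varepsilon/2,\varepsilon/4,\dots,\varepsilon/2^{t}$ for a small $\varepsilon>0$; then the trusted box, restricted to the first coordinate, is $[-1,0]$. First I would observe that any size-$(n-t)$ subset of the $n$ received vectors either contains the outlier, and hence has diameter at least $1-\varepsilon/2$, or avoids it, in which case it contains at least one zero (here $n>4t$ is used: there are only $n-t-1<n-t$ zeros and $t$ Byzantine points, so at least $n-2t\ge 1$ zeros are forced into the subset), so that every Byzantine point it includes contributes its full distance to that zero to the diameter. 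From this one gets that the unique size-$(n-t)$ subset of diameter below $\varepsilon/2^{\,t-1}$ is the one made of the $n-t-1$ zeros plus the single Byzantine point $\varepsilon/2^{t}$. Consequently $\mda(M_i,n-t)$ is uniquely this set, and the new vector every correct node adopts in line~\ref{alg:mda approach:line:decision} of round~$1$ is its centroid $c=\varepsilon/(2^{t}(n-t))>0$, which already lies outside $[-1,0]$.

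It then remains to propagate $c$ unchanged to termination. Using reliable broadcast, with the adversary following the protocol honestly on the prescribed inputs so that its vectors reach every correct node, all correct nodes receive the same $n$ first-round messages, hence compute the same $\mda$ and hold the same vector $c$ after round~$1$. From round~$2$ on, the $n-t$ correct vectors are all equal to $c$, so a diameter-$0$ subset of size $n-t$ exists, every $\argmin$ in the definition of $\mda$ has diameter $0$, and its centroid equals $c$; thus the correct vectors stay at $c$ for all remaining rounds and \Cref{alg:mda approach} outputs $c\notin\tb$, which is precisely a violation of box validity. The step I expect to be the main obstacle is pinning down the $\mda$ selection in round~$1$: one must show that the Byzantine-heavy set beats \emph{every} all-correct set regardless of how ties in the $\argmin$ are resolved, which is exactly why the Byzantine parties are placed at geometrically decreasing offsets rather than at a common point (for $t=1$ a single offset already makes the minimizer unique); a secondary subtlety, easily handled, is guaranteeing that the adversary's vectors are actually delivered to all correct nodes so that they all compute the same $\mda$.
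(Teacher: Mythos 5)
Your proof is correct and follows essentially the same strategy as the paper's: exhibit a concrete input where the minimum-diameter subset of size $n-t$ is forced to contain a tight Byzantine cluster sitting just outside the trusted box together with the bulk of the correct points, so that its centroid leaves $\tb$, and then note that the output stays there until termination. The only cosmetic difference is that the paper places the Byzantine cluster orthogonally to the segment spanned by the correct inputs (escaping in a second coordinate) whereas you place it collinearly past an endpoint, and your geometrically decreasing offsets plus the explicit round-$2$-onward propagation make the argmin uniqueness and termination argument slightly more airtight than the paper's sketch.
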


\begin{proof}
    Let $v_0=(0,\dots,0)$ and $v_1=(1,0\ldots,0)$ be the possible input vectors of the correct nodes, where $n-2t$ nodes hold the vector $v_0$ and $t$ nodes hold $v_1$. Box validity is satisfied if the final vectors lie on the line between $v_0$ and $v_1$. Let the $t$ \byz parties now pick a vector at $(0,0.999, 0,\dots, 0)$ as its input. Note that the \byz vectors will be chosen together with vectors at $v_0$ by the $\mda$. The centroid of these vectors lies outside of the box of the correct input vectors. By repeating this procedure over several rounds, the \byz party can make sure that the correct nodes converge outside of the box.  
\end{proof}

\begin{restatable}{lemma}{lowerBoundSyncStrongVal}\label{lem:lowerBound-sync-strongVal}
    In the synchronous setting, the approximation ratio of the true centroid that can be achieved by any algorithm satisfying the strong validity property is at least 2.
\end{restatable}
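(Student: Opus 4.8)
The plan is to prove the bound by a standard indistinguishability argument: I would exhibit a single admissible input on which the strong validity condition forces the output of a correct node to sit at distance exactly $2\,\radiusEncBall$ from the true centroid. The leverage is that a \byzadj node may behave honestly, so one underlying execution can be relabelled in two ways --- one in which strong validity applies, and one in which the true centroid is pushed far from the forced output.

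Concretely, I would work in $\mathbb{R}^1$ (the construction lifts verbatim to any dimension), put $v_0 = 0$ and $v_1 = 1$, and partition the $n$ nodes into sets $A, B, C$ with $|A| = n - 2t$ and $|B| = |C| = t$; this is possible since $t < n/3$. Let $E$ be the honest execution in which every node of $A \cup B$ runs the protocol on input $v_0$ and every node of $C$ runs it on input $v_1$; since $E$ has no faults, reliable broadcast behaves perfectly and the transcript of $E$ is well defined. Now relabel this transcript as two admissible inputs: in $\sigma_1$ the nodes of $A \cup B$ are correct (input $v_0$) and the nodes of $C$ are \byzadj but act exactly as honest $v_1$-nodes; in $\sigma_2$ the nodes of $A$ (input $v_0$) and $C$ (input $v_1$) are correct while the nodes of $B$ are \byzadj with input $v_0$ but act exactly as honest $v_0$-nodes. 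In both cases the \byzadj nodes only ever broadcast honestly (the same value to everyone), so the consistency guarantee of reliable broadcast is never violated and every correct node observes exactly the messages of $E$; in particular every node of $A$ has the same transcript in $\sigma_1$ and $\sigma_2$, and a deterministic algorithm returns the same output in both. In $\sigma_1$ all correct nodes start from $v_0$, so strong validity (\Cref{def:multidim_agreement_properties}) forces every node of $A$ to output $v_0$; hence the same holds in $\sigma_2$.

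It then remains to evaluate the approximation ratio on the input $\sigma_2$. The correct inputs are $n-2t$ copies of $v_0$ and $t$ copies of $v_1$, so $\trueBar = \tfrac{t}{n-t}$ and $\distance(v_0, \trueBar) = \tfrac{t}{n-t}$. The full multiset of $n$ input vectors is $n-t$ copies of $v_0$ and $t$ copies of $v_1$, hence $\barSet = \{ \tfrac{k}{n-t} : 0 \le k \le t \}$, a set of collinear points of diameter $\tfrac{t}{n-t}$, so $\radiusEncBall = \tfrac{t}{2(n-t)}$. Thus $\distance(v_0, \trueBar)/\radiusEncBall = 2$, and since a $c$-approximation (\Cref{def:approx}) must meet its bound on every admissible input, no algorithm with strong validity can have $c < 2$. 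The point worth highlighting is that declaring $B$ \byzadj-but-honest, rather than correct, is precisely what sharpens the ratio from $2 - 2t/n$ to $2$: it removes the $|B|$ copies of $v_0$ from the average defining $\trueBar$ without changing the multiset of inputs that defines $\barSet$.

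The hard part will be making fully rigorous the claim in the second paragraph, namely that $\sigma_1$ and $\sigma_2$ induce identical transcripts at the nodes of $A$. The cleanest route is an induction on the round number, showing that the messages delivered to $A$ coincide in $E$, $\sigma_1$ and $\sigma_2$ simultaneously; the only genuinely model-dependent point is to check that a \byzadj node which faithfully replays honest code --- including all the echo/ready traffic of Bracha's reliable broadcast --- is accepted by the correct nodes exactly as an honest node would be, which holds because such a node sends identical values to everyone. Everything else is an elementary computation.
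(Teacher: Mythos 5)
Your proposal is correct and uses essentially the same construction as the paper: $n-t$ inputs at $0$ and $t$ at $e_1$, with strong validity forcing the output to $0$ because the $e_1$-nodes could be Byzantine, while in the actual input $t$ of the $0$-nodes are Byzantine, yielding $\trueBar = \frac{t}{n-t}$ and $\radiusEncBall = \frac{t}{2(n-t)}$. Your version merely makes explicit the indistinguishability argument (relabelling one fault-free transcript two ways) that the paper leaves implicit.
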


\begin{proof}
    Consider the setting where $n-t$ input vectors are $0$ and $t$ input vectors have $0$ in all coordinates but $1$ in the first one, i.e.\ $e_1 = \left(1, 0, \dots, 0\right)$. 
    Consider an algorithm that satisfies the strong validity condition. Then, its convergence vector has to be $0$. Indeed, the \byz could have the $t$ input vectors  $e_1$, in which case all true vectors would have the same input vector $0$ and the algorithm would have to converge to it by definition of the strong validity condition.  

    However, the \byz can all have input vectors at $0$, in which case the true centroid is $\left(\frac{n-2t}{n-t}\cdot 0 + \frac{t}{n-t}\cdot 1, 0, \dots, 0\right) = \left(\frac{t}{n-t}, 0, \dots, 0\right)$, which is at distance $\frac{t}{n-t}$ of the true centroid $0$. 
    Since $0$ and $\left(\frac{t}{n-t}, 0, \dots, 0\right)$ are the two furthest elements of $\barSet$, $\radiusEncBall = \frac{1}{2}\cdot \frac{t}{n-t}$. Hence, the approximation ratio in this specific case is $2$.
\end{proof}

\subsection{A \texorpdfstring{$\mathbf{2\sqrt{d}}$}{2sqrtd}-approximation of the centroid with box validity}\label{sec:synch_sqrt_d_approximation}

In this section, we relax the convex validity from~\cite{mendes2015multidimensional} to the box validity condition. Though this condition is looser than the convex validity, it is still way more restrictive than the strong validity condition. This allows us to improve the approximation ratio as well as the resilience of the $\safe$ approach from Section~\ref{sec:bad_approx_safe_area}, but also allows us to satisfy a better validity condition than the ones satisfied by algorithms in \Cref{sec:synch_one_approximation} and \Cref{sec:synch_const_approximation}. This is shown by \Cref{thm:syncBoxAlg}. 

\begin{figure}[tbh]
    \centering
    \includegraphics[width=0.8\textwidth]{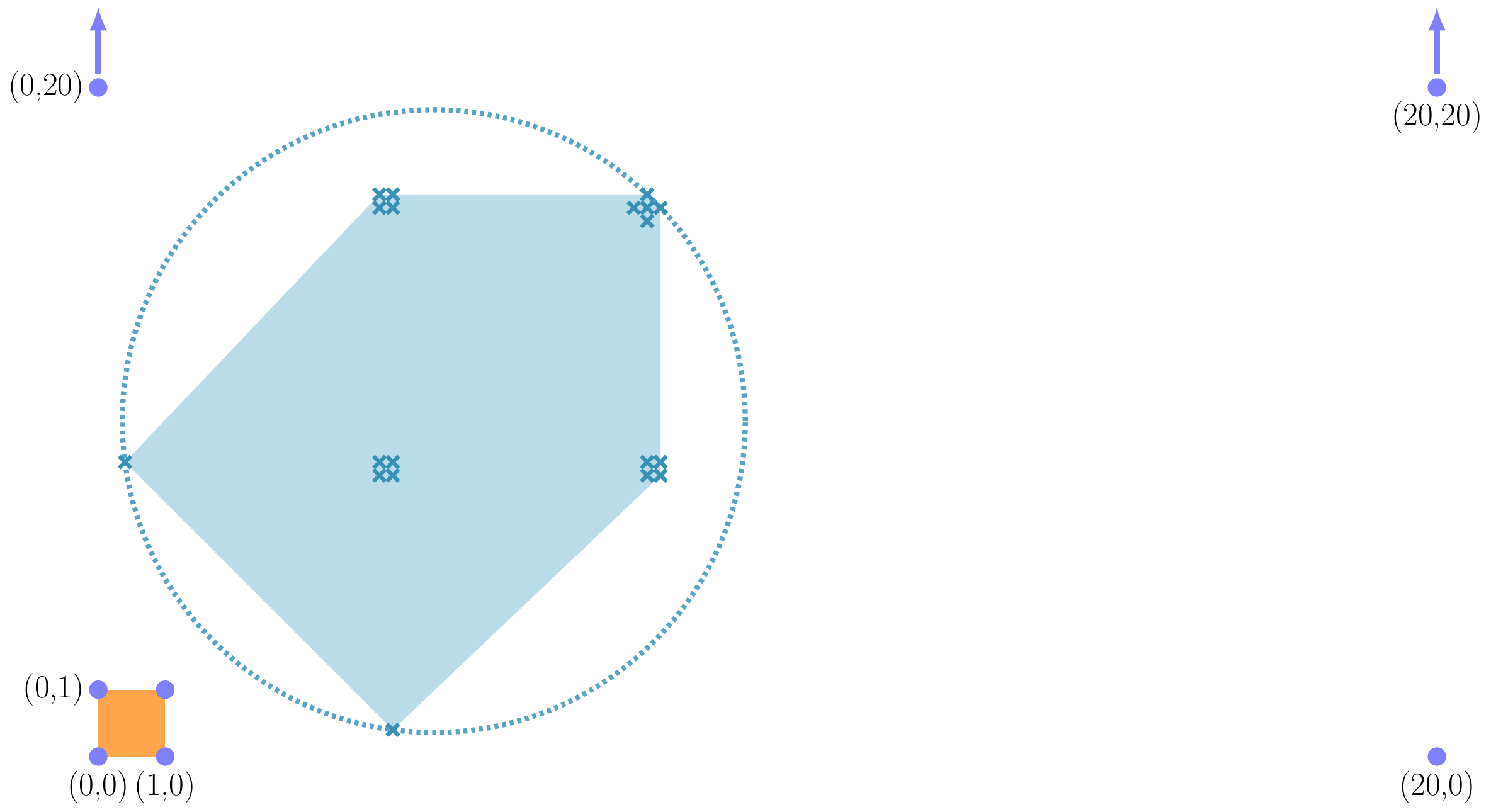}
    \caption{An example of initial configuration for the $2$-dimensional agreement problem. Solid circles represent the $n=7$ input vectors, including up to $t=2$ \byzadj vectors. The crosses represent all possible centroid positions, with the blue area showing their convex hull.  
    The orange box will be contained in $\tbi$ for all $i\in [n]$, and contains the $\safe$.  
    The blue circle is the area where an optimal approximation of the centroid can be achieved.  
    }
    \label{fig:intro-example}
\end{figure}

In Figure~\ref{fig:intro-example}, we show an example where neither $\barSet$, nor the smallest enclosing ball of $\barSet$ intersect $\tbi$. Thus, no node can locally compute an optimal approximation of $\trueBar$. Note that the smallest box around $\barSet$ would intersect $\tbi$ in the same example. We formally define this box as follows:

\begin{definition}[Centroid box]
    The centroid box $\bb$ is the smallest box containing $\barSet$, and the local centroid box $\bbi$ of node $i \in [n]$ is the smallest box containing $\barSet(i)$. 
\end{definition}

In the following, we make the observation that the local centroid box can be indeed computed in polynomial time locally:

\begin{observation}
$\bbi$ can be computed in polynomial time as follows: for each coordinate, compute the average of the $n-t$ smallest values and the average of the $n-t$ largest values. This yields the interval defining the box over this coordinate.     
\end{observation}

Next, we define the midpoint function that is used in the algorithm.

\begin{definition}[Midpoint]
    The $\midpointFunction$ of a box $X$ is defined as 
$$\midpointFunction(X) = \Bigl(\midpointFunction\bigl(X[1]\bigr), \dots, \midpointFunction\bigl(X[d]\bigr)\Bigr), $$
where $X[k]$ is the set containing all $k^{th}$ coordinates of vectors of the set $X$, and the one-dimensional $\midpointFunction$ function returns the midpoint of the interval spanned by a finite multiset of real values.
\end{definition}

\Cref{alg:box approach} solves multidimensional approximate agreement with box validity. The idea of this algorithm is to compute a box containing trusted vectors and a box containing all possible centroids. In every round, the new input vector is set to be the midpoint of the intersection of the two boxes. The presented algorithm has optimal resilience of $t<n/3$ and reaches a $2\sqrt{d}$ approximation of the true centroid, while using polynomial local computation time. 
Thus, the algorithm outperforms the $\safe$ approach in resilience, consensus quality, and computation time. The termination round is defined by each node locally in the first round of the algorithm. If some node terminates earlier than others, the other nodes can use its last sent vector for all following rounds.
In order to analyze the convergence of the algorithm, we will define the convergence rate with respect to the longest edges of the boxes:

\begin{definition}[Length of the longest edge (\longestEdge)]
     The length of the longest edge of a box $B$ is defined as
         $$\longestEdge(B) = \max_{\substack{k\in[d]\\ u, v\in B}} \bigl|u[k] - v[k]\bigr|.$$    
 \end{definition}

Note that $\longestEdge(\tb) = \max_{k\in[d]; i, j\in I_{\text{correct}}} \bigl|v_i[k] - v_j[k]\bigr|,$ where $I_{\text{correct}}$ denotes the set of indices of the true vectors, and the longest edge thus corresponds to the diameter of the true vectors.
The following theorem shows the main properties of the algorithm.

\begin{algorithm}[tbh]
\caption{Synchronous approximate agreement with box validity and resilience $t<n/3$}
\label{alg:box approach}
\begin{algorithmic}[1]
    \State Let $C$ be a large constant. Each node $i \in [n]$ with input vector $v_i$ executes the following code in each round $r=1,2,\ldots,C\cdot\log_2 \bigl(\frac{1}{\varepsilon}\cdot\longestEdge\left(\tbi\right)\bigr)$:
        \Indent
            \State Broadcast $v_i$ reliably to all nodes
            \State Reliably receive up to $n$ messages $M_i = \{v_j, j\in [n]\}$
            \State Compute $\tbi$ from $M_i$ by excluding $|M_i|-(n-t)$ values on each side 
            \State Compute $\bbi$ from $M_i$ 
            \State Set $v_i$ to $\midpointFunction(\tbi\cap \bbi)$.\label{algline: intersection-tbi-bbi}
        \EndIndent
\end{algorithmic}
\end{algorithm}

\begin{theorem}\label{thm:syncBoxAlg}
    \Cref{alg:box approach} achieves a $2\sqrt{d}$-approximation of $\trueBar$ in the synchronous setting. After $O\Bigl(\log_2 \bigl(\frac{1}{\varepsilon}\cdot\longestEdge\left(\tb\right)\bigr)\Bigr)$ synchronous rounds, the vectors of the correct nodes satisfy multidimensional approximate agreement with box validity. The resilience of the algorithm is $t<n/3$. 
\end{theorem}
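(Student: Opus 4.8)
The plan is to establish the three guarantees of \Cref{thm:syncBoxAlg} --- box validity, $\varepsilon$-agreement with the claimed round bound, and the $2\sqrt{d}$-approximation --- in that order, after first verifying that line~\ref{algline: intersection-tbi-bbi} is well-defined, i.e.\ that $\tbi \cap \bbi \neq \emptyset$ in every round. For the latter I would argue coordinate-wise: on each coordinate $k$, $\tbi[k]$ is an interval containing the $(n-2t)$ middle values among the received $k$-th coordinates, and $\bbi[k]$ runs between the average of the $n-t$ smallest and the average of the $n-t$ largest of those same values. Since the smallest-$n-t$ average cannot exceed the $(n-t)$-th order statistic and the largest-$n-t$ average cannot fall below the $(m_i-(n-t)+1)$-th order statistic, both averages lie inside the trimmed interval $\tbi[k]$ (for the relevant inequalities one must use $t<n/3$ so that $n-2t \geq 1$ and the trimmed interval is nonempty); hence $\tbi[k]\cap\bbi[k]$ is a nonempty subinterval and the product over $k$ is a nonempty box.

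\textbf{Box validity} follows from the already-recorded facts that $\tbi \subseteq \tb$ for every correct $i$ (stated just before \Cref{sec:synch_sqrt_d_approximation}) together with an invariant: if every correct node's vector lies in $\tb$ at the start of a round, then the box $\tb$ is unchanged in the coordinate-wise sense relevant to correct nodes, and each new vector $\midpointFunction(\tbi\cap\bbi) \in \tbi \subseteq \tb$. Since the input vectors trivially lie in $\tb$, box validity holds at every round, in particular at termination. One subtlety to address: after the first round the ``trusted box'' computed by nodes is really a box around the \emph{current} correct vectors, so I would phrase the invariant as ``the set of correct vectors stays inside the original $\tb$'', which is monotone because the new correct vectors are confined to sub-boxes of the previous $\tb$.

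\textbf{Convergence} I would prove by showing $\longestEdge$ of the box enclosing all correct vectors contracts by a constant factor per round. Fix a coordinate $k$ and two correct nodes $i,j$. Both $v_i^{\text{new}}[k]$ and $v_j^{\text{new}}[k]$ are midpoints of $\tbi[k]\cap\bbi[k]$ and $\tbj[k]\cap\bbj[k]$ respectively. The key point is that $\tbi[k]$ and $\tbj[k]$ are trimmed intervals drawn from overlapping multisets (the correct $k$-th coordinates are common to both, and there are $\geq n-t$ of them while each node discards the extreme $m_i-(n-t)$ values), so standard trimmed-interval arguments --- of the kind used in one-dimensional approximate agreement with $t<n/3$ --- give $|v_i^{\text{new}}[k]-v_j^{\text{new}}[k]| \leq \tfrac12 \longestEdge(\tb^{(r)})$ where $\tb^{(r)}$ is the box enclosing the correct vectors at round $r$; the $\bbi[k]$ constraint only shrinks the feasible interval further and cannot hurt. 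Iterating, $\longestEdge(\tb^{(r)}) \leq 2^{-r}\longestEdge(\tb)$, giving the stated $O(\log_2(\tfrac1\varepsilon \longestEdge(\tb)))$ bound after passing from coordinate-wise bounds to Euclidean distance via a $\sqrt d$ factor absorbed into the constant $C$.

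\textbf{The $2\sqrt{d}$-approximation} is where the real work lies, and I expect it to be the main obstacle. The strategy: after round~1 every correct node's vector lies in $\bbi \subseteq \bb$ (since the midpoint of $\tbi \cap \bbi$ is in $\bbi$), and by subsequent contraction all correct vectors --- and hence the limit output $O_{\mathcal A}$ --- stay in $\bb$. So it suffices to bound $\distance(x,\trueBar)$ for $x \in \bb$. By \Cref{obs: truebar in enc ball}, $\trueBar \in \convexHull(\barSet) \subseteq \bb$, so both $x$ and $\trueBar$ lie in $\bb$, giving $\distance(x,\trueBar) \leq \operatorname{diam}(\bb) \leq \sqrt d \cdot \longestEdge(\bb)$. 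It then remains to relate $\longestEdge(\bb)$ to $\radiusEncBall$: I would show $\longestEdge(\bb) \leq 2\radiusEncBall$, i.e.\ the longest edge of the bounding box of $\barSet$ is at most the \emph{diameter} of $\barSet$ (which is $2\radiusEncBall$ by definition of the smallest enclosing ball in the degenerate 1-dimensional-edge direction --- careful, this needs the observation that the extent of $\barSet$ along any coordinate axis is at most its diameter, which is elementary). Combining, $\distance(x,\trueBar) \leq \sqrt d \cdot 2\radiusEncBall = 2\sqrt d\,\radiusEncBall$, as claimed. The delicate points I anticipate: (i) making sure the "all correct vectors remain in $\bb$" claim survives the fact that $\bbi$ is only a \emph{local} under-approximation of $\bb$ and that after round~1 the correct vectors are no longer the original inputs (so one must argue the enclosing box only shrinks, using box validity-style monotonicity); and (ii) the inequality $\longestEdge(\bb) \le 2\radiusEncBall$, which compares an $\ell_\infty$-type extent to an $\ell_2$ radius and must be done coordinate-by-coordinate.
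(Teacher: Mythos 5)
Your proposal follows the same four-step decomposition as the paper's proof: nonemptiness of $\tbi\cap\bbi$ via the coordinate-wise comparison of the trimmed order statistics with the averages of the $n-t$ smallest/largest values (the paper's \Cref{lem:bb-and-tb-intersect} exhibits the trimmed mean as an explicit witness, but the two one-sided inequalities you write down are exactly the interval-overlap conditions it relies on); box validity via $\tbi^r\subseteq\tb^r\subseteq\tb$; convergence via coordinate-wise contraction of the trusted intervals; and the approximation bound $\distance(x,\trueBar)\le \sqrt{d}\cdot\longestEdge(\bb)\le\sqrt{d}\cdot\max_{x,y\in\barSet}\distance(x,y)\le 2\sqrt{d}\,\radiusEncBall$ using $\trueBar\in\convexHull(\barSet)\subseteq\bb$, which is precisely \Cref{lem:syncBoxAlgApprox}. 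The two subtleties you flag at the end are the right ones and are resolved as you indicate (correct vectors stay in the original $\bb$ for $r\ge 2$ because $\tb^r\subseteq\bb$ once all correct vectors lie in $\bb$). One misstatement: your two inequalities do \emph{not} imply that ``both averages lie inside $\tbi[k]$'' (the average of the $n-t$ smallest values can lie far below the trimmed interval); they imply only that the intervals overlap, which is all you need.

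The one step that is genuinely gapped is ``the $\bbi[k]$ constraint only shrinks the feasible interval further and cannot hurt.'' Shrinking an interval moves its midpoint, so the $\tfrac12$-contraction for $\midpointFunction(\tbi[k])$ does not automatically transfer to $\midpointFunction(\tbi[k]\cap\bbi[k])$, which is what the algorithm actually computes. Concretely, with $n=4$, $t=1$ and correct coordinate values $0,0,10$, a \byzadj node sending $+100$ to node $i$ and $-100$ to node $j$ yields $\tbi[k]=[0,10]$ and $\bbi[k]=[10/3,110/3]$, so node $i$ moves to $20/3$, while node $j$ (with $\tb_j[k]=[0,0]$) moves to $0$: a gap of $20/3>5=\tfrac12(t_u-t_\ell)$. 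The per-round factor is therefore not $\tfrac12$, and one must argue separately that it remains a constant bounded away from $1$ (this only affects the constant in the round bound, not the stated asymptotics). To be fair, the paper's own proof of \Cref{lem: boxAlgCorrect} bounds $\bigl|\midpointFunction(\tbi^r[k])-\midpointFunction(\tb_j^r[k])\bigr|$ rather than the distance between the midpoints of the intersections, so it glosses over exactly the same point; your write-up reproduces the paper's argument, including this weak spot, rather than introducing a new one.
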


In the following, we prove the correctness of \Cref{thm:syncBoxAlg} in several steps. 
We start by proving that \Cref{alg:box approach} is well-defined, by showing that $\tbi$ and $\bbi$ intersect if $t<n/2$. 
This way, all correct nodes are able to choose a new input vector in line~\ref{algline: intersection-tbi-bbi} of \Cref{alg:box approach}.

\begin{lemma}\label{lem:bb-and-tb-intersect}
    For all correct nodes $i$, the local trusted box and the local centroid box have a non-empty intersection: $\bbi\cap\tbi\neq \emptyset$.
\end{lemma}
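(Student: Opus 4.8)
Since both $\tbi$ and $\bbi$ are Cartesian products of intervals (one per coordinate), it suffices to prove the one-dimensional statement coordinate by coordinate: for each $k\in[d]$, the interval $\tbi[k]$ and the interval $\bbi[k]$ overlap. I would therefore fix a coordinate $k$ and work only with the $m_i$ scalar values $v_1[k]\le v_2[k]\le\cdots\le v_{m_i}[k]$ received by node $i$, sorted as in Notation~\ref{obs: reordering}. Write $p\coloneqq m_i-(n-t)$ for the number of trimmed values on each side, so that $\tbi[k]=[v_{p+1}[k],v_{n-t}[k]]$ and $\bbi[k]=[\,\text{avg of the }n-t\text{ smallest}\,,\ \text{avg of the }n-t\text{ largest}\,]$.

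The plan is to show that the left endpoint of $\bbi[k]$, call it $\ell$ (the average of the $n-t$ smallest received values in coordinate $k$), satisfies $\ell\le v_{n-t}[k]$, and symmetrically that the right endpoint of $\bbi[k]$, call it $r$ (the average of the $n-t$ largest), satisfies $r\ge v_{p+1}[k]$; together these two inequalities force $[\ell,r]\cap[v_{p+1}[k],v_{n-t}[k]]\ne\emptyset$ as soon as $[\ell,r]$ and $[v_{p+1}[k],v_{n-t}[k]]$ are genuine intervals, i.e.\ $\ell\le r$ (immediate, since the $n-t$ smallest values are all $\le$ the $n-t$ largest) and $v_{p+1}[k]\le v_{n-t}[k]$ (needs $p+1\le n-t$, i.e.\ $m_i-(n-t)+1\le n-t$, i.e.\ $m_i\le 2(n-t)-1$; since $m_i\le n$ this holds exactly when $t<n/2$, which is the hypothesis invoked). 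For the first inequality, note $\ell=\frac1{n-t}\sum_{j=1}^{n-t}v_j[k]$ is an average of $n-t$ values each of which is at most $v_{n-t}[k]$ (the largest index in the average is $n-t$), hence $\ell\le v_{n-t}[k]$. For the second, $r=\frac1{n-t}\sum_{j=p+1}^{m_i}v_j[k]$ is an average of $n-t$ values each at least $v_{p+1}[k]$ (the smallest index in the average is $p+1$), hence $r\ge v_{p+1}[k]$.

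Concretely I would present it as: $v_{p+1}[k]\le \ell \le v_{n-t}[k]$ OR at least $v_{p+1}[k]\le r$ and $\ell\le v_{n-t}[k]$, and then observe that a common point exists — for instance $\max\{\ell,\,v_{p+1}[k]\}$ lies in both intervals, since it is $\ge$ both left endpoints by construction, is $\le v_{n-t}[k]$ because $\ell\le v_{n-t}[k]$ and $v_{p+1}[k]\le v_{n-t}[k]$, and is $\le r$ because $\ell\le r$ and $v_{p+1}[k]\le r$. Doing this for every $k$ gives a point in $\bigcap_k(\tbi[k]\cap\bbi[k]) = \tbi\cap\bbi$.

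The only real subtlety — and the step I expect to be the main obstacle to state cleanly — is the bookkeeping around the index $p=m_i-(n-t)$ and making sure the averaging windows $\{1,\dots,n-t\}$ and $\{p+1,\dots,m_i\}$ really are the windows defining $\bbi[k]$ under the coordinate-wise re-sorting, and that $p+1\le n-t$ (equivalently $m_i<2(n-t)$, guaranteed by $m_i\le n$ and $t<n/2$) so that $\tbi[k]$ is nondegenerate; everything else is a one-line averaging argument. Note this also silently uses that $\bbi$ is exactly the axis-aligned bounding box of $\barSet(i)$ whose coordinate intervals are computed by the trimmed-average formula of the Observation preceding Definition~\ref{def:approx}'s box discussion, so I would cite that Observation when identifying $\ell$ and $r$.
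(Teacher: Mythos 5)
Your proposal is correct and follows essentially the same route as the paper's proof: both argue coordinate by coordinate, identify the endpoints of $\bbi[k]$ as the averages of the $n-t$ smallest and $n-t$ largest received values (each realized by an actual centroid in $\barSet(i)$), and exhibit a point lying in both $\bbi[k]$ and $\tbi[k]=\bigl[v_{m_i-(n-t)+1}[k],v_{n-t}[k]\bigr]$ before taking the Cartesian product over $k$. The only cosmetic difference is the witness: the paper uses the trimmed mean $\frac{1}{2(n-t)-m_i}\sum_{j=m_i-(n-t)+1}^{n-t}v_j[k]$ (which it later reuses to relate the algorithm to $RB\!-\!TM$ in \Cref{cor:synch-RB-TM}), whereas you use an endpoint-interleaving argument with $\max\{\ell, v_{m_i-(n-t)+1}[k]\}$ — both are valid under the same $t<n/2$ bookkeeping you correctly identify.
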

\begin{proof}
Let $\tbi[k]$ and $\bbi[k]$ be the (coordinate-wise) projections of $\tbi$ and $\bbi$ on the $k^{th}$ coordinate, respectively.
We prove the statement of the lemma for some correct node $i$ (note that this proof works for every correct node).

For each coordinate $k\in [d]$, let $v_j[k]$ denote the $k^{th}$ coordinate of the vector received from node $j$. 
We sort the values $v_j[k],\forall k\in [n]$ as in \Cref{obs: reordering}.
Recall that node $i$ receives $m_i\geq n-t$ vectors and that there are $m_i$ values in each coordinate. Because we remove $m_i-(n-t)$ on each side of the interval for each coordinate $k$, each $\tbi[k]$ contains $2(n-t)-m_i\geq n-2t$ values. Our goal is to construct a point that is both in $\tbi$ and $\bbi$. 

Consider the averaged sum $c_\ell$ of the $n-t$ smallest received values in coordinate $k$: $c_\ell=\frac{1}{n-t}\sum_{j=1}^{n-t}v_j[k]$. 
Observe that there exists a subset of $n-t$ vectors with this averaged sum in their $k^{th}$ coordinate. The corresponding vector is the centroid of the $n-t$ nodes with these input vectors. It is contained in $\bbi$ and $c_\ell$ is its orthogonal projection on coordinate $k$. 
Similarly, there exists an averaged sum $c_u = \frac{1}{n-t}\sum_{j=m_i-(n-t)+1}^{m_i}v_j[k]\in \barSet[k]$ formed by the $n-t$ largest values in coordinate $k$, that corresponds to the $k^{th}$ coordinate of some centroid in $\bbi$. This definition makes sure that $[c_\ell, c_u]\subseteq \bbi[k]$.

Let $V_k = \frac{1}{2(n-t)-m_i}\sum_{j=m_i-(n-t)+1}^{n-t}v_j[k]$ be the average of values, where we truncate the $m_i-(n-t)$ smallest and largest values in this coordinate. This is possible since $t<n/3$ and $m_i\geq n-t$. Since the values $v_j[k]$ are ordered, $\frac{1}{2(n-t)-m_i}\sum_{j=m_i-(n-t)+1}^{n-t}v_j[k] \geq c_\ell$ and $\frac{1}{2(n-t)-m_i}\sum_{j=m_i-(n-t)+1}^{n-t}v_j[k] \leq c_u$. 
Therefore,  $V_k  \in [c_\ell, c_u]\subseteq \bbi[k]$.  
Also, $V_k \in \bigl[v_{m_i-(n-t)+1}[k], v_{n-t}[k]\bigr] = \tbi[k]$, since it is the mean of values $v_{m_i-(n-t)+1}[k]$ to $v_{n-t}[k]$.

Now, consider a vector $V$ s.t.\ $\forall k\in [d]$: $V[k] = V_k$. 
By the definitions of $\tbi$ and $\bbi$, since $V[k] \in \tbi[k]\cap \bbi[k], \forall k\in[d]$, this vector will be inside the intersection of the boxes, i.e.\ $V\in\tbi\cap \bbi$. \qedhere
\end{proof}

\begin{lemma}\label{lem: boxAlgCorrect}
    \Cref{alg:box approach} converges.
\end{lemma}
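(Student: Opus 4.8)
The plan is to show that \Cref{alg:box approach} contracts the diameter of the correct vectors by a constant factor each round, so that after logarithmically many rounds the $\varepsilon$-agreement property is reached. First I would set up notation: let $B^{(r)}$ be the trusted box $\tb$ at the start of round $r$, i.e. the smallest axis-parallel box containing all correct vectors in round $r$, and let $\longestEdge(B^{(r)})$ be its longest edge (equivalently the coordinate-wise diameter of the correct vectors). The key observation is that the update in line~\ref{algline: intersection-tbi-bbi} operates coordinate by coordinate, so it suffices to analyze a single coordinate $k$ and bound the shrinkage of the interval of correct values in that coordinate. By \Cref{lem:bb-and-tb-intersect} the intersection $\tbi\cap\bbi$ is non-empty, so every correct node $i$ is able to pick $\midpointFunction(\tbi\cap\bbi)$, and its new $k$-th coordinate lies in $\tbi[k]$.

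The heart of the argument is a per-coordinate contraction estimate. Fix coordinate $k$ and let $[a,b]$ be the projection $\tb[k]$ of the trusted box, so $b-a = \longestEdge$ restricted to coordinate $k$ in the worst case. Since every correct node's value in round $r+1$ lies in its local trusted box $\tbi[k] = [v_{m_i-(n-t)+1}[k], v_{n-t}[k]]$, and each $\tbi[k]$ contains exactly the $n-2t$ ``middle'' correct values together with possibly some Byzantine values sandwiched among them, I would argue that $\tbi[k]$ is contained in $[v^{corr}_{n-2t}[k]$-shifted interval$]$ — more precisely, because the $n-t$ correct values all lie in $[a,b]$ and at most $t$ of the trimmed-off values on each side can be Byzantine, every $\tbi[k]$ is contained in the interval spanned by the $(t+1)$-st smallest and $(t+1)$-st largest \emph{correct} values. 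This already gives a contraction, but to get a clean constant I would instead use the midpoint: since $\midpointFunction(\tbi\cap\bbi)$ is the midpoint of an interval contained in $[a,b]$, all new correct values lie strictly inside $[a,b]$, and a standard argument (as in \cite{ApproximateAgreement,mendes2015multidimensional,DolevFIFObroadcast}) shows that two correct nodes $i,j$ produce values whose difference is at most half the length of the union of their local intervals. Combining over all coordinates, $\longestEdge(B^{(r+1)}) \le \tfrac{1}{2}\longestEdge(B^{(r)})$, or whatever constant falls out of the trimming-and-midpoint bookkeeping.

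Then I would close the loop: iterating the contraction, after $r$ rounds $\longestEdge(B^{(r)}) \le 2^{-r}\longestEdge(\tb)$, so choosing $r = C\log_2(\tfrac{1}{\varepsilon}\longestEdge(\tb))$ with $C$ the constant from the algorithm makes $\longestEdge(B^{(r)}) \le \varepsilon$, hence any two correct output vectors are within Euclidean distance $\varepsilon\sqrt{d}$ — and one rescales $\varepsilon$ or $C$ to get exactly $\varepsilon$. One subtlety I would flag explicitly: different nodes compute different termination rounds because $\longestEdge(\tbi)$ differs across nodes; as the algorithm description notes, a node that terminates early has its last vector reused by the others, so I would remark that the contraction bound still applies (an early-terminating node's frozen value lies in its last $\tbi$, which is contained in $\tb$, so it does not enlarge the correct box).

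The main obstacle I expect is nailing down the exact contraction constant and making the per-coordinate argument airtight in the presence of a variable number $m_i$ of received messages: the local trusted box of node $i$ trims $m_i-(n-t)$ values on each side, so its width depends on $m_i$, and one has to verify simultaneously (a) that $\tbi[k]$ is contained in an interval determined purely by correct values that is at most, say, $\longestEdge(\tb)$ wide, and (b) that the midpoints chosen by any two correct nodes are close — ideally within a fixed fraction of $\longestEdge(\tb)$. Handling the interaction with the centroid box $\bbi$ here is delicate: intersecting with $\bbi$ can only shrink $\tbi$, so it never hurts the agreement bound, but I must make sure it never produces an empty set (already handled by \Cref{lem:bb-and-tb-intersect}) and that the midpoint of the \emph{intersection} still lies in the correct-value-determined interval. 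I would structure the proof to isolate this into one clean lemma on one coordinate and then take the maximum over $k\in[d]$.
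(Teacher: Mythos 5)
Your plan follows essentially the same route as the paper's proof: work coordinate by coordinate, show every local trusted box is contained in the global one, use the fact that all local trusted boxes share common correct values to force overlap, bound the distance between midpoints by half the edge of $\tb$, and iterate the factor-$1/2$ contraction. Two points need fixing before this becomes a proof. First, the sentence claiming that every $\tbi[k]$ is \emph{contained in} the interval spanned by the $(t+1)$-st smallest and $(t+1)$-st largest correct values has the containment backwards: the endpoints of $\tbi[k]$ can reach all the way to the extreme correct values (e.g.\ when all \byzadj values lie on one side), so $\tbi[k]$ in general \emph{contains} an interval of middle correct values while being contained in $\tb[k]$. It is precisely this two-sided sandwich --- left endpoints of both local boxes trapped between the smallest correct value and a common correct value $t_1$, right endpoints trapped between a common correct value $t_2\ge t_1$ and the largest correct value --- that makes the midpoint difference at most $\longestEdge(\tb^r)/2$; your appeal to ``a standard argument'' is exactly this computation, and it only works because the two intervals provably overlap.

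Second, your claim that ``intersecting with $\bbi$ can only shrink $\tbi$, so it never hurts the agreement bound'' is not justified as stated: midpoints of sub-intervals of overlapping intervals can be much farther apart than the midpoints of the original intervals (two disjoint sub-intervals pushed to opposite ends destroy the bound). To salvage the argument one must exhibit a point common to $\tbi[k]\cap\bbi[k]$ across correct nodes, or otherwise re-run the overlap argument for the intersected intervals. You are right to flag this as the delicate spot --- the paper's own proof in fact only bounds $\bigl|\midpointFunction(\tbi^r[k])-\midpointFunction(\tb_j^r[k])\bigr|$ and silently identifies this with the midpoints of the intersections --- so your instinct here is sound, but the one-line dismissal does not close the gap.
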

\begin{proof}

We start by defining $\tb$ and $\tbi$ with respect to rounds $r$ of the algorithm. 
The initial $\tb = \tb^1$ is the smallest box containing all true vectors. Let $\tb^{r+1}$ be the smallest box that contains all the vectors computed by correct nodes in round $r$, which will represent the input in round $r+1$. 
Observe that during a round $r$, each node $i$ computes $\tbi^r\subseteq\tb^{r}$, and picks its new vector inside this box. 

In order to prove convergence, we need to show that the vectors computed in each round are getting closer together.
In particular, we show that for any two correct nodes $i$ and $j$, and the corresponding local boxes $\tbi^r$ and $\bbi^r$ computed in some round $r\geq 1$, the following inequality is satisfied with a small constant $\delta<1$:
$$\bigl|\midpointFunction(\tbi^r\cap\bbi^r) - \midpointFunction(\tb_j^r\cap \bb_j^r)\bigr|\leq \delta\cdot \longestEdge(\tb^{r}).$$
We will show this inequality for $\delta = 1/2$. 

We first show that $\tbi^r\subseteq \tb^{r}$: 
At the beginning of round $r$, $\tb^{r}$ is the smallest box containing all true vectors. Consider the boxes with respect to one coordinate $k\in [d]$. 
If the interval $\tbi^r[k]$ does not contain any \byzadj values, then $\tbi^r[k]\subseteq \tb^{r}[k]$ by definition. This could have either happened if the node only received true vectors, or if the \byzadj values were removed as extreme values when computing the interval $\tbi^r[k]$.
Otherwise, we assume that the interval $\tbi^r[k]$ contains at least one \byzadj value. This implies that, when the $m_i-(n-t)$ smallest and largest values were removed, this $\byzadj$ value remained inside the interval $\tbi^r[k]$, and at least two true values were removed instead (one on each side of the interval) since there are at most $m_i-(n-t)$ \byzadj values in $M_i[k]$. Therefore, $\tbi^r[k]$ is included in an interval bounded by two true values, which is by definition included in $\tb^{r}[k]$.
Since this holds for each coordinate $k$, it also holds that $\tbi^r\subseteq \tb^{r}$. 

Next, we use \Cref{obs: reordering} to sort the received values for each coordinate $k$. 
For node $i$, we define the locally trusted box in coordinate $k$ as $\tbi^r[k] = \bigl[v_{m_i-(n-t)+1}[k], v_{n-t}[k]\bigr]$, and for node $j$ as $\tb_j^r[k] = \bigl[v'_{m_i-(n-t)+1}[k], v'_{n-t}[k]\bigr]$. Let $t_\ell$ denote the smallest and $t_u$ the largest true value in coordinate $k$ (given all true vectors). 
Since $\tbi^r[k]\subseteq\tb^{r}[k]$, $v_{m_i-(n-t)+1}[k]\geq t_\ell$ and $v_{n-t}[k]\leq t_u$. 
Similarly, $v'_{m_i-(n-t)+1}[k]\geq t_\ell$ and $v'_{n-t}[k]\leq t_u$. 
Moreover, since \byz could be inside $\tb$, $\tbi[k]$ and $\tb_j[k]$ could have been computed by removing up to $m_i-(n-t)$ true values on each side. 
Let $t_{1}$ denote the $(m_i-(n-t)+1)^{th}$ true value and $t_{2}$ the $(2n-f-t-m_i)^{th}$ true value in coordinate $k$ (recall there are $n-f$ true values in total), these true values are necessarily inside both $\tbi[k]$ and $\tb_j[k]$. 
Then, $v_{m_i-(n-t)+1}[k]\leq t_{1}$ and $v_{n-t}[k] \geq t_{2}$. 
Similarly, $v'_{m_i-(n-t)+1}[k]\leq t_{1}$ and $v'_{n-t}[k] \geq t_{2}$. 
We can now upper bound the distance between the computed midpoints of the nodes $i$ and $j$: 
\begin{align*}
    \Bigl|\midpointFunction\bigl(\tbi^r[k]\bigr) - \midpointFunction\bigl(\tb_j^r[k]\bigr)\Bigr|
    \leq \frac{t_u-t_{1}}{2} - \frac{t_{2} - t_\ell}{2} \leq \frac{t_u-t_\ell}{2}
    \leq \frac{\longestEdge(\tb^{r})}{2} .
\end{align*}
This inequality holds for every pair of nodes $i$ and $j$ and thus, for each coordinate $k$, we get
\begin{align*}
    &\max_{i, j\in [n]}\bigl|\midpointFunction\bigl(\tbi^r[k]\bigr) - \midpointFunction\bigl(\tb_j^r[k]\bigr)\bigr| \leq  \longestEdge(\tb^{r})/2\\
    & \Leftrightarrow\ \  \longestEdge\bigl(\tb^{r+1}[k]\bigr) \leq  \longestEdge(\tb^{r})/2.
\end{align*}

After $R$ rounds, $\longestEdge(\tb^R)\leq \frac{1}{2^R}\cdot \longestEdge(\tb)$ holds. 
Since there exists $R\in \mathbb{N}$ s.t.\ $\frac{1}{2^R}\cdot \longestEdge(\tb)\leq \varepsilon$, the algorithm converges.\qedhere
\end{proof}

\begin{lemma}\label{lem:syncBoxAlgRounds}
    After $\bigl\lceil \log_2\bigl(\frac{1}{\varepsilon}\cdot\longestEdge(\tb)\bigr) \bigr\rceil$ synchronous rounds of \Cref{alg:box approach}, the correct nodes hold vectors that are at a distance $\varepsilon$ from each other. 
\end{lemma}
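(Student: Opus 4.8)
The plan is to reuse, essentially verbatim, the per-round contraction already established inside the proof of \Cref{lem: boxAlgCorrect} and iterate it. Recall that there we showed, for every coordinate $k\in[d]$ and every round $r\geq 1$, that $\longestEdge\bigl(\tb^{r+1}[k]\bigr)\leq \longestEdge(\tb^r)/2$, where $\tb^1=\tb$ and $\tb^{r+1}$ is the smallest box containing the vectors computed by the correct nodes in round $r$. Taking the maximum over $k$ gives $\longestEdge(\tb^{r+1})\leq \longestEdge(\tb^{r})/2$, and a one-line induction on $r$ then yields $\longestEdge(\tb^{r+1})\leq \longestEdge(\tb)/2^{r}$ for all $r\geq 0$.

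First I would fix $R=\bigl\lceil \log_2\bigl(\tfrac{1}{\varepsilon}\cdot\longestEdge(\tb)\bigr)\bigr\rceil$ and observe that after $R$ rounds every correct node holds a vector lying in $\tb^{R+1}$, whose longest edge is at most $\longestEdge(\tb)/2^{R}\leq\varepsilon$ by the choice of $R$ (since $2^{R}\geq \longestEdge(\tb)/\varepsilon$). Because $\tb^{R+1}$ is a coordinate-parallel box, any two vectors $x,y$ inside it satisfy $|x[k]-y[k]|\leq \longestEdge(\tb^{R+1})\leq\varepsilon$ in every coordinate $k$, which is exactly the stated distance bound. If one wants the Euclidean distance of the $\varepsilon$-agreement property in \Cref{def:multidim_agreement_properties}, the only extra ingredient is the elementary bound $\distance(x,y)=\norm{x-y}_2\leq \sqrt{d}\cdot\max_{k}|x[k]-y[k]|\leq \sqrt{d}\cdot\longestEdge(\tb^{R+1})$ for $x,y$ in a box, so running the loop of \Cref{alg:box approach} for $\bigl\lceil\log_2\bigl(\tfrac{\sqrt d}{\varepsilon}\cdot\longestEdge(\tb)\bigr)\bigr\rceil$ rounds (still $O\bigl(\log_2\bigl(\tfrac1\varepsilon\cdot\longestEdge(\tb)\bigr)\bigr)$ many, as absorbed into the constant $C$) makes it at most $\varepsilon$.

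I do not expect a genuinely hard step here: the argument is a direct iteration of an already-proven one-round inequality. The only mild subtleties are bookkeeping the round/box indexing (the vectors \emph{output} in round $R$ populate $\tb^{R+1}$, not $\tb^{R}$, which accounts for the exponent) and keeping in mind that the quantity driving the recursion is $\longestEdge(\tb)$ — the $\ell_\infty$-diameter of the true input vectors — rather than their Euclidean diameter, with the $\sqrt d$ conversion factored in only at the very end. Neither of these is an obstacle, so the proof is essentially a short corollary of \Cref{lem: boxAlgCorrect}.
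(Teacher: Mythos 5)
Your proposal is correct and follows essentially the same route as the paper, which simply invokes the per-round halving $\longestEdge(\tb^{r+1})\leq \longestEdge(\tb^{r})/2$ established in the proof of \Cref{lem: boxAlgCorrect} and iterates it. Your additional remark that the iterated bound controls the coordinate-wise ($\ell_\infty$) distance, so that the Euclidean $\varepsilon$-agreement of \Cref{def:multidim_agreement_properties} costs an extra $\sqrt{d}$ factor absorbed into the round count, is a valid refinement of a point the paper leaves implicit.
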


The proof for this lemma follows from the proof of \Cref{lem: boxAlgCorrect}.
Observe that, with \Cref{lem: boxAlgCorrect} and \Cref{lem:syncBoxAlgRounds}, \Cref{alg:box approach} can be adjusted such that the nodes terminate after $\bigl\lceil \log_2\bigl(\frac{1}{\varepsilon}\cdot\longestEdge(\tb)\bigr)\bigr\rceil$ synchronous rounds. This can be achieved by letting the nodes mark their last sent message, such that other nodes can use this last vector for future computations. This way, the algorithm satisfies $\varepsilon$-agreement and termination. The following lemma shows that the algorithm satisfies box validity.

\begin{restatable}{lemma}{boxContainmentLemma}\label{lem:boxContainment}
    \Cref{alg:box approach} satisfies box validity. 
\end{restatable}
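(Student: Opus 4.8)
The plan is to show, by induction on the round number, that the sequence of boxes $\tb^1=\tb,\ \tb^2,\ \tb^3,\dots$ from the proof of \Cref{lem: boxAlgCorrect} is nested decreasing, i.e.\ $\tb^{r+1}\subseteq\tb^r$ for every round $r$. Since every vector a correct node ever holds in round $r+1$ lies in $\tb^{r+1}$, chaining the inclusions places every such vector — in particular the one a node outputs at termination — inside $\tb^1=\tb$, which is exactly box validity.

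First I would invoke two facts already in hand. From the argument inside \Cref{lem: boxAlgCorrect}, every correct node $i$ satisfies $\tbi^r\subseteq\tb^r$ in each round $r$: if a \byzadj coordinate value survives the trimming, then at least one true value was trimmed on each side, so $\tbi^r[k]$ is squeezed between two true coordinate values and hence sits inside $\tb^r[k]$ for every $k\in[d]$. From \Cref{lem:bb-and-tb-intersect}, $\tbi^r\cap\bbi^r$ is a non-empty box, so $\midpointFunction(\tbi^r\cap\bbi^r)$ in line~\ref{algline: intersection-tbi-bbi} is well-defined.

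The core (and entirely routine) step is geometric: the intersection of two axis-parallel boxes is an axis-parallel box, so $\tbi^r\cap\bbi^r=\prod_{k\in[d]}\bigl(\tbi^r[k]\cap\bbi^r[k]\bigr)$, and $\midpointFunction$ acts coordinatewise, returning in coordinate $k$ the midpoint of the non-empty interval $\tbi^r[k]\cap\bbi^r[k]$. That midpoint lies in the interval, which is contained in $\tbi^r[k]\subseteq\tb^r[k]$; doing this for all $k$ gives $v_i^{(r)}:=\midpointFunction(\tbi^r\cap\bbi^r)\in\prod_{k\in[d]}\tb^r[k]=\tb^r$ for every correct node $i$. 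As $\tb^{r+1}$ is by definition the smallest box containing all these $v_i^{(r)}$, and $\tb^r$ is already a box containing all of them, we get $\tb^{r+1}\subseteq\tb^r$, completing the induction; when a node terminates early its last vector is some $v_i^{(r)}\in\tb^r\subseteq\tb$, and the reuse rule keeps every vector subsequently used in $\tb$ as well.

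I do not expect a genuine obstacle. The only subtlety worth flagging is that box validity is stated with respect to the \emph{original} trusted box $\tb$, whereas the algorithm only ever controls the per-round local boxes $\tbi^r,\bbi^r$; the monotonicity induction $\tb^{R+1}\subseteq\cdots\subseteq\tb^1=\tb$ is precisely the bridge between the two, and it rests only on the already-proven inclusion $\tbi^r\subseteq\tb^r$ together with the coordinatewise nature of $\midpointFunction$.
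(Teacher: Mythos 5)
Your proposal is correct and follows essentially the same route as the paper: the paper's proof likewise observes that each correct node's new vector lies in $\tbi^r\subseteq\tb^r$ (established inside the proof of \Cref{lem: boxAlgCorrect}) and chains this down to $\tb^1=\tb$. You merely make explicit the induction $\tb^{r+1}\subseteq\tb^r$ and the coordinatewise midpoint argument that the paper leaves implicit.
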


\begin{proof}
Observe that each correct node $i$ always picks a new vector inside $\tbi$. 
The set of locally received vectors depends on the received messages $M_i$. In the proof of \Cref{lem: boxAlgCorrect}, we showed that  $\tbi^r\subseteq \tb^{r}$ for all rounds $r$. This also means that $\tbi^r\subseteq \tb^1 = \tb$ always holds.
Hence, $\tbi^r\subseteq \tb$ holds for all $i$ and $r$.
This implies that the vectors converge towards a vector inside $\tb$.
\end{proof}

Observe that the algorithm also trivially satisfies the strong validity condition since it agrees inside $\tb$, and therefore solves multidimensional approximate agreement. Then, we show that \Cref{alg:box approach} provides a $2\sqrt{d}$-approximation of the true centroid.
 
\begin{lemma}\label{lem:syncBoxAlgApprox}
    The approximation ratio of the true centroid by \Cref{alg:box approach} is upper bounded by $2\sqrt{d}$. 
\end{lemma}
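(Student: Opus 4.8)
The plan is to bound the distance from the output vector (which lies inside $\tbi \cap \bbi$ for the relevant node, and ultimately inside $\tb \cap \bb$ after convergence) to the true centroid $\trueBar$, and compare it to $\radiusEncBall$. The key geometric facts I would assemble are: (i) by \Cref{lem:boxContainment} and the convergence argument, each correct node's output is inside the global centroid box $\bb$ (since $\bbi \subseteq \bb$, analogously to $\tbi \subseteq \tb$); (ii) by \Cref{obs: truebar in enc ball}, $\trueBar \in \convexHull(\barSet) \subseteq \bb$, so both the output and $\trueBar$ lie inside the box $\bb$; hence their distance is at most the diameter of $\bb$, i.e. $\sqrt{\sum_{k=1}^d \longestEdge(\bb[k])^2} \le \sqrt{d}\cdot\longestEdge(\bb)$. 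So it remains to relate $\longestEdge(\bb)$, the longest edge of the centroid box, to $\radiusEncBall$.

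The heart of the argument is therefore the one-dimensional claim: for each coordinate $k$, the length of the edge $\bb[k]$ is at most $2\radiusEncBall$. To see this, recall $\bb[k] = [\min \barSet[k], \max \barSet[k]]$ where $\barSet[k]$ is the projection of the possible-centroid set onto coordinate $k$; the two endpoints are realized by the centroid of the $n-t$ smallest values in coordinate $k$ and the centroid of the $n-t$ largest values in coordinate $k$, respectively. Both of these are themselves points of $\barSet$ (they are genuine centroids of $(n-t)$-subsets). Since $\encBall$ is the smallest enclosing ball of $\barSet$, any two points of $\barSet$ are at distance at most $2\radiusEncBall$; in particular the difference of their $k$-th coordinates — which is exactly $\longestEdge(\bb[k])$ — is at most $2\radiusEncBall$. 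Thus $\longestEdge(\bb) \le 2\radiusEncBall$, and combining with the previous paragraph gives $\distance(\text{output}, \trueBar) \le \sqrt{d}\cdot\longestEdge(\bb) \le 2\sqrt{d}\cdot\radiusEncBall$, which is the claimed $2\sqrt{d}$-approximation.

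The main obstacle I anticipate is the subtlety already flagged in the paper's technical overview: a correct node does not compute $\bb$ but only its local version $\bbi$, and in the synchronous model $\bbi$ is built from the received set $M_i$ which may include up to $t$ Byzantine coordinates. I would need to argue carefully that $\bbi \subseteq \bb$ for every correct node $i$ — this should follow from the same trimming argument used for $\tbi \subseteq \tb$ in \Cref{lem: boxAlgCorrect}, since the endpoints of $\bbi[k]$ are averages of $(n-t)$-subsets drawn from $M_i$, and the extreme such averages over $M_i$ cannot exceed the extreme such averages over the full set of $n$ vectors (any subset realizing an endpoint of $\bbi[k]$ is also a subset available globally, so $\min \barSet(i)[k] \ge \min \barSet[k]$ and symmetrically on the other side). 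A second, minor point to handle cleanly is that the convergence of \Cref{alg:box approach} happens inside $\tb$, and one must check that the limit also lies in $\bb$: since each round's vectors lie in $\bbi^r \subseteq \bb^r \subseteq \bb^1 = \bb$ (the centroid box only shrinks, as the input vectors' coordinate ranges shrink), the limit stays in $\bb$. Once these containments are in place, the distance bound is immediate from the diameter-of-a-box estimate and the one-dimensional claim above.
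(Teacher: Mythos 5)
Your core argument is the same as the paper's: bound the distance from the output to $\trueBar$ by the diagonal of $\bb$, bound each edge of $\bb$ by the diameter of $\barSet$ (because the endpoints of $\bb[k]$ are realized by genuine elements of $\barSet$, namely the centroids of the $n-t$ smallest and $n-t$ largest values in coordinate $k$), and bound that diameter by $2\,\radiusEncBall$. This is exactly the paper's chain of inequalities, written slightly more explicitly, and it is correct.

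One supporting claim in your last paragraph is, however, wrongly justified. You argue that the output stays in $\bb$ because ``$\bbi^r \subseteq \bb^r \subseteq \bb^1 = \bb$ (the centroid box only shrinks).'' This containment fails for $r\ge 2$: in later rounds the \byz are unconstrained and can send arbitrarily extreme vectors, so the average of the $n-t$ smallest values in a coordinate of $M_i^r$ can lie far below every correct round-$r$ value, and $\bbi^r$ can extend far outside $\bb$. The conclusion you need is still true, but the containment must be routed through the \emph{trusted} box rather than the centroid box: the round-$1$ output lies in $\bbi^1\subseteq\bb$ (here $\barSet(i)\subseteq\barSet$ does hold, since all received vectors are genuine inputs), and then, by induction, for $r\ge 2$ every correct round-$r$ input lies in $\bb$, hence $\tb^r\subseteq\bb$ (the smallest box containing points of a box is contained in that box), hence the round-$r$ output, which lies in $\tbi^r\cap\bbi^r\subseteq\tbi^r\subseteq\tb^r$, lies in $\bb$ as well. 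With that substitution your proof is complete; it is worth noting that the paper's own proof glosses over this multi-round point entirely, so your instinct to address it was right even though the particular justification you gave does not hold.
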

\begin{proof}
Following \Cref{def:approx}, we need to show that the ratio between the distance from any output of \Cref{alg:box approach} to $\trueBar$ and the optimal radius $\radiusEncBall$ is less than $2\sqrt{d}$. 
We first lower bound the radius of the smallest enclosing ball of $\barSet$. Observe that the radius is always at least  $\max_{x, y\in \barSet}\bigl(\distance(x, y)\bigr)/2$.

Next, we focus on upper bounding the distance between $\trueBar$ and the furthest possible point from it inside $\bb$. W.l.o.g., we assume that $\max_{x, y\in \barSet}\bigl(\distance(x, y)\bigr) = 1$.
We consider the relation between the smallest enclosing ball and $\bb$.
Observe that each face of the box $\bb$ has to contain at least one point of $\barSet$. 
If $\bb$ is contained inside the ball, i.e.\ if the vertices of the box would lie on the ball surface, the computed approximation of \Cref{alg:box approach} would always be optimal. 

The worst case is achieved if the ball is (partly) contained inside $\bb$. Then, the optimal solution might lie inside $\bb$ and the ball, while the furthest node may lie on one of the vertices of $\bb$ outside of the ball. The distance of any node from $\trueBar$ in this case is upper bounded by the diagonal of the box. Since the longest distance between any two points was assumed to be $1$, the box is contained in a unit cube. Thus, the largest distance between two points of $\bb$ is at most $\sqrt{d}$. 

The approximation ratio of \Cref{alg:box approach} can be upper bounded by:
$$\hspace{1.4cm}\frac{\max_{x\in \bb}\bigl(\distance(\trueBar, x)\bigr)}{ \radiusEncBall} \le 2\cdot\frac{\max_{x\in \bb}\bigl(\distance(\trueBar, x)\bigr)}{\max_{x, y\in \barSet}\bigl(\distance(x, y)\bigr)} \le 2\cdot\frac{\sqrt{d}}{1} = 2\sqrt{d}.\qedhere$$
\end{proof}

\begin{restatable}{lemma}{syncBoxAlgResilience}\label{lem:syncBoxAlgResilience}
    \Cref{alg:box approach} can tolerate up to $t$ \byz where $t<n/3$.
\end{restatable}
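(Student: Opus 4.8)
The plan is to show that every ingredient of \Cref{alg:box approach} only relies on the guarantees provided by reliable broadcast and on counting arguments that remain valid whenever $t<n/3$. There are three things to verify: (i) the reliable broadcast subroutine is available at resilience $t<n/3$; (ii) in each round every correct node receives at least $n-t$ vectors, so the local boxes $\tbi$ and $\bbi$ are well-defined; and (iii) the non-emptiness of $\tbi\cap\bbi$ (\Cref{lem:bb-and-tb-intersect}) and the convergence bound (\Cref{lem: boxAlgCorrect}) already go through under $t<n/3$, so nothing breaks.

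First I would recall that Bracha's reliable broadcast~\cite{BrachaRB}, which we use as the communication primitive (see \Cref{sec: model-and-def}), tolerates exactly $t<n/3$ \byzadj nodes and guarantees the consistency property we need: if two correct nodes accept a message from a \byzadj sender, it is the same message. This is the only place where optimal resilience is actually forced, and it matches $t<n/3$. Next, since every correct node reliably broadcasts its vector and all $n-f\ge n-t$ correct broadcasts are eventually accepted by all correct nodes, each correct node collects a multiset $M_i$ with $m_i=|M_i|\ge n-t$ in every round; moreover $M_i$ always contains the $n-f$ true vectors of that round. This is precisely the hypothesis under which $\tbi$ (trimming $m_i-(n-t)$ values per side, leaving $2(n-t)-m_i\ge n-2t\ge 1$ values since $t<n/3$) and $\bbi$ are defined.

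Then I would point out that \Cref{lem:bb-and-tb-intersect} was proven under the assumption $t<n/3$ (indeed its proof explicitly uses $t<n/3$ to guarantee $2(n-t)-m_i\ge n-2t\ge 1$, so the truncated average $V_k$ is over a non-empty set), hence the midpoint in line~\ref{algline: intersection-tbi-bbi} is always well-defined; and \Cref{lem: boxAlgCorrect} shows $\tbi^r\subseteq\tb^r$ and the contraction factor $1/2$ using only that at most $m_i-(n-t)$ of the trimmed values per side are \byzadj and that $n-f$ true values are always present — again facts that hold for any $t<n/3$. Therefore the algorithm is well-defined, converges, and satisfies box validity (\Cref{lem:boxContainment}) for every $t<n/3$, which is what the lemma asserts.

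The only potential obstacle is a subtlety in the asynchronous-style counting: one must be careful that the ``exactly trim $m_i-(n-t)$ per side'' rule is sound, i.e.\ that a correct node really can upper bound the number of \byzadj values it received by $m_i-(n-t)$. In the synchronous setting this is immediate because a correct node receives all $n-f$ true vectors, so among its $m_i$ received values at most $m_i-(n-f)\le m_i-(n-t)$ are \byzadj; hence trimming $m_i-(n-t)$ per side is safe and never removes all true values (it leaves at least $n-2t\ge 1$). I expect this bookkeeping to be the main thing to state carefully, but it is routine and already implicit in the proofs of \Cref{lem:bb-and-tb-intersect} and \Cref{lem: boxAlgCorrect}; everything else is a direct appeal to the resilience of reliable broadcast.
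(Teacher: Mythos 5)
Your proposal is correct and follows essentially the same route as the paper: both identify Bracha's reliable broadcast as the component that forces $t<n/3$, and observe that the geometric/counting arguments (non-emptiness of $\tbi\cap\bbi$, convergence, box validity) hold at that resilience. The paper's only additional nuance is that the intersection argument of \Cref{lem:bb-and-tb-intersect} in fact only needs $t<n/2$ (your bound $2(n-t)-m_i\ge n-2t\ge 1$ likewise only requires $t<n/2$), so reliable broadcast --- which also keeps $\barSet$ bounded by $\binom{n}{n-t}$ --- is the sole binding constraint.
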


\begin{proof}
    In \Cref{lem:bb-and-tb-intersect}, we discuss that $\bbi\cap\tbi\neq \emptyset$. This proof in fact requires an upper bound of $t<n/2$ on the number of \byzadj nodes. The required upper bound of $t<n/3$ on the number of \byzadj nodes comes from the implementation of reliable broadcast as a subroutine. This property is implicitly used to ensure that the set $\barSet$ is bounded in size by $\binom{n}{n-t}$ and cannot be extended to an arbitrarily large size by \byzadj parties. 
\end{proof}

This lemma concludes the proof of \Cref{thm:syncBoxAlg} and the analysis of the centroid approximation for the synchronous case.  

Observe that \Cref{thm:syncBoxAlg} directly implies \Cref{cor:synch-RB-TM} for the $RB\!-\!TM$ algorithm from~\cite{NEURIPS2021_d2cd33e9}.

\begin{corollary}\label{cor:synch-RB-TM}
    The $RB\!-\!TM$ algorithm~\cite{NEURIPS2021_d2cd33e9} with a resilience of $t<n/3$ achieves a $2\sqrt{d}$-approximation of $\trueBar$ in the synchronous setting.
\end{corollary}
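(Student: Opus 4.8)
The plan is to argue that $RB\!-\!TM$ is an instance of the structure analyzed in \Cref{thm:syncBoxAlg}, so that its approximation guarantee and resilience follow almost verbatim from \Cref{lem:syncBoxAlgApprox} and \Cref{lem:syncBoxAlgResilience}. The name $RB\!-\!TM$ stands for reliable-broadcast plus coordinate-wise trimmed mean: in each round a node reliably broadcasts its current vector, receives a set $M_i$ of at least $n-t$ vectors, and in each coordinate $k$ independently discards the $|M_i|-(n-t)$ most extreme values on each side and takes the average (trimmed mean) of what remains. I would first spell out precisely this mechanism, and then show that the value it produces in coordinate $k$ is exactly the quantity $V_k$ constructed in the proof of \Cref{lem:bb-and-tb-intersect}: the trimmed mean in coordinate $k$ lies in $[c_\ell,c_u]\subseteq\bbi[k]$ because it is an average of values bracketed by the two extreme $(n-t)$-averages, and it lies in $\tbi[k]=[v_{m_i-(n-t)+1}[k],v_{n-t}[k]]$ because it is an average of exactly those surviving values. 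Hence the per-coordinate output of $RB\!-\!TM$ is a point of $\tbi\cap\bbi$, just like the midpoint chosen in line~\ref{algline: intersection-tbi-bbi} of \Cref{alg:box approach}.

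Once that identification is made, the three ingredients needed for the corollary are already available. For \textbf{box validity / convergence}, the argument in \Cref{lem: boxAlgCorrect} only used that each correct node picks, in coordinate $k$, a value lying between the surviving order statistics $v_{m_i-(n-t)+1}[k]$ and $v_{n-t}[k]$ — which sandwich the common true values $t_1,t_2$ — so the same two-true-values-removed argument shows $\longestEdge(\tb^{r+1}[k])\le \longestEdge(\tb^r[k])$ with a contraction factor bounded by a constant, and $\tbi^r\subseteq\tb^r\subseteq\tb$ gives box validity. (The contraction constant for the trimmed mean may differ slightly from the $1/2$ obtained for the exact midpoint, but it is still a fixed constant $<1$, which is all that is needed for the $O(\log_2(\tfrac1\varepsilon\longestEdge(\tb)))$ round bound up to the hidden constant; this is why the corollary, like the theorem, states the rate only up to a constant factor.) For the \textbf{approximation ratio}, \Cref{lem:syncBoxAlgApprox} bounds $\distance(x,\trueBar)$ for \emph{every} $x\in\bb$ by $2\sqrt d\cdot\radiusEncBall$; since the output of $RB\!-\!TM$ lies in $\bbi\subseteq\bb$, the same $2\sqrt d$ bound applies directly. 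For \textbf{resilience}, \Cref{lem:syncBoxAlgResilience} shows $t<n/3$ suffices: the combinatorial fact $\bbi\cap\tbi\ne\emptyset$ holds already for $t<n/2$, and the binding constraint $t<n/3$ comes solely from using reliable broadcast to keep $|\barSet|\le\binom{n}{n-t}$, which $RB\!-\!TM$ also relies on.

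I would organize the write-up as: (i) recall the $RB\!-\!TM$ rule; (ii) a one-paragraph lemma-style observation that its coordinate-$k$ output equals a point of $\tbi[k]\cap\bbi[k]$, citing the construction in \Cref{lem:bb-and-tb-intersect}; (iii) invoke \Cref{lem: boxAlgCorrect}, \Cref{lem:boxContainment}, \Cref{lem:syncBoxAlgApprox}, \Cref{lem:syncBoxAlgResilience} with the trivial modification that ``midpoint of $\tbi\cap\bbi$'' is replaced by ``coordinate-wise trimmed mean'', noting that each of those proofs uses only membership in $\tbi\cap\bbi$ (or in $\bb$) and the order-statistic sandwiching, never the specific choice of point. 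The main obstacle I anticipate is purely bookkeeping rather than mathematical: verifying that the trimmed mean of the surviving values in coordinate $k$ genuinely falls in $[c_\ell,c_u]$ — i.e.\ that trimming an equal number of smallest and largest values from a sorted list and averaging the rest cannot exceed the average of the top $n-t$ nor fall below the average of the bottom $n-t$ — and checking that the contraction step of \Cref{lem: boxAlgCorrect} still goes through when the chosen point is a trimmed mean rather than the exact interval midpoint (it does, because a trimmed mean of values in an interval $[a,b]$ lies in $[a,b]$, and the interval endpoints are controlled exactly as before). Neither of these requires new ideas, so the corollary is essentially immediate from \Cref{thm:syncBoxAlg}.
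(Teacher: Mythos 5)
Your proposal is correct and follows essentially the same route as the paper: the paper's proof is exactly the observation that the coordinate-wise trimmed mean of $RB\!-\!TM$ coincides with the vector $V_k$ constructed in the proof of \Cref{lem:bb-and-tb-intersect}, hence lies in $\tbi\cap\bbi$, after which the guarantees of \Cref{thm:syncBoxAlg} transfer. Your write-up is merely more explicit in checking that the downstream lemmas use only membership in $\tbi\cap\bbi$ (and, for convergence, the order-statistic sandwiching) rather than the specific choice of the midpoint -- a detail the paper leaves implicit.
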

\begin{proof}
    Observe that the trimmed mean computed in \cite{NEURIPS2021_d2cd33e9} corresponds to the vector $V_k$ constructed in the proof of \Cref{lem:bb-and-tb-intersect}. Thus, the trimmed mean lies in the intersection of $\tbi$ and $\bbi$ for every node $i$.
\end{proof}

\section{Asynchronous algorithms for approximate computation of the centroid}\label{sec:asynch_approximations}

In the asynchronous case, the nodes cannot rely on the fact that they receive vectors from all correct nodes. 
Since, $\byz$ could decide to not send any vectors, the nodes can only be allowed to wait for $n-t$ vectors (number of correct nodes that all send their vectors). 
However, if $\byz$ do send vectors, then these could arrive before $t$ true vectors, that would then be ignore by the receiving node since the latter only waits for $n-t$ messages before running the agreement algorithm. 
Therefore, only $n-2t$ of the vectors can be guaranteed to come from the correct nodes. Due to this restriction, also the definitions of the $\safe$ as well as the local trusted boxes need to be adjusted.

The asynchronous multidimensional approximate agreement algorithm from~\cite{mendes2015multidimensional} is also based on the $\safe$ approach. In the asynchronous case, the $\safe$ that is computed is slightly different. The smaller convex hulls of $n-2t$ nodes are considered for the intersection. This decreases the number of \byz that are tolerated to $t<n/(d+2)$. The decrease is necessary for the $\safe$ to be non-empty. The asynchronous $\safe$ is defined as follows:

\begin{definition}[Asynchronous $\safe$]\label{def:asynch_safe_area}
In the asynchronous setting, the $\safe$ of a set of vectors $\{v_i, i\in \left[n\right]\}$ that can contain up to $t$ \byzadj vectors is defined as 
    $$\safe = \bigcap_{\substack{I\subseteq \left[n\right]\\ |I|=n-2t}}\convexHull\bigl(\{v_i, i\in I\}\bigr).$$
\end{definition}

Also, the trusted box that the nodes can compute locally is reduced in size. In the following, we redefine the local trusted box for asynchronous communication:

\begin{definition}[Asynchronous local trusted box]\label{def:asynch_trusted_box}
For all received vectors from nodes $i \in [n-t]$, denote
$v_i[k]$ the $k^{th}$ coordinate of the respective vector. Assume that all $v_i[k]$ are sorted and relabeled 
as in \Cref{obs: reordering}.
Then, in the asynchronous case, the locally computed trusted box $\widetilde{\tbi}$ is defined as $$\widetilde{\tbi}[k] = \bigl[v_{t+1}[k], v_{n - 2t}[k]\bigr], \ \forall k\in [d].$$ 
\end{definition}

\subsection{Approximation through the asynchronous \texorpdfstring{$\boldsymbol{\safe}$}{safe} approach}

As before, we start by discussing the lower bound on the approximation of the asynchronous $\safe$ approach. Compared to the synchronous case, the \byzadj adversary is able to additionally hide up to $t$ true vectors from the correct nodes.  

\begin{theorem}
    The approximation ratio of the asynchronous multidimensional approximate agreement via the $\safe$ approach, where the $\safe$ is a single vector, is lower bounded by $2(d+1)$.
\end{theorem}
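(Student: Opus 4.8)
The plan is to mirror the lower-bound construction behind \Cref{thm:sync_safe_area}, but to use the additional leverage the adversary has in the asynchronous model: on top of picking the inputs of the $t$ \byz, it can delay the messages of $t$ correct nodes so that, when a correct node computes its $\safe$ from the $n-t$ messages it does receive, it must still discard $t$ potentially faulty vectors — in effect $2t$ vectors are ``untrusted''. I want an instance where this makes the locally computed $\safe$ shrink to the single vector $v_0$, while $\trueBar$ ends up a factor $2(d+1)$ of $\radiusEncBall$ away from $v_0$ (instead of $2d$ as in the synchronous case).

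Concretely, I would take $n=(d+2)t+1$ (so that the resilience bound $t<n/(d+2)$ of the asynchronous $\safe$ approach is tight) with $f=t$ \byz, all of which follow the protocol and broadcast $v_0=(0,\dots,0)$. Among the $(d+1)t+1$ correct nodes, place one at $v_0$, a block of $2t$ all at $v+\delta u_1$, and, for each $k=2,\dots,d$, a block of $t$ at $v+\delta u_k$; here $v=(x,0,\dots,0)$, $u_k$ is the $k$-th unit vector, and $\delta>0$ is arbitrarily small, chosen exactly as in \Cref{thm:sync_safe_area} so that the $d$ positions near $v$ together with $v_0$ are in general position. The adversary schedules delivery so that every correct node receives, in the first round, the same multiset: the $t+1$ vectors located at $v_0$ and exactly $t$ vectors from each of the $d$ blocks near $v$. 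Since all $2t$ direction-$1$ nodes sit at the same point $v+\delta u_1$, a delayed direction-$1$ node is interchangeable with any of the $t$ it does receive, so each node consistently ends up holding exactly $n-t=(d+1)t+1$ messages.

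The first step is to show that the $\safe$ each correct node computes in round $1$ is exactly $\{v_0\}$. Discarding the up to $t$ untrusted vectors amounts to intersecting the convex hulls of all subsets of size $n-2t=dt+1$ of the $n-t$ received vectors. For each $k\in[d]$, dropping the $t$ received copies of $v+\delta u_k$ leaves a valid $(dt+1)$-subset whose points all lie in the hyperplane $H_k$ spanned by $v_0$ and $\{v+\delta u_j:j\neq k\}$, so $\safe\subseteq H_k$; and, exactly as in \Cref{thm:sync_safe_area}, the $d$ hyperplanes $H_k$ are characterised by linearly independent vectors, hence $\bigcap_{k\in[d]}H_k=\{v_0\}$. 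Conversely, any $(dt+1)$-subset of the received multiset must contain one of the $t+1$ copies of $v_0$, because only $dt$ of the received vectors differ from $v_0$, so $v_0$ lies in every such convex hull and $v_0\in\safe$. Thus $\safe=\{v_0\}$ for every correct node; they all adopt $v_0$ at the end of round $1$, and from round $2$ on every node sees only copies of $v_0$, so the algorithm outputs $v_0$. The second step is the computation, identical in spirit to \Cref{thm:sync_safe_area}: letting $\delta\to 0$ we get $\trueBar=\tfrac{(d+1)t}{(d+1)t+1}\,v$, so $\distance(\safe,\trueBar)=\tfrac{(d+1)t}{(d+1)t+1}\,x$; the two extreme elements of $\barSet$ are $\trueBar$ itself (the $(d+1)t$ vectors near $v$ together with the single correct $v_0$) and $\tfrac{dt}{(d+1)t+1}\,v$ (the $t+1$ vectors at $v_0$ together with $dt$ vectors near $v$), giving $2\,\radiusEncBall=\tfrac{t}{(d+1)t+1}\,x$; dividing yields the ratio $2(d+1)$.

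The main obstacle is the first step under asynchrony: I must make sure the adversary can really give every correct node — including the ones whose messages it delays — the same round-$1$ view with precisely $t$ vectors per near-$v$ direction and $t+1$ copies of $v_0$, which is exactly why the ``doubled'' direction consists of $2t$ co-located nodes rather than $2t$ distinct ones (otherwise a delayed node would see its own off-axis vector and fail to collapse its $\safe$ to a single point). A secondary, purely technical point is the $\delta\to 0$ limit: for $\delta>0$ the construction yields a ratio slightly below $2(d+1)$ that tends to $2(d+1)$, so the bound is stated, as in \Cref{thm:sync_safe_area}, as the supremum over admissible instances.
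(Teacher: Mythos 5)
Your proposal is correct and follows essentially the same route as the paper's (much terser) proof: extend the synchronous construction by $t$ additional true vectors near $v$ that the adversary hides, and replace $d$ by $d+1$ in the calculation of $\distance(\safe,\trueBar)$ and $\radiusEncBall$. Your write-up is in fact more careful than the paper's sketch --- it pins down $n=(d+2)t+1$, verifies that the locally computed asynchronous $\safe$ (intersection over $(n-2t)$-subsets of the $n-t$ received vectors) collapses to $\{v_0\}$, and handles the delayed nodes' own views by co-locating the doubled block --- but the underlying idea and the final arithmetic are identical.
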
 
\begin{proof}
    In this proof, we will extend the example in the proof of \Cref{thm:sync_safe_area}. Since the \byzadj adversary can hide up to $t$ true vectors, we can assume that the $t$ hidden nodes lie far away from the locally computed $\safe$. If they lie too far, they can increase the radius of the smallest enclosing ball and thus improve the approximation ratio. Therefore, we assume that the $t$ hidden true vectors lie in $(x,\ldots,0)$. 

    We can replace $d$ by $d+1$ in the calculations from \Cref{thm:sync_safe_area} and arrive at a lower bound of $2(d+1)$ for the approximation ratio of the asynchronous $\safe$ approach.
\end{proof}

\subsection{A \texorpdfstring{$\mathbf{2}$}{two}-approximation of the centroid via the \texorpdfstring{$\boldsymbol{\safe}$}{safe} approach}

Similar to \Cref{sec:synch_one_approximation}, we can achieve a good approximation of $\trueBar$ while satisfying the weak validity condition in the asynchronous case. 

In the asynchronous case, every node can only wait for $n-t$ other vectors. If \byzadj vectors are among the received messages, a correct node will not be able to compute an optimal approximation of $\trueBar$ locally. This is because the average of the $n-t$ received vectors does not need to be $\trueBar$. Instead, each node can only compute one possible centroid approximation from $\barSet$. \Cref{alg:asynch-two-approx} presents this idea in pseudocode. The next theorem shows that the computation of one centroid is sufficient to achieve a $2$-approximation of $\trueBar$.

\begin{algorithm}[tbh]
\caption{Asynchronous $2$-approximation of the centroid with resilience $t<n/(d+2)$}
\label{alg:asynch-two-approx}
\begin{algorithmic}[1]
    \State Each node $i$ in $[n]$ with input vector $v_i$ executes the following code:
    \Indent
        \State Broadcast own vector $v_i$ reliably to all nodes
        \Upon{Reliably receiving $n-t$ vectors $v_j$}
            \State $c_i \leftarrow$ centroid of the received $v_j$
            \State Run $\ExistingSafeArea$ with input vector $c_i$
        \EndUpon
    \EndIndent
\end{algorithmic}
\end{algorithm}

\begin{theorem}\label{thm:asynch_two_approx}
    \Cref{alg:asynch-two-approx} achieves a $2$-approximation of $\trueBar$ in an asynchronous setting while satisfying weak validity. After $O\left(\log_{1/(1-\gamma)}\left(\frac{1}{\varepsilon}\cdot\radiusEncBall\cdot\sqrt{d}\right)\right)$ asynchronous rounds, where $1/(1-\gamma) = n\binom{n}{n-f}\Big/\left(n\binom{n}{n-f}-1\right)$, the vectors of the correct nodes converge. The resilience of the algorithm is thereby $t<n/(d+2)$.    
\end{theorem}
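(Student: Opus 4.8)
The plan is to mirror the structure of the proof of \Cref{thm:synch-one-approx} (the synchronous analogue), adapting each of the three ingredients—approximation ratio, convergence rate, and validity—to the asynchronous model, where the key loss is that each correct node is only guaranteed to receive $n-2t$ true vectors among the $n-t$ it waits for. First I would argue the $2$-approximation. Each correct node $i$ waits for $n-t$ reliably received vectors and sets $c_i$ to be their centroid. By construction $c_i \in \barSet$, and since every element of $\barSet$ lies in $\encBall$, we have $\distance(c_i, \trueBar) \le 2\,\radiusEncBall$ (the centroid and $\trueBar$ could be antipodal on the ball). Here I need the asynchronous version of \Cref{obs: truebar in enc ball}: I should double-check that $\trueBar \in \convexHull(\barSet)$ still holds when the sets defining $\barSet$ may mix true and Byzantine vectors—but this is exactly the observation as stated, which already covers $f \le t$. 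Then, as in \Cref{lem:safeApproachApprox}, the $\ExistingSafeArea$ subroutine guarantees that the final agreement vector lies in $\convexHull(\{c_i \mid i \text{ correct}\})$, and since each $c_i$ is within $2\,\radiusEncBall$ of $\trueBar$ and this set is convex, every point of the convex hull is within $2\,\radiusEncBall$ of $\trueBar$. Hence the approximation ratio is at most $2$.

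Next I would handle convergence and round complexity. After the first round, all $c_i$ lie in $\encBall$, so $\max_{i,j \text{ correct}, k \in [d]} |c_i[k] - c_j[k]| \le 2\,\radiusEncBall$, and in particular the diameter of the correct inputs to $\ExistingSafeArea$ is at most $2\,\radiusEncBall\sqrt{d}$ (coordinatewise bound times $\sqrt d$). Invoking the asynchronous convergence guarantee of the $\safe$ algorithm from~\cite{mendes2015multidimensional}, the protocol reaches $\varepsilon$-agreement in $O\bigl(\log_{1/(1-\gamma)}(\tfrac{1}{\varepsilon}\cdot\radiusEncBall\cdot\sqrt d)\bigr)$ rounds with $1/(1-\gamma) = n\binom{n}{n-f}\big/(n\binom{n}{n-f}-1)$; the appearance of $\binom{n}{n-f}$ rather than $\binom{n}{n-t}$ reflects that in the asynchronous $\safe$ the relevant bound on the number of possible convex hulls is governed by the actual number of faulty nodes $f$. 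For validity I would repeat the argument of \Cref{lem:safeApproachRounds}: if all nodes are correct and start with the same vector $v$, then every correct node receives $n-t$ copies of $v$, sets $c_i = v$, and feeds $v$ into $\ExistingSafeArea$, which then outputs $v$—so weak validity holds. As in the synchronous case, I would also note (or point to a figure analogous to \Cref{fig:safe-convb-not-in-tb}) that strong validity fails, so weak validity is the best we get here.

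Finally, the resilience claim $t < n/(d+2)$ is inherited directly from the asynchronous $\safe$ algorithm of~\cite{mendes2015multidimensional}: this is the bound needed to keep the asynchronous $\safe$ (\Cref{def:asynch_safe_area}, the intersection of convex hulls over $(n-2t)$-subsets) non-empty, and \Cref{alg:asynch-two-approx} uses $\ExistingSafeArea$ as a black box, so it cannot tolerate more faults than that subroutine. The main obstacle I anticipate is not any single step but rather being careful about the asynchronous bookkeeping: one must verify that $\trueBar$ remains inside the convex hull of the \emph{locally} achievable centroids even though a correct node may unknowingly incorporate up to $t$ Byzantine vectors into its $n-t$-vector sample (the set it averages need not be a subset of the true vectors), and that the contraction factor and round count are stated in terms of $f$ consistently with the cited asynchronous analysis. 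Once those alignment points are checked, the three lemmas assemble into the theorem exactly as in \Cref{sec:synch_one_approximation}.
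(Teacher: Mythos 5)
Your proposal is correct and follows essentially the same route as the paper: the key observation in both is that each $c_i$ is a single element of $\barSet$ and hence lies in $\encBall$, so $\distance(c_i,\trueBar)\le 2\,\radiusEncBall$ since $\trueBar\in\convexHull(\barSet)\subseteq\encBall$, convexity transfers this bound to the output of $\ExistingSafeArea$, and the convergence, weak-validity, and resilience claims are inherited from the synchronous analysis and the asynchronous $\safe$ subroutine. The paper's proof is merely terser, deferring everything after the $2$-approximation to the proof of \Cref{thm:synch-one-approx}; your extra bookkeeping (in particular flagging that \Cref{obs: truebar in enc ball} still applies) is consistent with it.
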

 \begin{proof}
     When computing the centroid of $n-t$ vectors, each node $i$ computes one element of $\barSet$. This ensures that the algorithm agrees inside $\encBall$.
     The maximum distance between any vector in $\convexHull(\barSet)$ and $\trueBar$ is the diameter of $\encBall$, which is twice the radius $\radiusEncBall$. Therefore, \Cref{alg:asynch-two-approx} outputs a $2$-approximation of $\trueBar$. Since the only difference of this algorithm from \Cref{alg:safe area approach} is the preprocessing step, the remaining properties of the algorithm follow from the proof of \Cref{thm:synch-one-approx}.

     Observe that the algorithm trivially satisfies weak validity: if all noes are correct and all input vectors of these nodes are identical, each of the $n-t$ received vectors will be the same. Therefore, all nodes will pick the same input vector to run $\ExistingSafeArea$.
 \end{proof}

\begin{lemma}\label{lem:lowerBound-async-weakVal}
    In the asynchronous setting, the approximation ratio of the true centroid that can be achieved by any algorithm satisfying the weak validity property is at least 2.
\end{lemma}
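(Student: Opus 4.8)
The plan is to mirror the synchronous lower bound from \Cref{lem:lowerBound-sync-strongVal}, but adapted to exploit the weaker guarantee we get in the asynchronous model: a correct node only waits for $n-t$ vectors, so an adversary can force it to see an input configuration that is consistent with two different ``truths''. First I would set up an instance with all input vectors on a single line, say the first coordinate axis: some number of correct nodes at the origin $0$ and some at $e_1 = (1,0,\dots,0)$, together with $t$ Byzantine nodes. The key indistinguishability argument is that, because weak validity only constrains executions in which \emph{all $n$ nodes are correct and share the same input}, I need an execution in which this precondition holds so the output is pinned, and a second execution — differing only in which nodes are Byzantine and in message scheduling — in which the true centroid is bounded away from that pinned output by $2\radiusEncBall$.

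The second and main step is the scheduling argument. In the asynchronous model the adversary controls message delivery, so I would let the $t$ Byzantine nodes behave exactly like correct nodes holding input $0$ while delaying the messages of the $t$ correct nodes that actually hold $e_1$; from the point of view of every correct node the first $n-t$ messages that arrive are all $0$-vectors, which is identical to the all-correct, all-$0$ execution, so by weak validity every correct node must output $0$. But in reality the $t$ nodes holding $e_1$ are correct, so the true centroid is $\bigl(\tfrac{t}{n-t},0,\dots,0\bigr)$ (there are $n-t$ correct nodes, $t$ of which sit at $e_1$), at distance $\tfrac{t}{n-t}$ from the output $0$. Finally I compute $\radiusEncBall$: the set $\barSet$ of all centroids of $n-t$-subsets, in this configuration, ranges (in the first coordinate) from $0$ (all $0$-vectors) to $\tfrac{t}{n-t}$ (the $n-t$-subset containing all $t$ copies of $e_1$), so its diameter is $\tfrac{t}{n-t}$ and $\radiusEncBall = \tfrac12\cdot\tfrac{t}{n-t}$. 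Hence $\distance(O_{\mathcal A},\trueBar)/\radiusEncBall \ge 2$.

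I expect the main obstacle to be stating the indistinguishability cleanly enough that the weak-validity precondition genuinely applies. Weak validity is strictly weaker than strong validity, so I must be careful: it is not enough that the correct nodes \emph{see} only $0$-vectors — I need a \emph{companion execution} in which literally all $n$ nodes are correct and all hold input $0$, and I need that no correct node can distinguish its own execution from this companion one up to the moment it decides. This is standard but requires noting that a correct node decides after receiving $n-t$ vectors and performing local computation; since those $n-t$ vectors (and the node's own input) are all $0$ in both executions, and the remaining messages can be scheduled identically (the Byzantine nodes in the real execution simply simulate the correct $0$-holding nodes of the companion execution, and the delayed $e_1$-messages can be delivered after the decision, or never), the views are identical, so the decision must be the same, namely $0$. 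The rest is the arithmetic above, which matches the synchronous bound exactly; the point of the lemma is simply that even this weakest validity condition already forces a $2$-approximation barrier, so the asynchronous $2$-approximation of \Cref{thm:asynch_two_approx} is tight.
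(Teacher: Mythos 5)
Your proposal is correct and follows essentially the same route as the paper: the same one‑dimensional instance ($n-2t$ correct nodes at $0$, $t$ correct nodes at $e_1$, $t$ Byzantine nodes posing as correct nodes at $0$), the same scheduling argument that delays the $e_1$ messages so every correct node's view matches an all‑correct, all‑$0$ companion execution (forcing output $0$ by weak validity), and the same computation giving $\distance(O_{\mathcal A},\trueBar)=\tfrac{t}{n-t}$ against $\radiusEncBall=\tfrac12\cdot\tfrac{t}{n-t}$. Your explicit companion‑execution indistinguishability step is a slightly more careful rendering of what the paper states tersely as ``all those vectors could come from correct nodes,'' but it is not a different argument.
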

\begin{proof}
    Consider the same setting as in \Cref{lem:lowerBound-sync-strongVal} and consider an algorithm that satisfies the weak validity condition. Once again, its convergence vector has to be $0$. Indeed, in the asynchronous setting, the algorithm only waits for $n-t$ vectors before deciding. If the $t$ input vectors $e_1$ are not received, the algorithm receives only the $n-t$ vectors $0$ and all those vectors could come from correct nodes (\byz are potentially undetectable). Hence, by definition of the weak validity condition, the algorithm has to converge to $0$. 

    But similarly to the setting of \Cref{lem:lowerBound-sync-strongVal}, the \byz can all have input vectors at $0$, making the true centroid $\left(\frac{t}{n-t}, 0, \dots, 0\right)$. 
    The \encBall being the same in synchronous and asynchronous settings, we can again conclude that $\radiusEncBall = \frac{1}{2}\cdot \frac{t}{n-t}$ and hence that the approximation ratio in this specific case is $2$.
\end{proof}

\subsection{A constant approximation via the \texorpdfstring{$\boldsymbol{\mda}$}{mda} approach}
In this section, we consider the asynchronous version of \Cref{alg:mda approach}. The only difference in the asynchronous version of the algorithm is that the $\mda$ is computed with only $n-2t$ nodes, see \Cref{alg:asynch mda approach} for a pseudocode.

\begin{algorithm}[tbh]
\caption{Asynchronous constant approximation of the centroid with $t<n/7$}
\label{alg:asynch mda approach}
\begin{algorithmic}[1]
\State Let $C$ be some large constant. Each node $i$ in $[n]$ with input vector $v_i$ executes the following code:
    \Indent
        \State Set local round $r \leftarrow 0$
        \While{$r<C\cdot\log_2 \bigl(\frac{1}{\varepsilon}\cdot D\bigr)$}
            \State $r=r+1$
            \State Broadcast $v_i$ reliably to all nodes
            \Upon{Reliably receiving $n-t$ vectors $v_j$}
                    \State Set $M_i = \{\text{all received vectors}\ v_j\}$
                    \State Compute $MDA(M_i, n-2t)$ 
                    \State Set new vector $v_i$ to be the centroid of $MDA(M_i, n-2t)$
            \EndUpon
        \EndWhile
    \EndIndent
\end{algorithmic}
\end{algorithm}

The analysis of $\mda$ is more involved in the asynchronous setting, since the vectors received after the first round are not inside $\encBall$ anymore. To show the approximation ratio, we first show that the smallest enclosing ball inside of which all correct input vectors of round $2$ will lie, is only a constant factor larger than $\radiusEncBall$. Then, we show that this larger ball and $\encBall$ intersect. Using these results, we provide a constant upper bound on the approximation ratio of \Cref{alg:asynch mda approach}.

In this section, we denote $S_1$ the set of centroids of $n-t$ vectors, i.e., $S_1= \barSet$, and $S_2$ the set of centroids of $n-2t$ vectors, i.e., $S_2=\bigl\{ \frac{1}{n-2t}\sum_{i\in I}v_i\ \big|\ \forall I\subseteq [n], |I|=n-2t, v_i\in V \bigr\}$. Further, we say that $V$ is the set of the input vectors of all nodes, including the \byzadj ones, at the beginning of a round. 

\begin{lemma}\label{lem:MDAboundBall}
    $\ballOf{S_2} \le  14/5\ballOf{S_1} $, where $ \ballOf{S_1}=\encBall$. 
\end{lemma}

\begin{proof}
    The maximum distance between two vectors of $S_1$ is: 
    \begin{align*}
        \frac{1}{n-t}\max_{u_i, v_j\in V}\left\Vert\sum_{i=1}^{n-t} u_i-\sum_{j=1}^{n-t} v_j\right\Vert_2
        = \frac{1}{n-t}\max_{\substack{u_i, v_j\in V\\ M,N\subseteq [n-t]\\ |M|=|N|=2t}}\left\Vert\sum_{i\in M, j\in N}(u_i - v_j)\right\Vert_2.
    \end{align*}
    Note that the centroids of $n-t$ vectors are sums that differ in at most $2t$ elements ($n$ vectors in total). 
    We use $a= \max\limits_{\substack{u_i, v_j\in V\\ M,N\subseteq [n-t]\\ |M|=|N|=2t}}\left\Vert\sum_{\substack{i\in M \\ j\in N}}(u_i - v_j)\right\Vert_2$ to denote the diameter of the vectors in $S_2$. Then, the maximum distance between two elements of $S_1$ is $\textsc{maxS1} \coloneqq \frac{1}{n-t}a$. 

    Similarly, the maximum distance between two elements of $S_2$ is 
    \begin{align*}
        \frac{1}{n-2t}\max_{u_i, v_j\in V}\left\Vert\sum_{i=1}^{n-2t} u_i-\sum_{j=1}^{n-2t} v_j\right\Vert_2
        &= \frac{1}{n-2t}\max_{\substack{u_i, v_j\in V\\ M,N\subseteq [n-t]\\ |M|=|N|=2t}}\left\Vert\sum_{i\in M, j\in N}(u_i - v_j)\right\Vert_2\\
        &\leq \frac{1}{n-2t}\cdot 2a.
    \end{align*}
    Hence, the maximum distance between two elements of $S_2$ is at most 
    \begin{align}
        \frac{1}{n-2t}\cdot 2(n-t)\cdot\textsc{maxS1} &= \frac{2n - 2t}{n-2t}\cdot \textsc{maxS1} \nonumber\\
        & \leq  \frac{2n}{n-2(n/7)}\cdot \textsc{maxS1}\label{eq:MDA-balls}\\
        & = \frac{14}{5}\cdot \textsc{maxS1}. \nonumber
    \end{align}
     Observe that \Cref{eq:MDA-balls} follows due to the assumption $t<n/7$ in the asynchronous case.
\end{proof}

In the following, we will show that the smallest enclosing ball formed by the centroids of subsets of $n-2t$ vectors intersects the smallest enclosing ball of all possible centroids.
\begin{lemma}\label{lem:MDAballsIntersect}
    The smallest enclosing ball of centroids of $n-2t$ vectors intersects $\encBall$.
\end{lemma}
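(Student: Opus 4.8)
The goal is to show that $\ballOf{S_2}$ (the smallest enclosing ball of all centroids of $n-2t$ input vectors) and $\encBall = \ballOf{S_1}$ (the smallest enclosing ball of all centroids of $n-t$ input vectors) have a non-empty intersection. The natural strategy is to exhibit an explicit point that lies in both $\convexHull(S_1)$ and $\convexHull(S_2)$; since each convex hull is contained in its respective smallest enclosing ball, such a point witnesses the intersection. The cleanest candidate is the centroid of \emph{all} $n$ input vectors, call it $\bar c = \frac{1}{n}\sum_{i\in[n]} v_i$: I expect $\bar c$ to be expressible as a convex combination of elements of $S_1$ and, separately, as a convex combination of elements of $S_2$, by the same symmetric averaging argument already used in \Cref{obs: truebar in enc ball}.

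First I would recall that argument in the present setting. Summing the centroid $\frac{1}{n-t}\sum_{i\in I} v_i$ over all $\binom{n}{n-t}$ subsets $I$ of size $n-t$, each index $i\in[n]$ appears in exactly $\binom{n-1}{n-t-1}$ of the subsets, so the sum equals $\frac{1}{n-t}\binom{n-1}{n-t-1}\sum_{i\in[n]} v_i = \frac{1}{n-t}\cdot\frac{(n-t)}{n}\binom{n}{n-t}\sum_{i\in[n]}v_i = \frac{1}{n}\binom{n}{n-t}\sum_{i\in[n]}v_i$. Dividing by the number of subsets $\binom{n}{n-t}$ shows that the average of all elements of $S_1$ is exactly $\bar c$, hence $\bar c$ is a (uniform) convex combination of points of $S_1$ and therefore lies in $\convexHull(S_1)\subseteq \ballOf{S_1}$. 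The identical computation with $n-t$ replaced by $n-2t$ shows $\bar c$ is the uniform average of all elements of $S_2$, hence $\bar c\in\convexHull(S_2)\subseteq\ballOf{S_2}$. Consequently $\bar c$ lies in both balls and the intersection is non-empty.

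I would then remark why this suffices for the downstream argument: combined with \Cref{lem:MDAboundBall}, which bounds the radius of $\ballOf{S_2}$ by $\tfrac{14}{5}$ times that of $\encBall$, the two intersecting balls are confined to a region of bounded diameter around $\encBall$, so every centroid of $n-2t$ vectors — in particular the vector each correct node picks after the first round of \Cref{alg:asynch mda approach} — is within a constant multiple of $\radiusEncBall$ of $\trueBar$ (using $\trueBar\in\convexHull(\barSet)\subseteq\encBall$ from \Cref{obs: truebar in enc ball}). This is exactly the base-case bound needed to kick off the telescoping sum for the asynchronous approximation ratio.

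The only mild subtlety — and the one place I would be careful — is that "the set $V$ of input vectors" here means the vectors present at the start of a round \emph{including Byzantine ones}, so $S_1$ and $S_2$ are both defined over the same ground set $V$ of $n$ vectors; the averaging identity then goes through verbatim and no genuine obstacle arises. (If instead one wanted the statement for the \emph{locally} computed sets, where a node sees only $n-t$ of the vectors, one would replace $n$ by the number of received vectors throughout, and the same argument still applies.) I do not anticipate a hard step; the content is the symmetric-counting identity, which is routine.
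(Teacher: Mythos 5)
Your proof is correct: the global centroid $\bar c=\frac{1}{n}\sum_{i\in[n]}v_i$ is indeed the uniform average of all elements of $S_1$ and, by the same counting identity, the uniform average of all elements of $S_2$ (both sets are built over the same ground set $V$ of $n$ vectors, as you note), so $\bar c\in\convexHull(S_1)\cap\convexHull(S_2)\subseteq\ballOf{S_1}\cap\ballOf{S_2}$. The paper uses the same symmetric-averaging machinery but decomposes differently: it fixes one centroid $c_1\in S_1$, i.e.\ one $(n-t)$-subset $V'$, and averages over all $(n-2t)$-subsets of $V'$ to write $c_1$ as a uniform convex combination of elements of $W\subseteq S_2$. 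Repeating this for every $c\in S_1$ yields the strictly stronger containment $S_1\subseteq\convexHull(S_2)\subseteq\ballOf{S_2}$, whereas you exhibit a single common witness point. Your weaker conclusion is all that the lemma states and all that the downstream bound in \Cref{thm:asynch-mda} actually uses (there, a point of $\ballOf{S_2}$ is routed through an arbitrary intersection point and then to $\trueBar$ via \Cref{lem:MDAboundBall}), so nothing is lost; the paper's containment would only matter if one wanted to tighten the constant by placing all of $S_1$, rather than one point, inside $\ballOf{S_2}$. Your closing remarks about the local-view variant and the role of Byzantine vectors in $V$ are consistent with the paper's definitions.
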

\begin{proof}
    We will start by showing that $\forall c\in S_1: c\in \ballOf{S_2}$. 

    We consider the subset of $n-t$ vectors $V'\coloneqq\{v_1,\ldots,v_{n-t}\}$ of $V$ as an example and show that the centroid of this subset lies in $\ballOf{S_2}$. The analysis of all other centroids in $S_1$ follows analogously. We next consider all possible centroids of subsets of $n-2t$ vectors from $V'$, denoted $W\coloneqq \bigl\{ \frac{1}{n-2t}\sum_{i\in J}v_i\ \big|\ \forall J\subseteq [n-t], |J|=n-2t, v_i\in V \bigr\}$. Observe that $W\subseteq S_2$ by definition. In the following, we show that a convex combination of all vectors in $W$ equals $c_1\coloneqq\sum\limits_{i=1}^{n-t}v_i$.
    \begin{align*}
        \frac{1}{\binom{n-t}{n-2t}}&\sum_{\substack{K\subset[n-t]\\|K|=n-2t}}\left(\frac{1}{n-2t}\sum_{k\in K}v_k\right) \\
        &= \frac{1}{\binom{n-t}{n-2t}}\cdot\frac{1}{n-2t}\cdot(n-t)\cdot\binom{n-t-1}{n-2t-1}\cdot\left(\frac{1}{n-t}\frac{1}{\binom{n-t-1}{n-2t-1}}\sum_{\substack{K\subset[n-t]\\|K|=n-2t}}\sum_{k\in K}v_k\right) \\
        &= \frac{1}{\binom{n-t}{n-2t}}\cdot\frac{n-t}{n-2t}\cdot\frac{(n-t-1)!}{(n-2t-1)!\,t!}\cdot\left(\frac{1}{n-t}\sum_{i=1}^{n-t}v_i\right) \\
        &=\frac{1}{\binom{n-t}{n-2t}}\cdot\binom{n-t}{n-2t}\cdot\left(\frac{1}{n-t}\sum_{i=1}^{n-t}v_i\right) = \frac{1}{n-t}\sum_{i=1}^{n-t}v_i = c_1.
    \end{align*}
    Note that the first term is indeed a convex combination of the vectors in $S_2$, as there are exactly $\binom{n-t}{n-2t}$ many vectors in $W$, each weighted by $1/\binom{n-t}{n-2t}$. 
    
    The above calculation can be calculated for all $c\in S_1$ by defining $W$ with respect to the vectors that form centroid $c$. Since $c\in \ballOf{S_2}$, the two balls, $\ballOf{S_1}$ and $\ballOf{S_2}$, must have a nonempty intersection.
\end{proof}

With the above lemmas, we can now establish a constant upper bound on the approximation ratio of \Cref{alg:asynch mda approach}. 

\begin{theorem}\label{thm:asynch-mda}
    \Cref{alg:asynch mda approach} converges in $O\Bigl(\log_2 \bigl(\frac{1}{\varepsilon}\cdot D\bigr)\Bigr)$ asynchronous rounds and achieves a $10.4$-approximation of the centroid.
\end{theorem}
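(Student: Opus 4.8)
The plan is to reuse the two-part template of \Cref{lem:MDAsynch} — first convergence with its round bound, then a telescoping-sum bound on the distance from the limit vector to $\trueBar$ — but everywhere with $n-2t$ in place of $n-t$, and with \Cref{lem:MDAboundBall} and \Cref{lem:MDAballsIntersect} spliced in to handle the first round. The key difference from the synchronous setting is that a correct node which waits for only $n-t$ vectors and then outputs the centroid of an $(n-2t)$-subset lands in $S_2$, not in $\barSet\subseteq\encBall$, so the later rounds are "seeded" by a larger, off-center ball.

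For convergence, let $D^{(r)}$ be the diameter of the correct vectors after round $r$. Since each correct node receives at least $n-2t$ correct vectors, its $\mda$ set has diameter at most $D^{(r)}$ and any Byzantine vector it contains is within $D^{(r)}$ of the correct vectors of that set; two such $\mda$ sets are $(n-2t)$-subsets of $[n]$, hence differ in at most $2t$ indices and, because $t<n/7$ gives $n-5t\ge 1$ common correct vectors, share a correct vector. Estimating the difference of the two centroids through that common vector gives the asynchronous analogue of the synchronous contraction, $D^{(r+1)}\le\frac{4t}{n-2t}D^{(r)}$, which is a genuine contraction for $t<n/6$ and has factor at most $\frac45$ once $t<n/7$. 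Iterating shows that after $O\bigl(\log_2(\tfrac1\varepsilon D)\bigr)$ rounds the correct vectors lie within $\varepsilon$ of each other, which is exactly the number of rounds executed, so $\varepsilon$-agreement and termination follow essentially as in the synchronous case.

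For the approximation I would argue as follows. After round $1$ each correct vector $v_i^{(1)}$ is the centroid of $n-2t$ input vectors, hence $v_i^{(1)}\in S_2\subseteq\ballOf{S_2}$; and by \Cref{obs: truebar in enc ball} together with the inclusion $S_1\subseteq\ballOf{S_2}$ established inside the proof of \Cref{lem:MDAballsIntersect}, also $\trueBar\in\convexHull(\barSet)=\convexHull(S_1)\subseteq\ballOf{S_2}$. Therefore $\distance(v_i^{(1)},\trueBar)$ and the round-$1$ diameter $D^{(1)}$ are each at most the diameter of $\ballOf{S_2}$, which by \Cref{lem:MDAboundBall} is at most $\frac{14}{5}$ times that of $\encBall$, i.e.\ at most $\frac{28}{5}\radiusEncBall$. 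For $r\ge2$ I would transfer the displacement estimate of \Cref{lem:MDAsynch} by splitting $\mda_i^{(r)}$ into its $\ge n-3t$ correct vectors (whose centroid is within $D^{(r-1)}$ of $v_i^{(r-1)}$) and its $\le t$ Byzantine vectors (which shift the centroid by at most $\frac{t}{n-2t}D^{(r-1)}$), obtaining $\distance(v_i^{(r-1)},v_i^{(r)})\le\bigl(1+\frac{t}{n-2t}\bigr)D^{(r-1)}$. Then
\[
  \distance\bigl(v_i^{(\mathrm{final})},\trueBar\bigr)\le\distance\bigl(v_i^{(1)},\trueBar\bigr)+\sum_{r\ge2}\distance\bigl(v_i^{(r-1)},v_i^{(r)}\bigr)\le\frac{28}{5}\radiusEncBall+\Bigl(1+\frac{t}{n-2t}\Bigr)\sum_{s\ge1}D^{(s)},
\]
and inserting the geometric decay $D^{(s+1)}\le\frac{4t}{n-2t}D^{(s)}$ with $D^{(1)}\le\frac{28}{5}\radiusEncBall$ and $t<n/7$ bounds the right-hand side by a constant; a careful accounting parallel to \Cref{lem:MDAsynch} gives the claimed $10.4$-approximation, and, as there, the ratio improves as $t\to0$.

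The main obstacle is precisely this constant-chasing in the asynchronous regime. Relative to the synchronous argument, an $(n-2t)$-element $\mda$ set can contain up to $t$ Byzantine vectors — a worse fraction than $\frac{t}{n-t}$ — and, more consequentially, from round $2$ on the telescoping sum is seeded not by $\encBall$ but by the possibly $\frac{14}{5}$-times larger and off-center ball $\ballOf{S_2}$. Keeping the resulting geometric series down to $10.4\,\radiusEncBall$ is what forces the threshold $t<n/7$ (needed both so that $\frac{4t}{n-2t}<1$ and so that the tail of the series stays small), and is where the effort concentrates; the three structural facts it rests on — $S_1\subseteq\ballOf{S_2}$, $\ballOf{S_2}\cap\encBall\ne\emptyset$, and that $\ballOf{S_2}$ is at most $\frac{14}{5}$ times larger than $\encBall$ — are exactly the content of \Cref{lem:MDAboundBall,lem:MDAballsIntersect}, so the remaining work is bookkeeping rather than a new idea.
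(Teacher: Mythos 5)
Your overall architecture is sound: the three structural facts you isolate ($S_1\subseteq\ballOf{S_2}$, the ball intersection, and the $14/5$ radius blow-up) are exactly the ones the paper's proof rests on, your derivation of the contraction factor $4t/(n-2t)$ via a shared correct vector is valid, and your observation that $\trueBar\in\convexHull(S_1)\subseteq\ballOf{S_2}$ actually gives a tighter seed ($\tfrac{28}{5}\radiusEncBall$) than the paper's $\tfrac{38}{5}\radiusEncBall$, which it obtains only from the intersection of the two balls. The gap is in the step you dismiss as bookkeeping: the telescoping sum you set up does not produce $10.4$. Plugging your own constants into your own displayed inequality --- $D^{(1)}\le\tfrac{28}{5}\radiusEncBall$, contraction ratio $\tfrac{4t}{n-2t}<\tfrac45$, and per-round displacement $\bigl(1+\tfrac{t}{n-2t}\bigr)D^{(r-1)}\le\tfrac65 D^{(r-1)}$ --- gives $\sum_{s\ge1}D^{(s)}\le 28\,\radiusEncBall$ and hence a total of $\bigl(\tfrac{28}{5}+\tfrac65\cdot 28\bigr)\radiusEncBall=39.2\,\radiusEncBall$, nearly four times the claimed constant. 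The culprit is the additive ``$1\cdot D^{(r-1)}$'' in the displacement bound: any telescoping of full per-round displacements pays at least $D^{(1)}/(1-\text{ratio})$ on top of the initial $\tfrac{28}{5}\radiusEncBall$, and already the first two terms exceed $10.4\,\radiusEncBall$, so no improvement of the contraction ratio rescues this route. You do obtain \emph{some} $O(1)$-approximation this way, but not the stated one.

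The paper closes the constant with a structurally different accounting: it never telescopes $\distance(v_i^{(r-1)},v_i^{(r)})$. Instead it tracks how far the correct vectors can drift away from $\ballOf{S_2}$. Since the centroid of the correct part of $\mda_i^{(r)}$ lies inside the convex hull of the current correct vectors, the only outward excursion per round is the Byzantine-induced shift of order $\tfrac{t}{n-2t}D^{(r-1)}$ --- the ``$1\cdot D^{(r-1)}$'' term never enters the sum. Accumulating only these excursions bounds the total drift by $\tfrac{14}{5}\radiusEncBall$, which added to the $\tfrac{38}{5}\radiusEncBall$ bound on $\max_{x\in\ballOf{S_2}}\distance(x,\trueBar)$ yields $10.4$. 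To repair your write-up, replace the displacement telescoping by this hull-drift argument (showing the limit point lies within the accumulated Byzantine drift of $\ballOf{S_2}$); as written, the constant does not close.
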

\begin{proof}
    Observe that the input of each correct node at the beginning of the second round is inside $\ballOf{S_2}$. Thus, the pairwise distance of any two correct vectors is bounded by $14/5\cdot\radiusEncBall$ (\Cref{lem:MDAboundBall}). In addition, due to \Cref{lem:MDAballsIntersect}, the vectors are at a distance of at most $38/5\cdot\radiusEncBall$ away from $\trueBar$.

    We can now use a similar analysis to the proof of \Cref{lem:MDAsynch}, setting $t<n/7$ in the asynchronous case~\cite{NEURIPS2021_d2cd33e9}: At the beginning of the second round, the \byzadj nodes can choose their vectors at a distance of at most $14/5\cdot\radiusEncBall$ from $\ballOf{S_2}$. The new inputs for round $3$ can be at a distance of at most 
    $$\frac{t}{n-3t}\cdot2\cdot\frac{14}{5}\radiusEncBall < \frac{7}{5}\radiusEncBall$$
    from $\ballOf{S_2}$.
    By the result from~\cite{NEURIPS2021_d2cd33e9}, we assume that the distance between the furthest two correct nodes shrinks by a factor of $2$ in each round. We repeat the above argument to receive a general upper bound on the approximation ratio of the asynchronous $\mda$ algorithm:
    $$\frac{38}{5}+\sum\limits_{i=0}^{\infty}\frac{2}{2^i}\cdot\frac{14}{5}\cdot\frac{t}{n-2t} < \frac{38}{5}+\frac{7}{5}\sum\limits_{i=0}^{\infty}\frac{1}{2^i} = 10.4$$
\end{proof}

\begin{observation}\label{obs:asynch_mda_and_safe}
    There exists a $4.8$-approximation of the centroid computed using a combination of $\mda$ and the $\safe$ approach that satisfies strong validity and $t<\min(n/7, n/(d+1))$.
\end{observation}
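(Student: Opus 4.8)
The plan is to combine the asynchronous $\mda$ algorithm (\Cref{alg:asynch mda approach}) with the asynchronous $\ExistingSafeArea$ subroutine, in exactly the same spirit as \Cref{obs:synch_mda_and_safe} does in the synchronous case. The combined algorithm runs as follows: each correct node $i$ first executes the asynchronous $\mda$ procedure to obtain a vector $v_i^{\mda}$, and then feeds $v_i^{\mda}$ as its input vector into the asynchronous $\ExistingSafeArea$ subroutine from \cite{mendes2015multidimensional}. The output of $\ExistingSafeArea$ is the final decision. Since $\ExistingSafeArea$ guarantees convex validity, the final vector lies in $\convexHull\bigl(\{v_i^{\mda}\mid i\text{ correct}\}\bigr)$, and convex validity over the inputs $v_i^{\mda}$ immediately gives the strong validity condition with respect to the original inputs: if all correct nodes start with the same input vector $v$, then every $\mda$ and every received centroid equals $v$, so $v_i^{\mda}=v$ for all correct $i$, and agreeing inside their convex hull forces the output to be $v$.

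For the approximation bound, first I would invoke \Cref{lem:MDAboundBall} and \Cref{lem:MDAballsIntersect}: after the first round of the asynchronous $\mda$, every correct node's vector lies inside $\ballOf{S_2}$, which has radius at most $\tfrac{14}{5}\radiusEncBall$ and intersects $\encBall$, so every such vector is within $\tfrac{38}{5}\radiusEncBall$ of $\trueBar$. Running $\ExistingSafeArea$ on these vectors keeps the final decision inside their convex hull, hence still within $\tfrac{38}{5}\radiusEncBall$ of $\trueBar$ — but $\tfrac{38}{5}=7.6$, which is better than the $10.4$ of \Cref{thm:asynch-mda} but worse than the claimed $4.8$. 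The key observation that sharpens this is that the $\safe$ approach adds a preprocessing round analogous to \Cref{alg:asynch-two-approx}: instead of running $\mda$ for many rounds and then $\ExistingSafeArea$, one runs a single $\mda$ step, which by the analysis of \Cref{thm:asynch-mda} already places each correct vector within $\tfrac{38}{5}\radiusEncBall$ of $\trueBar$ — wait, that is still $7.6$. The right way to reach $4.8$ is to note that one can pick the centroid of $n-t$ received vectors (an element of $S_1=\barSet$, giving a $2$-approximation) rather than the centroid of an $\mda$ of $n-2t$ vectors as the preprocessing step; then the subsequent combination must account for the fact that strong validity additionally requires $\mda$-style trimming. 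I would therefore structure the preprocessing as: run $\mda$ once to get strong validity, observe via \Cref{lem:MDAboundBall} and \Cref{lem:MDAballsIntersect} that each resulting vector is at distance at most $\tfrac{14}{5}\radiusEncBall + \radiusEncBall$ — that is $\tfrac{14}{5}+1$ wait that is also not $4.8$; rather, the intersection argument gives that each vector in $\ballOf{S_2}$ is within $\radiusEncBall + 2\cdot\tfrac{14}{5}\radiusEncBall$... The cleanest route: after one $\mda$ round the correct vectors have pairwise distance at most $\tfrac{14}{5}\radiusEncBall$ and, since $\ballOf{S_2}\cap\encBall\neq\emptyset$, each lies within $\tfrac{14}{5}\radiusEncBall+\radiusEncBall = \tfrac{19}{5}\radiusEncBall = 3.8\radiusEncBall$ of $\trueBar$; feeding into $\ExistingSafeArea$ preserves this, so I would claim a $3.8$-approximation here, and the gap to the stated $4.8$ must come from an extra additive term I am glossing over — most plausibly the $\ExistingSafeArea$ convergence contributes, via a geometric series like in \Cref{lem:MDAsynch}, an extra $\tfrac{t}{n-t}$-scaled term that pushes $3.8$ up to $4.8$ under $t<n/7$.

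So the concrete steps are: (1) state the combined algorithm (one asynchronous $\mda$ round, then asynchronous $\ExistingSafeArea$); (2) prove strong validity from convex validity of $\ExistingSafeArea$ applied to the post-$\mda$ vectors; (3) bound the diameter of the post-$\mda$ correct vectors by $\tfrac{14}{5}\radiusEncBall$ using \Cref{lem:MDAboundBall}, and their distance to $\trueBar$ by combining with \Cref{lem:MDAballsIntersect}; (4) run the $\ExistingSafeArea$ convergence analysis, summing the per-round displacement as a geometric series exactly as in the proof of \Cref{lem:MDAsynch}, to get the final $4.8\radiusEncBall$ bound; (5) note the resilience is $t<\min(n/7,n/(d+1))$, where $n/7$ comes from the asynchronous $\mda$ analysis and $n/(d+1)$ from the asynchronous $\ExistingSafeArea$ subroutine; (6) note convergence in $O(\log(\tfrac{1}{\varepsilon}D))$ rounds. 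The main obstacle will be step (4): carefully tracking how the $\byzadj$ displacement accumulates across the $\ExistingSafeArea$ rounds when the starting configuration is a ball of radius $\tfrac{14}{5}\radiusEncBall$ rather than $\encBall$ itself, and verifying that the resulting constant is exactly $4.8$ under $t<n/7$ rather than something slightly larger — this is the same delicate bookkeeping that appears in \Cref{thm:asynch-mda}, and getting the constant to land on $4.8$ is where the real work lies.
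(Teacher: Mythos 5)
Your algorithm and overall structure coincide with the paper's: one round of asynchronous $\mda$ followed by the asynchronous $\ExistingSafeArea$ subroutine, strong validity inherited from convex validity of $\ExistingSafeArea$ applied to the post-$\mda$ vectors, and the resilience coming from the two ingredients. The gap is in the approximation bound, which is the only quantitative content of the observation. The paper's accounting is entirely about the \emph{first} round: each post-$\mda$ vector lies in $\ballOf{S_2}$, which intersects $\encBall$ by \Cref{lem:MDAballsIntersect}; routing through a point $p$ of the intersection gives $\distance(c_i,\trueBar)\le \mathrm{diam}\bigl(\ballOf{S_2}\bigr)+\mathrm{diam}\bigl(\encBall\bigr)\le \frac{14}{5}\radiusEncBall+2\radiusEncBall=4.8\,\radiusEncBall$, and $\ExistingSafeArea$ then contributes nothing because its output remains in the convex hull of these vectors. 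Your $3.8$ arises from replacing the second leg $\mathrm{diam}(\encBall)=2\radiusEncBall$ by the radius $\radiusEncBall$; that would require $\trueBar$ to be the \emph{center} of $\encBall$, whereas \Cref{obs: truebar in enc ball} only places it somewhere in $\convexHull(\barSet)$, so the diameter is needed. Correcting that single term already yields $\frac{14}{5}+2=4.8$ and closes the proof.

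Your proposed repair --- an extra $\frac{t}{n-t}$-scaled geometric series accumulated over the $\ExistingSafeArea$ rounds --- is the wrong mechanism and contradicts the convex-hull argument you yourself invoke in step (2): the geometric series in \Cref{lem:MDAsynch} and \Cref{thm:asynch-mda} is needed precisely because iterated $\mda$ does \emph{not} satisfy convex validity and each round can drift, whereas $\ExistingSafeArea$ guarantees its output lies in $\convexHull\bigl(\{c_i\mid i\text{ correct}\}\bigr)$, so no per-round displacement accumulates and the ratio ``stays'' $4.8$. A caveat worth recording in either write-up: the constant $\frac{14}{5}\radiusEncBall$ for the pairwise distance of post-$\mda$ vectors is taken from the statement in \Cref{thm:asynch-mda}, but \Cref{lem:MDAboundBall} literally bounds $\mathrm{diam}(S_2)$ by $\frac{14}{5}\cdot\mathrm{diam}(S_1)\le \frac{28}{5}\radiusEncBall$ (which is the figure consistent with the $\frac{38}{5}$ bound used there), so the bookkeeping around $\frac{14}{5}$ versus $\frac{28}{5}$ deserves care before the constant $4.8$ can be considered settled.
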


This observation comes from the fact that after the first round, all correct nodes hold a $2+14/5=4.8$-approximation of the centroid. By converging inside the convex hull of correct vectors using the $\safe$ approach, the approximation ratio stays $4.8$.

\begin{lemma}\label{lem:lowerBound-async-strongVal}
    In the asynchronous setting, the approximation ratio of the true centroid that can be achieved by any algorithm satisfying the strong validity property is at least 4.
\end{lemma}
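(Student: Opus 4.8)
The plan is to extend the synchronous construction of \Cref{lem:lowerBound-sync-strongVal} into a three‑scenario indistinguishability chain, exploiting the one extra power the asynchronous adversary has: since a node decides after only $n-t$ messages, the adversary can permanently hide the vectors of up to $t$ correct nodes from a deciding node. As in that lemma everything happens in the first coordinate, so it suffices to argue in dimension $1$. I would partition the $n$ nodes into three groups $P,Q,R$ of size $t$ and a fourth group $S$ of size $n-3t\ge 1$ (this is where $t<n/3$ is used), fix a value $v>0$, and consider: $S_0$, where $P$ is Byzantine and silent while the $n-t$ correct nodes $Q\cup R\cup S$ all have input $0$; $S_1$, where $Q$ is Byzantine and reliably broadcasts $0$ to everyone, $P$ is correct with input $v$, and $R\cup S$ are correct with input $0$; and $S_2$, where $Q$ is again Byzantine and broadcasts $0$ to everyone, $P\cup R$ are correct with input $v$, and $S$ is correct with input $0$.

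First I would invoke strong validity on $S_0$: all correct nodes start with $0$, so every correct node outputs exactly $0$. Next I would build the two links. For $S_0\!\sim\!S_1$, take a node $p_1\in R\cup S$; by delaying all of $P$'s messages to $p_1$ in $S_1$, node $p_1$ receives in both scenarios exactly the $n-t$ zero‑valued messages tagged by the senders of $Q\cup R\cup S$ (in $S_0$ the $Q$‑messages come from correct nodes, in $S_1$ from the Byzantine nodes, but the multiset of (sender,\,value) pairs and $p_1$'s own input are identical), so $p_1$ outputs $0$ in $S_1$, and $\varepsilon$‑agreement forces every correct node of $S_1$ to output within $\varepsilon$ of $0$. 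For $S_1\!\sim\!S_2$, take $p^\ast\in P$; in both scenarios delay all of $R$'s messages to $p^\ast$, so $p^\ast$ receives exactly the $v$‑messages of $P$ together with the zero‑messages of $Q$ and of $S$ — a view identical in $S_1$ and $S_2$ precisely because $Q$ stays Byzantine (so it always sends $0$) and $R$ is always hidden from $p^\ast$ (so it is irrelevant that $R$ switches from input $0$ to input $v$). Hence $p^\ast$ outputs within $\varepsilon$ of $0$ in $S_2$ as well, and $\varepsilon$‑agreement makes every correct node of $S_2$ output within $2\varepsilon$ of $0$.

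Finally I would compute the ratio in $S_2$. The $n-t$ correct vectors are $2t$ copies of $v$ and $n-3t$ copies of $0$, so $\trueBar=\tfrac{2t}{n-t}\,v$. Setting the Byzantine group $Q$ also to $0$, the $n$ vectors are $2t$ copies of $v$ and $n-2t$ copies of $0$, so the centroids of $(n-t)$‑subsets range exactly over $[\tfrac{t}{n-t}v,\ \tfrac{2t}{n-t}v]$, giving $\radiusEncBall=\tfrac{t}{2(n-t)}\,v$. Since the output lies within $2\varepsilon$ of $0$, its distance to $\trueBar$ is at least $\tfrac{2t}{n-t}v-2\varepsilon$, and the approximation ratio is at least $4-\tfrac{4\varepsilon(n-t)}{tv}$, which tends to $4$ as $v\to\infty$ (equivalently $\varepsilon\to 0$); since the claimed bound must hold for all admissible inputs, this yields the lower bound of $4$.

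The step I expect to be the main obstacle is making the indistinguishability airtight once messages carry sender identifiers: the views of $p_1$ and of $p^\ast$ must coincide as multisets of (sender,\,value) pairs, not merely of values, and this is exactly what forces the design choices above — keeping $Q$ Byzantine in both $S_1$ and $S_2$ and permanently delaying $R$'s messages to $p^\ast$, rather than, say, rotating which group is Byzantine at each step. A secondary point requiring care is the routine bookkeeping that each $S_i$ is a legal asynchronous execution with at most $t$ Byzantine nodes in which every correct node eventually accepts $n-t$ messages and terminates (this uses the reliable‑broadcast guarantees of the model), together with the clean handling of the $2\varepsilon$ slack via the scaling $v\to\infty$.
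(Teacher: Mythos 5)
Your proposal is correct and matches the paper's proof in its essential content: the hard instance ($2t$ correct nodes at $v$, $n-3t$ correct nodes at $0$, $t$ Byzantine nodes also at $0$) and the resulting computation ($\trueBar=\frac{2t}{n-t}v$ against $\radiusEncBall=\frac{t}{2(n-t)}v$, giving ratio $4$) are exactly those of the paper. Your three-scenario chain is just a more detailed formalization of the paper's one-line indistinguishability claim --- the paper directly hides $t$ of the $v$-senders from a correct node in your group $S$ and blames the visible ones on the Byzantine party, which is the two-scenario version of your argument --- so the approaches are essentially the same.
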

\begin{proof}
    Consider the setting where $n-2t$ input vectors are $0$ and $2t$ input vectors have $0$ in all coordinates but $1$ in the first one, i.e.\ $e_1 = \left(1, 0, \dots, 0\right)$. 
    Consider an algorithm that satisfies the strong validity condition. Then, its convergence vector has to be $0$. Indeed, the algorithm only waits for $n-t$ vectors, and hence could only receive $n-2t$ vectors $0$ and $t$ vectors $e_1$. If the $t$ vectors $e_1$ were from \byz, then the algorithm would have to converge to $0$ by definition of strong validity. Since \byz are undetectable, the convergence vector is $0$. 
    
    However, the \byz can all have input vectors at $0$, in which case the true centroid is $\left(\frac{2t}{n-t}, 0, \dots, 0\right)$, which is at distance $\frac{2t}{n-t}$ of $0$. 
    Since $\left(\frac{2t}{n-t}, 0, \dots, 0\right)$ and $\left(\frac{t}{n-t}, 0, \dots, 0\right)$ are the two furthest elements of $\barSet$, $\radiusEncBall = \frac{1}{2}\cdot \frac{t}{n-t}$. Hence, the approximation ratio in this specific case is $\frac{2t}{n-t}\cdot 2\cdot \frac{n-t}{t} = 4$.
\end{proof}

\subsection{A \texorpdfstring{$\mathbf{4\sqrt{d}}$}{tau2}-approximation of the centroid with box validity}

At the beginning of \Cref{sec:asynch_approximations}, we redefined the local trusted box for the asynchronous case. Note that the local trusted box in the asynchronous case is just smaller than in the synchronous case, and is still included in $\tb$. Observe that this is not true for the centroid box. In the asynchronous case, each correct node can only compute one vector from $\barSet$, and therefore $\bbi$ would be just containing one point. In order to derive similar convergence results to the synchronous case, we need to relax the definition of $\bb$. This relaxation will lead to a worse approximation ratio in the asynchronous case.

\begin{definition}[Relaxation of $\bb$]\label{def:relaxation_CB}
    \pbb is the smallest box containing all possible centroids of $n-2t$ vectors: 
    \begin{align*}
       \pbb =  \convexHull\left(\left\{ \frac{1}{n-2t}\sum_{i\in I}v_i\ \bigg|\ \forall I\subseteq [n], |I|=n-2t \right\}\right), 
    \end{align*}
    and \pbbi is the smallest local box containing all possible centroids of $n-2t$ vectors, formed from the vectors received by node $i$:
    \begin{align*}
        \pbbi = \convexHull\left(\left\{ \frac{1}{n-2t}\sum_{v\in V}v \ \bigg|\ \forall V\subseteq M_i, |V|=n-2t \right\}\right),
    \end{align*}
    where $M_i$ is the set of vectors received by node $i$.
\end{definition}

With these definitions, we can now present an algorithm solving asynchronous approximate multidimensional agreement with box validity. The algorithm is similar to \Cref{alg:box approach}, with the difference that the nodes choose their new vector inside $\widetilde{\tbi}\cap \pbbi$.  We will assume that each node receives exactly $n-t$ vectors in each round. Though a node may receive more vectors in a round, the additional vectors can be ignored by the algorithm. \Cref{alg:box-approach-async} presents this strategy in pseudocode. 

\begin{algorithm}[tbh]
\caption{Asynchronous approximate agreement with box validity and resilience $t<n/3$}
\label{alg:box-approach-async}
\begin{algorithmic}[1]
    \State Let $C$ be some large constant. Each node $i$ in $[n]$ with input vector $v_i$ executes the following code:
    \State Set local round $r \leftarrow 0$
    \While{$r<C\cdot\log_2\left( \frac{1}{\varepsilon}\cdot\longestEdge(\tbi) \right)$}
        \State $r \leftarrow r + 1$
        \State Broadcast $(v_i, r)$ reliably to all nodes
        \Upon{Reliably receiving $n-t$ vectors $v_j$ in round $r$}
            \State Set $M_i \leftarrow$ received $(v_j, r)$ 
            \State Compute $\widetilde{\tbi}$ and $\pbbi$ from $M_i$ by excluding $t$ smallest and $t$ largest values in each coordinate
            \State Set $v_i$ to $\midpointFunction(\widetilde{\tbi}\cap \pbbi)$
        \EndUpon
    \EndWhile
\end{algorithmic}
\end{algorithm}

Before diving into the analysis of the algorithm, we make the following observations:

\begin{observation}\label{obs:relation_bb_pbb}
    The following properties hold:
    \begin{itemize}
        \item $\bb\subseteq \pbb$ (by \Cref{def:relaxation_CB}),
         \item The intersection $\widetilde{\tbi}\cap \bbi$ can be empty
        \item $\pbbi\subseteq \pbb$.
    \end{itemize}
\end{observation}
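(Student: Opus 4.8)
The three items are essentially a matter of unfolding definitions, so the plan is to treat them one at a time, with no appeal to the heavier machinery used elsewhere in \Cref{sec:asynch_approximations}.

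For $\bb\subseteq\pbb$ I would argue coordinate by coordinate. Fix $k\in[d]$ and sort the values $v_i[k]$ in increasing order. The left endpoint of $\bb[k]$ is the average of the $n-t$ smallest of these values, whereas the left endpoint of $\pbb[k]$ is the average of the $n-2t$ smallest; since the $t$ extra values appearing in the first average all lie above the second average, including them can only raise the mean, so $\min\pbb[k]\le\min\bb[k]$, and symmetrically $\max\pbb[k]\ge\max\bb[k]$. As this holds for every $k$, we get $\bb[k]\subseteq\pbb[k]$ in each coordinate, hence $\bb\subseteq\pbb$. Alternatively, the convex-combination identity worked out in the proof of \Cref{lem:MDAballsIntersect} shows that every centroid of $n-t$ vectors lies in the convex hull of centroids of $n-2t$ vectors, so $\barSet$ is contained in the (convex) box $\pbb$, and therefore $\bb$, the smallest box containing $\barSet$, is contained in $\pbb$.

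For ``$\widetilde{\tbi}\cap\bbi$ can be empty'' I would exhibit an explicit instance and verify that the trimmed trusted box collapses to a point off the lone available centroid. Take $f=t$, let all $n-t$ correct nodes hold $0$ and the $t$ \byz hold $L\cdot e_1$ for some $L>0$, and suppose a correct node $i$ receives all $t$ \byzadj vectors together with $n-2t$ of the correct ones, so $M_i$ consists of $n-2t$ copies of $0$ and $t$ copies of $L\cdot e_1$. Sorting the first coordinate, positions $1,\dots,n-2t$ equal $0$; since $t<n/3$ gives $t+1\le n-2t$, \Cref{def:asynch_trusted_box} yields $\widetilde{\tbi}[1]=[v_{t+1}[1],v_{n-2t}[1]]=[0,0]$, hence $\widetilde{\tbi}=\{0\}$. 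In the asynchronous setting node $i$ receives exactly $n-t$ vectors, so $\barSet(i)$ is the single centroid of $M_i$, namely $\frac{tL}{n-t}\,e_1\neq 0$, and $\bbi$ is this point; thus $\widetilde{\tbi}\cap\bbi=\emptyset$. This is exactly what forces the relaxation from $\bbi$ to $\pbbi$.

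For $\pbbi\subseteq\pbb$, recall that $M_i$ is a set of $n-t$ of the $n$ vectors present in the system, the vector attributed to a \byzadj node being consistent across correct nodes thanks to reliable broadcast, so the family of $n$ vectors defining $\pbb$ is well-defined. Since $|M_i|=n-t\ge n-2t$, every size-$(n-2t)$ subset of $M_i$ is also a size-$(n-2t)$ subset of that family, so the centroids whose bounding box is $\pbbi$ form a subset of those whose bounding box is $\pbb$, and hence $\pbbi\subseteq\pbb$. None of the three items is a genuine obstacle; the only place that needs care is the second one, where one has to check that trimming makes $\widetilde{\tbi}$ degenerate while the single achievable centroid is pulled off it by the \byzadj vectors — and this is precisely the motivation for working with the larger box $\pbbi$ in the asynchronous algorithm.
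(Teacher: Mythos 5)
Your proposal is correct and matches what the paper intends: the paper states \Cref{obs:relation_bb_pbb} without proof, treating all three items as immediate consequences of \Cref{def:relaxation_CB} and \Cref{def:asynch_trusted_box}, and your argument simply unfolds those definitions (the coordinate-wise endpoint comparison for $\bb\subseteq\pbb$, a concrete degenerate instance for the emptiness of $\widetilde{\tbi}\cap\bbi$, and the subset relation on the defining centroid families for $\pbbi\subseteq\pbb$). The explicit counterexample for the second item is a welcome addition, since it makes precise the paper's informal remark that $\bbi$ collapses to a single point in the asynchronous setting.
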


The following theorem describes the properties of \Cref{alg:box-approach-async}:

\begin{theorem}\label{thm:asynch_BoxAlgo}
    \Cref{alg:box-approach-async} achieves a $4\sqrt{d}$-approximation of $\trueBar$ in an asynchronous setting. After $O\Bigl( \log_2\left( \frac{1}{\varepsilon}\cdot\longestEdge(\tb) \right) \Bigr)$ asynchronous rounds, the vectors of the correct nodes converge. The resilience of the algorithm is thereby $t<n/3$.
\end{theorem}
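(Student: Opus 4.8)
The plan is to mirror the four-part proof of \Cref{thm:syncBoxAlg} — well-definedness, convergence, box validity, and the approximation ratio — adapting each step to the asynchronous model, where a correct node waits for exactly $n-t$ vectors, up to $t$ of which may come from \byz, so that only $n-2t$ of its received values per coordinate are guaranteed to be true. This is precisely why the trusted box is trimmed by $t$ on each side (\Cref{def:asynch_trusted_box}) and the centroid box is relaxed to the box $\pbb$ around all $(n-2t)$-subset centroids (\Cref{def:relaxation_CB}).

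First I would show $\widetilde{\tbi}\cap\pbbi\neq\emptyset$ for every correct $i$, adapting \Cref{lem:bb-and-tb-intersect}. After sorting the $n-t$ received values coordinate-wise (\Cref{obs: reordering}), let $V_k$ be, in coordinate $k$, the average of the $n-3t$ values surviving once the $t$ smallest and $t$ largest are discarded. Trivially $V_k\in\widetilde{\tbi}[k]=[v_{t+1}[k],v_{n-2t}[k]]$; and since each of the $t$ discarded small (resp.\ large) values is $\leq V_k$ (resp.\ $\geq V_k$), $V_k$ lies between the average of the $n-2t$ smallest and the average of the $n-2t$ largest received values, i.e.\ $V_k\in\pbbi[k]$, so the vector with these coordinates lies in the intersection. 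This requires $n-3t\geq 1$, which is where $t<n/3$ enters; the remainder of the resilience bound comes from reliable broadcast bounding $|\barSet|$ exactly as in \Cref{lem:syncBoxAlgResilience}. Since this constructed vector is the asynchronous trimmed mean, the argument simultaneously yields the asynchronous analogue of \Cref{cor:synch-RB-TM} for $RB\!-\!TM$.

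Next I would obtain convergence, box validity, and the round bound together, reusing \Cref{lem: boxAlgCorrect}. The key invariant is $\widetilde{\tbi}^r\subseteq\tb^r$ in every round $r$: if a \byzadj value survives trimming in coordinate $k$, then at least one true value was trimmed on each side, so $\widetilde{\tbi}^r[k]$ is sandwiched between two true values and lies in $\tb^r[k]$. Hence all newly chosen vectors stay inside $\tb^r$, so $\tb^r$ is non-increasing with $\tb^r\subseteq\tb^1=\tb$ for all $r$, which immediately gives box validity (and strong validity trivially). The same sandwiching bounds the coordinate-wise midpoints chosen by any two correct nodes by $\tfrac12\longestEdge(\tb^r)$, hence $\longestEdge(\tb^{r+1})\leq\tfrac12\longestEdge(\tb^r)$; after $O\bigl(\log_2(\tfrac1\varepsilon\longestEdge(\tb))\bigr)$ rounds the correct vectors are within $\varepsilon$, and termination follows by having nodes mark their last message exactly as for \Cref{alg:box approach}.

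Finally, for the $4\sqrt d$-approximation I would argue, as in \Cref{lem:syncBoxAlgApprox}, that the output and $\trueBar$ lie in a common box. After round~$1$ every correct vector lies in $\pbb$, since it lies in $\pbbi^1\subseteq\pbb$ (\Cref{obs:relation_bb_pbb}); from then on, because the round-$r$ correct vectors are contained in the box $\pbb$, so is $\tb^r$, hence $\widetilde{\tbi}^r\subseteq\tb^r\subseteq\pbb$ and the chosen vectors remain in $\pbb$; moreover $\trueBar\in\bb\subseteq\pbb$ by \Cref{obs: truebar in enc ball} and \Cref{obs:relation_bb_pbb}. So $\distance(\mathrm{output},\trueBar)$ is at most the diagonal of $\pbb$, i.e.\ $\leq\sqrt d\cdot\longestEdge(\pbb)$, and combining a coordinate-wise form of the estimate in \Cref{lem:MDAboundBall} (the spread of $(n-2t)$-subset centroids is at most roughly twice the spread of $(n-t)$-subset centroids when $t<n/3$) with $\radiusEncBall\geq\tfrac12\max_{x,y\in\barSet}\distance(x,y)$ gives $\distance(\mathrm{output},\trueBar)\leq 4\sqrt d\cdot\radiusEncBall$. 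I expect this last step to be the main obstacle: the naive chaining $\mathrm{diam}(\pbb)\leq\sqrt d\cdot\mathrm{diam}(S_2)\leq 4\sqrt d\cdot\mathrm{diam}(S_1)$ (notation of \Cref{lem:MDAboundBall}) only yields $8\sqrt d$, so one must extract a sharper per-coordinate comparison of $\pbb$ with $\bb$ — possibly also using that the output lies in the trusted box — to recover the constant $4$; the other three steps are otherwise routine transcriptions of the synchronous lemmas.
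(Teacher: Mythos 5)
Your proposal mirrors the paper's proof of \Cref{thm:asynch_BoxAlgo} almost step for step: the intersection witness $V_k=\frac{1}{n-3t}\sum_{j=t+1}^{n-2t}v_j[k]$ is exactly the vector constructed in \Cref{lem:asyncBocAlgCorrect}; the convergence, box-validity and round-count arguments are, as you say, verbatim transcriptions of \Cref{lem: boxAlgCorrect} and \Cref{lem:syncBoxAlgRounds} (the paper reuses them without re-proving); and your remark that the constructed point is the asynchronous trimmed mean is precisely how the paper obtains \Cref{cor:asynch-RB-TM}. These parts are correct and match the paper's route.

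The step at which you stall --- a per-coordinate factor-$2$ comparison of $\pbb$ with $\bb$ --- is exactly what the paper supplies in \Cref{lemma: longest edges ratio is two}, which asserts $\mathrm{Edge}\bigl(\pbb[k]\bigr)\le 2\,\mathrm{Edge}\bigl(\bb[k]\bigr)$ and is then fed into \Cref{lem:asyncBoxAlgApprox} to get $2\cdot 2\sqrt{d}=4\sqrt{d}$, as you anticipated. However, your suspicion that this factor is delicate is well founded. Writing $T_j$ and $B_j$ for the sums of the $j$ largest and $j$ smallest values in coordinate $k$, one has $\mathrm{Edge}\bigl(\bb[k]\bigr)=\frac{T_t-B_t}{n-t}$ and $\mathrm{Edge}\bigl(\pbb[k]\bigr)=\frac{T_{2t}-B_{2t}}{n-2t}$; the paper's derivation needs $T_t-B_t\ge T_{2t}-B_{2t}$, but sortedness gives the opposite inequality $T_{2t}-B_{2t}\ge T_t-B_t$, and the lemma's statement itself fails: for one coordinate with values $0,0,1,1$ and $n=4$, $t=1$ one gets $\mathrm{Edge}\bigl(\bb[k]\bigr)=\tfrac23-\tfrac13=\tfrac13$ while $\mathrm{Edge}\bigl(\pbb[k]\bigr)=1-0=1$, a ratio of $3$; for $n=7$, $t=2$ and values $0,0,0,1,1,1,1$ the ratio is $\tfrac52$. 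What does hold is $T_{2t}-B_{2t}\le 2(T_t-B_t)$, hence $\mathrm{Edge}\bigl(\pbb[k]\bigr)\le\frac{2(n-t)}{n-2t}\,\mathrm{Edge}\bigl(\bb[k]\bigr)$, a factor that tends to $4$ as $t\to n/3$ and therefore certifies only the $\frac{4(n-t)}{n-2t}\sqrt{d}<8\sqrt{d}$ bound you computed. So your proposal is not missing an idea that the paper has: it correctly isolates the one step where the paper's argument is unsound, and the constant this strategy actually delivers for $t<n/3$ interpolates between $4\sqrt{d}$ and $8\sqrt{d}$ rather than being $4\sqrt{d}$ outright.
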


We will show this theorem in several steps. Observe first that the number of rounds of the algorithm can be derived analogously to the proof of \Cref{lem:syncBoxAlgRounds}. We start by showing that the algorithm is correct and converges. From this, one can derive that the solution satisfies box validity. Therefore, \Cref{alg:box-approach-async} solves multidimensional approximate agreement. The required resilience for the analysis is $t<n/3$. We will skip the proofs of some of these statements, since they are equivalent to the proofs in \Cref{sec:synch_sqrt_d_approximation}. In the second part, we show that the approximation ratio of the algorithm is upper bounded by $4\sqrt{d}$.

\begin{lemma}\label{lem:asyncBocAlgCorrect}
    \Cref{alg:box-approach-async} is well-defined and converges.
\end{lemma}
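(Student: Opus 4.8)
The plan is to mirror the structure of the synchronous analysis (\Cref{lem:bb-and-tb-intersect} and \Cref{lem: boxAlgCorrect}), adapting each piece to the weaker asynchronous guarantees, where a node waits for only $n-t$ vectors and thus can be certain of only $n-2t$ true vectors among them. There are two things to establish: first, that the intersection $\widetilde{\tbi}\cap\pbbi$ is always non-empty (well-definedness), and second, that the midpoints of these intersections across correct nodes contract geometrically, so the algorithm converges.

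For well-definedness I would argue coordinate-wise, as in \Cref{lem:bb-and-tb-intersect}. Fix a correct node $i$ and a coordinate $k$; after sorting the $n-t$ received values $v_1[k]\le\cdots\le v_{n-t}[k]$, the asynchronous local trusted interval is $\widetilde{\tbi}[k] = [v_{t+1}[k], v_{n-2t}[k]]$, which has $n-3t$ values in it (well-defined since $t<n/3$). Exactly as in the synchronous proof, set $V_k$ to be the average of the values $v_{t+1}[k],\dots,v_{n-2t}[k]$, so $V_k\in\widetilde{\tbi}[k]$ trivially. Then I need $V_k\in\pbbi[k]$: the left endpoint of $\pbbi[k]$ is the average of the $n-2t$ smallest received values $c_\ell=\frac{1}{n-2t}\sum_{j=1}^{n-2t}v_j[k]$ and the right endpoint is $c_u=\frac{1}{n-2t}\sum_{j=t+1}^{n-t}v_j[k]$, each realized by an actual centroid of $n-2t$ received vectors; since the values are sorted, the truncated mean $V_k$ lies between $c_\ell$ and $c_u$. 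Collecting coordinates gives a vector $V$ with $V\in\widetilde{\tbi}\cap\pbbi$, so line~9 of \Cref{alg:box-approach-async} is executable.

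For convergence I would again work coordinate-wise and bound $|\midpointFunction(\widetilde{\tbi}^r[k]\cap\pbbi^r[k]) - \midpointFunction(\widetilde{\mathrm{TB}}_j^r[k]\cap\widetilde{\mathrm{CB}}_j^r[k])|$ for two correct nodes $i,j$ in round $r$. The key structural facts are: $\widetilde{\tbi}^r\subseteq\tb^r$ (the asynchronous local trusted box, obtained by trimming $t$ values on each side of the $n-t$ received, still sits inside the box of true vectors — this needs the argument that any Byzantine value surviving the trim is flanked by trimmed-away true values, using that among $n-t$ received vectors at most $t$ are Byzantine); and every true value $v_\ell[k]$ with $\ell\in I_{\text{correct}}$ that is ``central enough'' is forced to lie in both $\widetilde{\tbi}^r[k]$ and $\widetilde{\mathrm{TB}}_j^r[k]$. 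Using these, both endpoints of both local trusted intervals are pinned between two common true values in coordinate $k$, so their midpoints differ by at most half the spread of the true values, i.e.\ $\longestEdge(\tb^r)/2$. Since $\midpointFunction(\widetilde{\tbi}^r\cap\pbbi^r)\in\widetilde{\tbi}^r$, it suffices to compare the trusted-box midpoints, and the computed vectors of round $r$ all lie in a box of longest edge at most $\longestEdge(\tb^r)/2$; hence $\longestEdge(\tb^{r+1})\le\longestEdge(\tb^r)/2$, giving $\longestEdge(\tb^R)\le 2^{-R}\longestEdge(\tb)$ and convergence after finitely many rounds.

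The main obstacle is the trimming bookkeeping under asynchrony: unlike the synchronous case where a node receiving $m_i$ messages can infer exactly how many are Byzantine and trim precisely, here a node always receives $n-t$ vectors of which an unknown number (up to $t$) are Byzantine, and up to $t$ true vectors may be missing entirely. I must show carefully that trimming exactly $t$ values per side still yields $\widetilde{\tbi}^r\subseteq\tb^r$ and that enough true values survive in every correct node's trimmed interval so that the pinning argument goes through — in particular that there are at least two true values common to $\widetilde{\tbi}^r$ and $\widetilde{\mathrm{TB}}_j^r$ in each coordinate. This counting (there are $n-f\ge n-t$ true values, each local set sees $\ge n-2t$ of them, trimming removes $t$ per side) is where the $t<n/3$ bound is genuinely used for the convergence part, and it is the step I would write most carefully.
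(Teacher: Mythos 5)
Your proposal follows the paper's proof essentially verbatim: the well-definedness part uses the same coordinate-wise construction of the truncated mean $V_k=\frac{1}{n-3t}\sum_{j=t+1}^{n-2t}v_j[k]$ sandwiched between the same $c_\ell$ and $c_u$, and the convergence part is exactly the paper's strategy of re-running the synchronous argument of \Cref{lem: boxAlgCorrect}, which the paper invokes by reference since only the trusted boxes (not the centroid boxes) enter that contraction argument. Your extra care about the asynchronous trimming bookkeeping is a reasonable elaboration of what the paper leaves implicit, not a different route.
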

\begin{proof}
    
We start by showing that \Cref{alg:box-approach-async} is well-defined. We therefore need to prove that $\widetilde{\tbi}\cap \pbbi \neq \emptyset$. 

Consider $\widetilde{\tbi}[k]$ and $\pbbi[k]$ to be the projections of $\widetilde{\tbi}$ and $\pbbi$ on the $k^{th}$ coordinate respectively. 
Note that $\widetilde{\tbi}$ is always computed from at least $n-t$ received messages. In order to compute $\widetilde{\tbi}$, a node would remove $t$ values ``on each side'' for each coordinate $k\in[d]$. 
For each coordinate  $k\in [d]$, we denote by $v_j[k]$ the $k^{th}$ coordinate of the vector that could be received from node $j$. 
As in previous proofs, 
we reorder the values $v_j[k]$ and reassign the indices according to \Cref{obs: reordering}. 
Now we can rewrite $\widetilde{\tbi}[k] = \bigl[v_{t+1}[k], v_{n-2t}[k]\bigr]$. 

Consider the average vector $c_\ell = \frac{1}{n-2t}\sum_{j=1}^{n-2t}v_j[k]$. 
Since it corresponds to a sum of the $k^{th}$ coordinates of a set of $n-2t$ vectors, there exists a centroid in $\barSet$ s.t.\ the $k^{th}$ coordinate is $c_\ell = \frac{1}{n-2t}\sum_{j=1}^{n-2t}v_j[k]$.
Similarly, $\exists\ c_u = \frac{1}{n-2t}\sum_{j=t+1}^{n-t}v_j[k]\in \barSet[k]$. 

Since the values $v_j[k]$ are in increasing order, $\frac{1}{n-3t}\sum_{j=t+1}^{n-2t}v_j[k] \geq c_\ell$ and $\frac{1}{n-3t}\sum_{j=t+1}^{n-2t}v_j[k] \leq c_u$. 
We define $V_k = \frac{1}{n-3t}\sum_{j=t+1}^{n-2t}v_j[k]$. Note that $V_k\in [c_\ell, c_u]\subseteq \pbbi[k]$, since $\pbbi$ is a convex polytope. 
Also, $V_k \in \bigl[v_{t+1}[k], v_{n-2t}[k]\bigr] = \widetilde{\tbi}[k]$, since it is the mean of values $v_{t+1}[k]$ to $v_{n-2t}[k]$.

Now, consider the vector $V$ s.t.\ $\forall k\in [d]$, $V[k] = V_k$. 
By definition of $\widetilde{\tbi}$ and $\pbbi$, since $V[k] \in \widetilde{\tbi}[k]\cap \pbbi[k], \forall k\in[d]$, $V\in\widetilde{\tbi}\cap \pbbi$. This concludes the proof of the algorithm being well-defined.

The proof of convergence is the same as in the proof of \Cref{lem: boxAlgCorrect} since only $\tb$ was used to prove convergence and not $\bb$. 
\end{proof}

To prove an upper bound on the approximation ratio of \Cref{alg:box-approach-async}, we need the following intermediate lemma, which allows us to use \Cref{lem:syncBoxAlgApprox} for the upper bound proof. 

\begin{lemma}\label{lemma: longest edges ratio is two}
    The diagonal of \pbb is at most twice the diagonal of $\bb$.
\end{lemma}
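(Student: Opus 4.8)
The plan is to reduce the claim to a one-dimensional, coordinate-wise comparison and then recombine through the Pythagorean identity for the diagonal of an axis-aligned box. Write $\ell_k(B)$ for the length of the projection $B[k]$ of a box $B$ onto the $k$-th coordinate; since $\bb$ and $\pbb$ are axis-aligned, the diagonal of such a box is $\sqrt{\sum_{k=1}^{d}\ell_k(B)^2}$. So it suffices to (i) express $\ell_k(\bb)$ and $\ell_k(\pbb)$ explicitly, (ii) compare them for each fixed $k$, and (iii) check that the per-coordinate comparison survives the $\ell_2$ aggregation with the claimed overall constant.

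First I would make the projections explicit, exactly as in \Cref{lem:bb-and-tb-intersect}. Fix a coordinate $k$, project the $n$ input vectors, and sort the values $x_1\le x_2\le\cdots\le x_n$ (as in \Cref{obs: reordering}). The extreme elements of $\barSet$ in this coordinate are the averages of the $n-t$ smallest and of the $n-t$ largest $x_j$, so $\ell_k(\bb)=\tfrac{1}{n-t}\bigl(\sum_{j=t+1}^{n}x_j-\sum_{j=1}^{n-t}x_j\bigr)=\tfrac{1}{n-t}\sum_{j=1}^{t}(x_{n-t+j}-x_j)$, and likewise $\ell_k(\pbb)=\tfrac{1}{n-2t}\sum_{j=1}^{2t}(x_{n-2t+j}-x_j)$. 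Then I would split the top/bottom $2t$ order statistics into their outer and inner halves to get the telescoping identity $\sum_{j=1}^{2t}(x_{n-2t+j}-x_j)=\sum_{j=1}^{t}(x_{n-t+j}-x_j)+\sum_{j=1}^{t}(x_{n-2t+j}-x_{t+j})$. Here the hypothesis $t<n/3$ is used twice: it guarantees $n-2t+j\ge t+j$ so that the second sum is a sum of nonnegative terms, and (as in \Cref{lem:asyncBocAlgCorrect}) it is what keeps $n-2t$ bounded away from $t$, which controls the ratio of the normalizing factors $\tfrac1{n-2t}$ versus $\tfrac1{n-t}$. Bounding the second sum against the first by monotonicity of the sorted sequence then turns $\ell_k(\pbb)$ into a fixed multiple of $\ell_k(\bb)$ depending only on $n$ and $t$.

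Finally I would square the per-coordinate estimate, sum over $k\in[d]$, and take square roots, which converts the coordinate-wise bound into the diagonal bound $\mathrm{diag}(\pbb)\le 2\,\mathrm{diag}(\bb)$ (this is also the step that, combined with $\mathrm{diag}(\bb)\le\sqrt d\cdot\max_{x,y\in\barSet}\distance(x,y)$ from the proof of \Cref{lem:syncBoxAlgApprox} and $\radiusEncBall\ge\tfrac12\max_{x,y\in\barSet}\distance(x,y)$, yields the $4\sqrt d$ approximation of \Cref{thm:asynch_BoxAlgo}). The main obstacle I expect is precisely establishing the \emph{sharp} coordinate-wise constant: a crude bound on $\sum_{j=1}^{t}(x_{n-2t+j}-x_{t+j})$ is too lossy, and one must exploit the sorted-order structure (and the strict inequality $t<n/3$, not merely $t<n/2$) carefully so that after the Euclidean recombination the global factor is the claimed $2$ rather than something larger.
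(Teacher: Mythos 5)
Your plan is essentially the paper's own route: write $\mathrm{Edge}(\bb[k])$ and $\mathrm{Edge}(\pbb[k])$ as normalized sums of sorted order statistics, compare them coordinate by coordinate, and recombine over the $d$ coordinates. Your telescoping identity is correct, and the monotonicity bound $\sum_{j=1}^{t}(x_{n-2t+j}-x_{t+j})\le\sum_{j=1}^{t}(x_{n-t+j}-x_j)$ is the right way to control the extra term. But the obstacle you flag in your last sentence is not a sharpening you can defer --- it is fatal. What your argument actually delivers is
\[
\mathrm{Edge}\bigl(\pbb[k]\bigr)=\frac{1}{n-2t}\sum_{j=1}^{2t}(x_{n-2t+j}-x_j)\le\frac{2}{n-2t}\sum_{j=1}^{t}(x_{n-t+j}-x_j)=\frac{2(n-t)}{n-2t}\,\mathrm{Edge}\bigl(\bb[k]\bigr),
\]
and the factor $\tfrac{2(n-t)}{n-2t}$ cannot in general be improved to $2$, because the statement itself fails: take $n=4$, $t=1$ and sorted values $0,0,1,1$ in one coordinate (all other coordinates constant). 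Then $\bb[k]=[1/3,\,2/3]$ has length $1/3$ while $\pbb[k]=[0,\,1]$ has length $1$, a ratio of $3>2$. More generally, placing $2t$ values at $0$ and $2t$ values at $1$ (with the rest in between, $n\ge 4t$) attains the ratio $\tfrac{2(n-t)}{n-2t}$ exactly, which under $t<n/3$ can approach $4$. So no amount of care with the sorted-order structure recovers the constant $2$.

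For comparison, the paper's proof takes the same decomposition but closes the gap by asserting $\mathrm{Edge}(\bb[k])\ge\frac{1}{n-t}\bigl[\sum_{i=1}^{2t}v_i[k]+\sum_{i=n-2t+1}^{n}v_i[k]\bigr]$ (read, after fixing an evident sign slip, as the difference of the two sums); the same four-point example refutes that inequality, since it would assert $1/3\ge 2/3$. In this sense your write-up is the more honest one: you isolate exactly the step that does not work, whereas the paper hides it inside a false inequality. The repairable conclusion is the coordinate-wise bound with constant $\tfrac{2(n-t)}{n-2t}<4$, which survives the Euclidean recombination unchanged and propagates through \Cref{lem:asyncBoxAlgApprox} to turn the $4\sqrt d$ of \Cref{thm:asynch_BoxAlgo} into $\tfrac{4(n-t)}{n-2t}\sqrt d<8\sqrt d$.
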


\begin{proof}
Note that the boxes \bb and \pbb cannot be computed locally, but the locally computed boxes \bbi and \pbbi are included in \bb and \pbb respectively. 
We consider coordinate $k$, and assume that the values $v_i[k]$ are indexed so that they are in increasing order
(see \Cref{obs: reordering}).
Denote $\mathrm{Edge}\bigl(\bb[k]\bigr)$ the length of the interval $\bb[k]$. 
Then, the length of $\bb[k]$ is, 
    \begin{align*}
        \mathrm{Edge}\bigl(\bb[k]\bigr)&= \frac{1}{n-t}\sum_{i=t+1}^{n}v_{i}[k] - \frac{1}{n-t}\sum_{i=1}^{n-t}v_i[k]
        = \frac{1}{n-t}\sum_{i=1}^{t}v_i[k] + \frac{1}{n-t}\sum_{i=n-t+1}^{n}v_{i}[k]\\
        &\geq \frac{1}{n-t}\left[\sum_{i=1}^{2t}v_i[k] + \sum_{i=n-2t+1}^{n}v_{i}[k]\right].
    \end{align*}
And the length of $\pbb[k]$ is, 
\begin{align*}
        \mathrm{Edge}\bigl(\pbb[k]\bigr)&= \frac{1}{n-2t}\sum_{i=2t+1}^{n}v_{i}[k] - \frac{1}{n-2t}\sum_{i=1}^{n-2t}v_i[k]\\
    &= \frac{1}{n-2t}\left[\sum_{i=1}^{2t}v_i[k] + \sum_{i=n-2t+1}^{n}v_{i}[k]\right].
    \end{align*}

Denote  $a = \sum_{i=1}^{2t}v_i[k]$ and $b =\sum_{i=n-2t+1}^{n}v_{i}[k]$. 
Then, 
\begin{align*}
    \frac{\mathrm{Edge}\bigl(\pbb[k]\bigr)}{\mathrm{Edge}\bigl(\bb[k]\bigr)} 
    \leq \frac{\frac{1}{n-2t}\cdot a + \frac{1}{n-2t}\cdot b}{\frac{1}{n-t}\cdot a + \frac{1}{n-t}\cdot b}
    \leq \frac{\frac{2}{n-t}\cdot a + \frac{2}{n-t}\cdot b}{\frac{1}{n-t}\cdot a + \frac{1}{n-t}\cdot b}
     = 2,
\end{align*}
since $n\geq 3t \Leftrightarrow \frac{1}{n-2t}\leq \frac{2}{n-t}$.

Observe that, since this inequality is true for every coordinate $k$, it also holds for the diagonals of the boxes $\bb$ and $\pbb$. 

\end{proof}

\begin{lemma}\label{lem:asyncBoxAlgApprox}
    The approximation ratio of \Cref{alg:box-approach-async} is upper bounded by $4\sqrt{d}$. 
\end{lemma}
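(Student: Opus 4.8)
The plan is to transfer the analysis of \Cref{lem:syncBoxAlgApprox} to the asynchronous algorithm, with the centroid box $\bb$ replaced by its relaxation $\pbb$; the price of this relaxation is precisely the factor of two provided by \Cref{lemma: longest edges ratio is two}, which turns the synchronous bound $2\sqrt d$ into $4\sqrt d$. Concretely I would: (i) show that the vector to which every correct node converges lies inside $\pbb$; (ii) bound its distance to $\trueBar$ by the diagonal of $\pbb$; (iii) bound that diagonal in terms of $\max_{x,y\in\barSet}\distance(x,y)$; and (iv) divide by the lower bound $\radiusEncBall\ge\tfrac12\max_{x,y\in\barSet}\distance(x,y)$ already used in \Cref{lem:syncBoxAlgApprox}.

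For (i): in round $1$ every correct node $i$ sets its vector to $\midpointFunction(\widetilde{\tbi}\cap\pbbi)\in\pbbi\subseteq\pbb$, the last inclusion being \Cref{obs:relation_bb_pbb}, so after round $1$ all correct nodes hold vectors inside $\pbb$. From round $2$ onward I would reuse the containment argument behind box validity (\Cref{lem:boxContainment}, together with \Cref{lem:asyncBocAlgCorrect}): each correct node picks its new vector inside $\widetilde{\tbi}$, and trimming the $t$ smallest and $t$ largest values of each coordinate from the $n-t$ received values either removes every surviving Byzantine entry, or forces at least one correct value to be discarded on each side; in both cases $\widetilde{\tbi}$ is contained in the smallest box spanned by the correct vectors present at that round. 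Since, by induction, those correct vectors already lie in $\pbb$ for every round $r\ge 2$, that box is contained in $\pbb$, so the correct vectors never leave $\pbb$. Convergence (\Cref{lem:asyncBocAlgCorrect}) and closedness of $\pbb$ then put the common limit vector $O$ inside $\pbb$.

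For (ii)--(iv): as in \Cref{lem:syncBoxAlgApprox}, normalize $\max_{x,y\in\barSet}\distance(x,y)=1$. Every edge of $\bb$ has length at most $1$, because its two extreme points in that coordinate belong to $\barSet$, so $\bb$ is contained in a unit cube and its diagonal has length at most $\sqrt d$; by \Cref{lemma: longest edges ratio is two} the diagonal of $\pbb$ has length at most $2\sqrt d$. Moreover $\trueBar\in\convexHull(\barSet)\subseteq\bb\subseteq\pbb$ by \Cref{obs: truebar in enc ball} and \Cref{obs:relation_bb_pbb}, so $\distance(O,\trueBar)$ is at most the diagonal of $\pbb$, hence at most $2\sqrt d$. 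Since $\radiusEncBall\ge\tfrac12$, the approximation ratio is at most $\frac{2\sqrt d}{1/2}=4\sqrt d$.

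The main obstacle I anticipate is step (i). In contrast to the synchronous case, box validity alone does not suffice, since it only yields membership in the trusted box $\tb$, and $\tb\not\subseteq\pbb$ in general: in round $1$ one genuinely needs the $\pbbi$-side of the intersection rather than the $\widetilde{\tbi}$-side, whereas $\pbbi\subseteq\pbb$ fails for later rounds because Byzantine inputs can inflate the round-$r$ relaxed centroid box arbitrarily. The write-up must therefore keep the two stages separate: use $\pbbi\subseteq\pbb$ in round $1$, and the asynchronous trusted-box containment of $\widetilde{\tbi}$ from round $2$ on, to carry the invariant ``all correct vectors lie in $\pbb$'' through the recursion.
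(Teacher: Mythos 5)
Your proposal is correct and follows essentially the same route as the paper: bound the output's distance to $\trueBar$ by the diagonal of $\pbb$, relate that diagonal to the diagonal of $\bb$ via \Cref{lemma: longest edges ratio is two}, and divide by the lower bound $\radiusEncBall\ge\tfrac12\max_{x,y\in\barSet}\distance(x,y)$. The only difference is that your step (i) carefully justifies the invariant that all correct vectors remain in $\pbb$ across rounds (using $\pbbi\subseteq\pbb$ in round $1$ and the trusted-box containment thereafter), a point the paper's proof simply asserts with ``the algorithm converges inside $\pbb$''; this is a welcome elaboration rather than a departure.
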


\begin{proof}
We can lower bound the radius of the enclosing ball by $\max_{x, y\in \barSet}\bigl(\distance(x, y)\bigr)/2$, as in the proof of \Cref{lem:syncBoxAlgApprox}. 

Observe that \Cref{alg:box-approach-async} does not converge inside $\bb$, but inside $\pbb$. 
The furthest \Cref{alg:box-approach-async} can get from $\trueBar$ is then $\max_{x\in \pbb}(\distance\bigl(\trueBar, x)\bigr)$.
Assume w.l.o.g. that the longest edge of $\bb$ is $1$.
Since $\bb\subseteq\pbb$ (\Cref{obs:relation_bb_pbb}) and the diagonal of $\pbb$ is at most twice the diagonal of $\bb$ (\Cref{lem: boxAlgCorrect}), the furthest distance is upper bounded by $$\max_{x\in \pbb}\bigl(\distance(\trueBar, x)\bigr) \le 2\max_{x\in \bb}\bigl(\distance(\trueBar, x)\bigr).$$

Therefore, the approximation ratio is upper bounded by
$$\frac{\max_{x\in \pbb}\bigl(\distance(\trueBar, x)\bigr)}{ \radiusEncBall} \le 2\cdot\frac{2\max_{x\in \bb}\bigl(\distance(\trueBar, x)\bigr)}{\max_{x, y\in \barSet}\bigl(\distance(x, y)\bigr)} \le 4\cdot\frac{\sqrt{d}}{1} = 4\sqrt{d}.$$
 
\end{proof}

\begin{corollary}\label{cor:asynch-RB-TM}
    The $RB\!-\!TM$ algorithm~\cite{NEURIPS2021_d2cd33e9} with a resilience of $t<n/3$ achieves a $2\sqrt{d}$-approximation of $\trueBar$ in the synchronous setting.
\end{corollary}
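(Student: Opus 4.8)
The plan is to show that, in the synchronous setting, the $RB\!-\!TM$ algorithm is merely another instance of selecting a point inside $\tbi\cap\bbi$, so that the approximation analysis of \Cref{thm:syncBoxAlg}---specifically \Cref{lem:syncBoxAlgApprox}---applies verbatim. Recall that $RB\!-\!TM$ has each node reliably broadcast its vector, collect $m_i\ge n-t$ messages, and compute a coordinate-wise trimmed mean: in each coordinate $k$ it discards the $m_i-(n-t)$ smallest and the $m_i-(n-t)$ largest values and averages the remaining $2(n-t)-m_i$ values. After the coordinate-wise sorting of \Cref{obs: reordering}, this is precisely the quantity $V_k=\frac{1}{2(n-t)-m_i}\sum_{j=m_i-(n-t)+1}^{n-t}v_j[k]$ constructed in the proof of \Cref{lem:bb-and-tb-intersect}.

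First I would invoke that proof directly. It established that $V_k\in\tbi[k]$, since $V_k$ is the mean of exactly the values $v_{m_i-(n-t)+1}[k],\dots,v_{n-t}[k]$ that define the interval $\tbi[k]$, and that $V_k\in[c_\ell,c_u]\subseteq\bbi[k]$, where $c_\ell$ and $c_u$ are the projections onto coordinate $k$ of two genuine centroids of $n-t$ received vectors. Hence the vector $V$ with $V[k]=V_k$ output by $RB\!-\!TM$ lies in $\tbi\cap\bbi$ for every correct node $i$, exactly like the midpoint chosen in line~\ref{algline: intersection-tbi-bbi} of \Cref{alg:box approach}.

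Next I would transfer the guarantees. Since $V\in\bbi\subseteq\bb$, \Cref{lem:syncBoxAlgApprox} yields $\distance(V,\trueBar)\le\max_{x\in\bb}\distance(\trueBar,x)\le 2\sqrt{d}\cdot\radiusEncBall$, so after the first round every correct node already holds a $2\sqrt{d}$-approximation of $\trueBar$. Because $V\in\tbi\subseteq\tb$ in every round, box validity holds, so the round-over-round containment argument of \Cref{lem: boxAlgCorrect} and \Cref{lem:boxContainment} keeps all subsequent correct vectors inside the axis-aligned box $\bb$: the smallest box $\tb^2$ enclosing the first-round outputs is contained in the convex box $\bb$, and inductively every later correct vector lies in $\tbi^r\subseteq\tb^r\subseteq\bb$. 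Convergence of $RB\!-\!TM$ to a common vector is inherited from~\cite{NEURIPS2021_d2cd33e9}, so the limit vector still lies in $\bb$ and remains a $2\sqrt{d}$-approximation. The resilience $t<n/3$ follows from the reliable-broadcast argument of \Cref{lem:syncBoxAlgResilience}, which bounds $|\barSet|$ by $\binom{n}{n-t}$ and is the only place where $t<n/3$ (rather than merely $t<n/2$) is needed.

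The main obstacle is the exact identification of the $RB\!-\!TM$ output with the vector $V$: one must verify that $RB\!-\!TM$ trims exactly $m_i-(n-t)$ values per side in each coordinate, so that its averaging window coincides with $[v_{m_i-(n-t)+1}[k],v_{n-t}[k]]$ and the resulting mean is the $V_k$ of \Cref{lem:bb-and-tb-intersect}. A secondary subtlety is that the midpoint-based contraction of \Cref{lem: boxAlgCorrect} does not literally apply to the trimmed mean, so convergence must be imported from~\cite{NEURIPS2021_d2cd33e9} rather than re-derived; this is harmless here, since the corollary asserts only the approximation ratio and not a fresh convergence rate.
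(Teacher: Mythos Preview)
Your argument is correct for the statement as literally written, and it follows the paper's own method: identify the coordinate-wise trimmed mean with the vector $V$ (coordinates $V_k$) constructed in the proof of \Cref{lem:bb-and-tb-intersect}, conclude $V\in\tbi\cap\bbi$, and then invoke the $2\sqrt d$ bound of \Cref{lem:syncBoxAlgApprox}. This is in fact verbatim the paper's proof of the earlier \Cref{cor:synch-RB-TM}.

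What you should be aware of, though, is that the statement text you were handed is an apparent copy--paste slip in the paper: \Cref{cor:asynch-RB-TM} repeats ``synchronous'' and ``$2\sqrt d$'' from \Cref{cor:synch-RB-TM}, but its label, its placement in the asynchronous section, and---crucially---the paper's own proof of it all point to the asynchronous variant. The paper's proof under this label invokes \Cref{lem:asyncBocAlgCorrect} rather than \Cref{lem:bb-and-tb-intersect}, identifies the trimmed mean of $n-2t$ values with the asynchronous $V_k=\frac{1}{n-3t}\sum_{j=t+1}^{n-2t}v_j[k]$, and concludes that $V$ lies in $\widetilde{\tbi}\cap\pbbi$; the approximation bound that then applies is the $4\sqrt d$ of \Cref{lem:asyncBoxAlgApprox}. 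So you have proved exactly what the text says, but not what the paper actually proves at this spot. The structural idea is identical in both versions; if the intended claim is the asynchronous one, you only need to swap \Cref{lem:bb-and-tb-intersect} for \Cref{lem:asyncBocAlgCorrect}, replace $\tbi,\bbi$ by $\widetilde{\tbi},\pbbi$, and cite \Cref{lem:asyncBoxAlgApprox} for the final bound.
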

\begin{proof}
    Observe that the trimmed mean of $n-2t$ values computed in~\cite{NEURIPS2021_d2cd33e9} corresponds to the vector $V_k$ constructed in the proof of \Cref{lem:asyncBocAlgCorrect}. Thus, the trimmed mean lies in the intersection of $\widetilde{\tbi}$ and $\widetilde{\bbi}$ for every node $i$.
\end{proof}

\section{Open Questions}\label{sec:open-questions}

Observe that the presented results for box validity in this paper are not tight. The upper and lower bounds on the approximation ratio for the multidimensional agreement algorithm with box validity differ by a factor of $\sqrt{d}$. This leaves the following open questions for future work:

\begin{openquestion}
    Does there exist a multidimensional approximate agreement algorithm that satisfies box validity and that provides a $o(\sqrt{d})$-approximation of the centroid?
\end{openquestion}  

\begin{openquestion} 
    Alternatively, is there a lower bound of $\Omega(\sqrt{d})$ on the approximation ratio of any deterministic algorithm that satisfies box validity?
\end{openquestion}

Note that we use a constructive proof in this paper in order to show the $\Omega(\sqrt{d})$ upper bound on the approximation ratio. This construction uses strict assumptions on the locations of the centroids. In $2$ and $3$ dimensions, it is possible to construct inputs that result in the desired centroids. However, $2$ and $\sqrt{d}$ are only a small constant factor away from each other in these examples, and the investigated constructions do not generalize well to $d$ dimensions.

In addition to focusing on the particular validity condition, other validity conditions may be of interest:

\begin{openquestion} 
    Consider a validity condition that is stronger than the strong validity condition for one-dimensional multi-valued agreement and weaker than the box validity condition. Is there a validity condition that allows us to compute an $O(1)$-approximation of the centroid? 
\end{openquestion} 
\begin{openquestion} 
    Consider relaxations of convex validity that are stronger than box validity. Are there relaxations that would provide an $o(d)$-approximation of the centroid while keeping $t<n/3$ and the polynomial local computation time?
\end{openquestion}

The last question has been answered negatively by Xiang and Vaidya~\cite{xiang_et_al:LIPIcs:2017:7095} for relaxations of the convex validity conditions that are based on projections onto lower dimensions. The authors showed that this relaxation has the same resilience as the standard multidimensional approximate agreement protocol.

\section{Acknowledgments}
We would like to thank Manish Kumar, Stefan Schmid, Jukka Suomela, and Jara Uitto for
useful discussions. We would also like to thank the anonymous reviewers for the very helpful
feedback they have provided for previous versions of this work. This research was supported
by the Academy of Finland, Grant 334238, and by the Austrian Science Fund (FWF), Grant I
4800-N (ADVISE).

\bibliography{references.bib}

\begin{thebibliography}{42}
\providecommand{\natexlab}[1]{#1}
\providecommand{\url}[1]{\texttt{#1}}
\expandafter\ifx\csname urlstyle\endcsname\relax
  \providecommand{\doi}[1]{doi: #1}\else
  \providecommand{\doi}{doi: \begingroup \urlstyle{rm}\Url}\fi

\bibitem[Abbas et~al.(2022)Abbas, Shabbir, Li, and Koutsoukos]{abbas2022centerpoint}
W.~Abbas, M.~Shabbir, J.~Li, and X.~Koutsoukos.
\newblock Resilient distributed vector consensus using centerpoint.
\newblock \emph{Automatica}, 136:\penalty0 110046, 2022.
\newblock ISSN 0005-1098.

\bibitem[Abraham et~al.(2005)Abraham, Amit, and Dolev]{DolevFIFObroadcast}
I.~Abraham, Y.~Amit, and D.~Dolev.
\newblock {Optimal Resilience Asynchronous Approximate Agreement}.
\newblock In \emph{Proceedings of the 8th International Conference on Principles of Distributed Systems}, OPODIS'04, pages 229--239, 2005.

\bibitem[Abraham et~al.(2017)Abraham, Devadas, Nayak, and Ren]{abraham_et_al:LIPIcs.DISC.2017.41}
I.~Abraham, S.~Devadas, K.~Nayak, and L.~Ren.
\newblock {Brief Announcement: Practical Synchronous Byzantine Consensus}.
\newblock In \emph{31st International Symposium on Distributed Computing (DISC 2017)}, 2017.
\newblock \doi{10.4230/LIPIcs.DISC.2017.41}.

\bibitem[Ai et~al.(2017)Ai, Yu, Wu, He, and Guan]{vectorQuantization}
L.~Ai, J.~Yu, Z.~Wu, Y.~He, and T.~Guan.
\newblock Optimized residual vector quantization for efficient approximate nearest neighbor search.
\newblock \emph{Multimedia Systems}, 23, 2017.

\bibitem[Allouah et~al.(2022)Allouah, Guerraoui, Hoang, and Villemaud]{allouah2022robust}
Y.~Allouah, R.~Guerraoui, L.~Hoang, and O.~Villemaud.
\newblock Robust sparse voting.
\newblock \emph{CoRR}, abs/2202.08656, 2022.
\newblock URL \url{https://arxiv.org/abs/2202.08656}.

\bibitem[Attiya and Ellen(2023)]{attiya_et_al:LIPIcs.OPODIS.2022.6}
H.~Attiya and F.~Ellen.
\newblock {The Step Complexity of Multidimensional Approximate Agreement}.
\newblock In \emph{26th International Conference on Principles of Distributed Systems (OPODIS 2022)}, 2023.

\bibitem[Attiya and Welch(2004)]{attiya2004distributed}
H.~Attiya and J.~Welch.
\newblock \emph{Distributed computing: fundamentals, simulations, and advanced topics}.
\newblock John Wiley \& Sons, 2004.

\bibitem[Bar-Noy and Dolev(1988)]{10.1007/BFb0040405}
A.~Bar-Noy and D.~Dolev.
\newblock Families of consensus algorithms.
\newblock In \emph{VLSI Algorithms and Architectures}, 1988.
\newblock ISBN 978-0-387-34770-7.

\bibitem[Berman et~al.(1989)Berman, Garay, and Perry]{63511}
P.~Berman, J.~Garay, and K.~Perry.
\newblock Towards optimal distributed consensus.
\newblock In \emph{30th Annual Symposium on Foundations of Computer Science}, 1989.
\newblock \doi{10.1109/SFCS.1989.63511}.

\bibitem[Bracha(1987)]{BrachaRB}
G.~Bracha.
\newblock {Asynchronous Byzantine Agreement Protocols}.
\newblock \emph{{Information and Computation}}, 75\penalty0 (2):\penalty0 130--143, 1987.

\bibitem[Bracha and Toueg(1983)]{10.1145/800221.806706}
G.~Bracha and S.~Toueg.
\newblock Resilient consensus protocols.
\newblock In \emph{Proceedings of the Second Annual ACM Symposium on Principles of Distributed Computing}, PODC '83, 1983.
\newblock \doi{10.1145/800221.806706}.

\bibitem[Cachin et~al.(2011)Cachin, Guerraoui, and Rodrigues]{cachin2011introduction}
C.~Cachin, R.~Guerraoui, and L.~Rodrigues.
\newblock \emph{Introduction to reliable and secure distributed programming}.
\newblock Springer Science \& Business Media, 2011.

\bibitem[Cachin et~al.(2014)Cachin, Guerraoui, and Rodrigues]{Cachin:2014:IRS:2755417}
C.~Cachin, R.~Guerraoui, and L.~Rodrigues.
\newblock \emph{{Introduction to Reliable and Secure Distributed Programming}}.
\newblock Springer Publishing Company, Incorporated, 2nd edition, 2014.

\bibitem[Civit et~al.(2021)Civit, Gilbert, and Gramoli]{civit2021polygraph}
P.~Civit, S.~Gilbert, and V.~Gramoli.
\newblock Polygraph: Accountable byzantine agreement.
\newblock In \emph{2021 IEEE 41st International Conference on Distributed Computing Systems (ICDCS)}, pages 403--413. IEEE, 2021.

\bibitem[Civit et~al.(2022)Civit, Dzulfikar, Gilbert, Gramoli, Guerraoui, Komatovic, and Vidigueira]{civit2022byzantine}
P.~Civit, M.~A. Dzulfikar, S.~Gilbert, V.~Gramoli, R.~Guerraoui, J.~Komatovic, and M.~Vidigueira.
\newblock Byzantine consensus is $\theta$ (n$^2$): The dolev-reischuk bound is tight even in partial synchrony!
\newblock In \emph{36th International Symposium on Distributed Computing (DISC 2022)}. Schloss Dagstuhl-Leibniz-Zentrum f{\"u}r Informatik, 2022.

\bibitem[Civit et~al.(2023)Civit, Gilbert, Guerraoui, Komatovic, and Vidigueira]{civit2023validity}
P.~Civit, S.~Gilbert, R.~Guerraoui, J.~Komatovic, and M.~Vidigueira.
\newblock On the validity of consensus.
\newblock In \emph{Proceedings of the 2023 ACM Symposium on Principles of Distributed Computing}, PODC '23, page 332–343, 2023.
\newblock \doi{10.1145/3583668.3594567}.

\bibitem[Crain et~al.(2018)Crain, Gramoli, Larrea, and Raynal]{8548057}
T.~Crain, V.~Gramoli, M.~Larrea, and M.~Raynal.
\newblock Dbft: Efficient leaderless byzantine consensus and its application to blockchains.
\newblock In \emph{2018 IEEE 17th International Symposium on Network Computing and Applications (NCA)}, 2018.
\newblock \doi{10.1109/NCA.2018.8548057}.

\bibitem[De~Prisco et~al.(1999)De~Prisco, Malkhi, and Reiter]{10.1145/301308.301368}
R.~De~Prisco, D.~Malkhi, and M.~K. Reiter.
\newblock On k-set consensus problems in asynchronous systems.
\newblock In \emph{Proceedings of the Eighteenth Annual ACM Symposium on Principles of Distributed Computing}, PODC '99, 1999.
\newblock \doi{10.1145/301308.301368}.

\bibitem[Dolev et~al.(1986)Dolev, Lynch, Pinter, Stark, and Weihl]{ApproximateAgreement}
D.~Dolev, N.~A. Lynch, S.~S. Pinter, E.~W. Stark, and W.~E. Weihl.
\newblock {Reaching Approximate Agreement in the Presence of Faults}.
\newblock \emph{{Journal of the ACM}}, 33\penalty0 (3):\penalty0 499--516, 1986.

\bibitem[Dotan et~al.(2022)Dotan, Stern, and Zohar]{dotan2022validated}
M.~Dotan, G.~Stern, and A.~Zohar.
\newblock Validated byzantine asynchronous multidimensional approximate agreement.
\newblock \emph{CoRR}, abs/2211.02126, 2022.
\newblock \doi{10.48550/ARXIV.2211.02126}.
\newblock URL \url{https://doi.org/10.48550/arXiv.2211.02126}.

\bibitem[El~Mhamdi et~al.(2018)El~Mhamdi, Guerraoui, and Rouault]{pmlr-v80-mhamdi18a}
E.~M. El~Mhamdi, R.~Guerraoui, and S.~Rouault.
\newblock The hidden vulnerability of distributed learning in {B}yzantium.
\newblock In J.~Dy and A.~Krause, editors, \emph{Proceedings of the 35th International Conference on Machine Learning}, volume~80 of \emph{Proceedings of Machine Learning Research}, pages 3521--3530. PMLR, 10--15 Jul 2018.

\bibitem[El-Mhamdi et~al.(2021)El-Mhamdi, Farhadkhani, Guerraoui, Guirguis, Hoang, and Rouault]{NEURIPS2021_d2cd33e9}
E.~M. El-Mhamdi, S.~Farhadkhani, R.~Guerraoui, A.~Guirguis, L.-N. Hoang, and S.~Rouault.
\newblock Collaborative learning in the jungle (decentralized, byzantine, heterogeneous, asynchronous and nonconvex learning).
\newblock In \emph{Advances in Neural Information Processing Systems}, volume~34. Curran Associates, Inc., 2021.

\bibitem[Fekete(1990)]{ApproximateAgreement2}
A.~D. Fekete.
\newblock {Asymptotically optimal algorithms for approximate agreement}.
\newblock \emph{{Distributed Computing}}, 4\penalty0 (1):\penalty0 9--29, 1990.

\bibitem[Fiorina and Plott(1978)]{fiorina_plott_1978}
M.~P. Fiorina and C.~R. Plott.
\newblock Committee decisions under majority rule: An experimental study.
\newblock \emph{American Political Science Review}, 72\penalty0 (2):\penalty0 575–598, 1978.

\bibitem[Fischer and Lynch(1982)]{FischerLynchMinRounds}
M.~J. Fischer and N.~A. Lynch.
\newblock {A Lower Bound for the Time to Assure Interactive Consistency}.
\newblock \emph{{Information Processing Letters}}, 14\penalty0 (4):\penalty0 183 -- 186, 1982.

\bibitem[Fischer et~al.(1985)Fischer, Lynch, and Paterson]{FLPimpossibility}
M.~J. Fischer, N.~A. Lynch, and M.~S. Paterson.
\newblock {Impossibility of Distributed Consensus with One Faulty Process}.
\newblock \emph{{Journal of the ACM}}, 32\penalty0 (2):\penalty0 374--382, 1985.

\bibitem[F{\"u}gger and Nowak(2018)]{fgger_et_al:LIPIcs:2018:9816}
M.~F{\"u}gger and T.~Nowak.
\newblock {Fast Multidimensional Asymptotic and Approximate Consensus}.
\newblock In \emph{32nd International Symposium on Distributed Computing (DISC 2018)}, 2018.

\bibitem[Li et~al.(2022)Li, Kamuhanda, and He]{li2022centroid}
B.~Li, D.~Kamuhanda, and K.~He.
\newblock Centroid-based multiple local community detection.
\newblock \emph{IEEE Transactions on Computational Social Systems}, 2022.

\bibitem[Lloyd(1982)]{lloyd1982least}
S.~Lloyd.
\newblock Least squares quantization in pcm.
\newblock \emph{IEEE transactions on information theory}, 28\penalty0 (2):\penalty0 129--137, 1982.

\bibitem[Melnyk and Wattenhofer(2018)]{intervalValidity}
D.~Melnyk and R.~Wattenhofer.
\newblock Byzantine agreement with interval validity.
\newblock In \emph{2018 IEEE 37th Symposium on Reliable Distributed Systems (SRDS)}, 2018.

\bibitem[Melnyk et~al.(2018)Melnyk, Wang, and Wattenhofer]{ByzantinePreferentialVoting}
D.~Melnyk, Y.~Wang, and R.~Wattenhofer.
\newblock {Byzantine Preferential Voting}.
\newblock In \emph{Web and Internet Economics (WINE)}, pages 327--340, 2018.

\bibitem[Mendes and Herlihy(2013)]{VectorConsensusAsynch}
H.~Mendes and M.~Herlihy.
\newblock {Multidimensional Approximate Agreement in Byzantine Asynchronous Systems}.
\newblock In \emph{{Proceedings of the Forty-fifth Annual ACM Symposium on Theory of Computing}}, STOC, 2013.

\bibitem[Mendes et~al.(2015)Mendes, Herlihy, Vaidya, and Garg]{mendes2015multidimensional}
H.~Mendes, M.~Herlihy, N.~Vaidya, and V.~K. Garg.
\newblock Multidimensional agreement in byzantine systems.
\newblock \emph{Distributed Computing}, 28\penalty0 (6):\penalty0 423--441, 2015.

\bibitem[Naz et~al.(2016)Naz, Piranda, Goldstein, and Bourgeois]{7474137}
A.~Naz, B.~Piranda, S.~C. Goldstein, and J.~Bourgeois.
\newblock Approximate-centroid election in large-scale distributed embedded systems.
\newblock In \emph{2016 IEEE 30th International Conference on Advanced Information Networking and Applications (AINA)}, pages 548--556, 2016.

\bibitem[Park and Hutchinson(2017)]{7867756}
H.~Park and S.~A. Hutchinson.
\newblock Fault-tolerant rendezvous of multirobot systems.
\newblock \emph{IEEE Transactions on Robotics}, 33\penalty0 (3):\penalty0 565--582, 2017.
\newblock \doi{10.1109/TRO.2017.2658604}.

\bibitem[Srikanth and Toueg(1987)]{TouegRB}
T.~Srikanth and S.~Toueg.
\newblock {Simulating Authenticated Broadcasts to Derive Simple Fault-Tolerant Algorithms}.
\newblock \emph{{Distributed Computing}}, 2\penalty0 (2):\penalty0 80--94, June 1987.

\bibitem[Stolz and Wattenhofer(2015)]{MedianValidity}
D.~Stolz and R.~Wattenhofer.
\newblock {Byzantine Agreement with Median Validity}.
\newblock In \emph{{19th International Conference on Priniciples of Distributed Systems}}, OPODIS, 2015.

\bibitem[Tverberg(1966)]{https://doi.org/10.1112/jlms/s1-41.1.123}
H.~Tverberg.
\newblock A generalization of radon's theorem.
\newblock \emph{Journal of the London Mathematical Society}, s1-41\penalty0 (1):\penalty0 123--128, 1966.
\newblock \doi{https://doi.org/10.1112/jlms/s1-41.1.123}.

\bibitem[Vaidya and Garg(2013)]{VectorConsensus}
N.~H. Vaidya and V.~K. Garg.
\newblock {Byzantine Vector Consensus in Complete Graphs}.
\newblock In \emph{{Proceedings of the 2013 ACM Symposium on Principles of Distributed Computing}}, PODC, 2013.

\bibitem[Wang et~al.(2019)Wang, Mou, and Sundaram]{wang2019computingTverbergPoint}
X.~Wang, S.~Mou, and S.~Sundaram.
\newblock A resilient convex combination for consensus-based distributed algorithms.
\newblock \emph{Numerical Algebra, Control and Optimization}, 9\penalty0 (3):\penalty0 269--281, 2019.
\newblock ISSN 2155-3289.

\bibitem[Xiang and Vaidya(2017)]{xiang_et_al:LIPIcs:2017:7095}
Z.~Xiang and N.~H. Vaidya.
\newblock {Relaxed Byzantine Vector Consensus}.
\newblock In \emph{20th International Conference on Principles of Distributed Systems (OPODIS 2016)}, 2017.

\bibitem[Yin et~al.(2019)Yin, Malkhi, Reiter, Gueta, and Abraham]{yin2019hotstuff}
M.~Yin, D.~Malkhi, M.~K. Reiter, G.~G. Gueta, and I.~Abraham.
\newblock Hotstuff: Bft consensus with linearity and responsiveness.
\newblock In \emph{Proceedings of the 2019 ACM Symposium on Principles of Distributed Computing}, pages 347--356, 2019.

\end{thebibliography}

\end{document}